\def\hlinewd#1{%
  \noalign{\ifnum0=`}\fi\hrule \@height #1 \futurelet
  \reserved@a\@xhline}
\newtheorem*{rep@theorem}{\rep@title}
\newcommand{\newreptheorem}[2]{%
\newenvironment{rep#1}[1]{%
 \def\rep@title{\theoremref{##1} Restated}%
 \begin{rep@theorem}}%
 {\end{rep@theorem}}}
\newtheorem*{rep@lemma}{\rep@title}
\newcommand{\newreplemma}[2]{%
\newenvironment{rep#1}[1]{%
 \def\rep@title{\lemmaref{##1} Restated}%
 \begin{rep@lemma}}%
 {\end{rep@lemma}}}
\newtheorem*{rep@claim}{\rep@title}
\newcommand{\newrepclaim}[2]{%
\newenvironment{rep#1}[1]{%
 \def\rep@title{\claimref{##1} Restated}%
 \begin{rep@claim}}%
 {\end{rep@claim}}}
\newtheorem*{rep@problem}{\rep@title}
\newcommand{\newrepproblem}[2]{%
\newenvironment{rep#1}[1]{%
 \def\rep@title{\problemref{##1} Restated}%
 \begin{rep@problem}}%
 {\end{rep@problem}}}
\newtheorem{theorem}{Theorem}
\newtheorem{definition}{Definition}
\newtheorem{importedtheorem}[theorem]{Imported Theorem}
\newtheorem{lemma}[theorem]{Lemma}
\newtheorem{importedlemma}[theorem]{Imported Lemma}
\newtheorem{problem}[definition]{Problem}
\newcommand{\namedref}[2]{\texorpdfstring{\hyperref[#2]{#1~\ref*{#2}}}{#1~\ref*{#2}}\xspace}
\newcommand{\lemmaref}[1]{\namedref{Lemma}{lem:#1}}
\newcommand{\theoremref}[1]{\namedref{Theorem}{thm:#1}}
\newcommand{\claimref}[1]{\namedref{Claim}{clm:#1}}
\newcommand{\figureref}[1]{\namedref{Figure}{fig:#1}}
\newcommand{\equationref}[1]{\namedref{Equation}{eq:#1}}
\newcommand{\problemref}[1]{\namedref{Problem}{prob:#1}}
\newcommand{\algorithmref}[1]{\namedref{Algorithm}{alg:#1}}
\newcommand{\importedtheoremref}[1]{\namedref{Imported Theorem}{impthm:#1}}
\newcommand{\importedlemmaref}[1]{\namedref{Imported Lemma}{implem:#1}}
\newcommand{\sectionref}[1]{\namedref{Section}{sec:#1}}
\newcommand{\appendixref}[1]{\namedref{Appendix}{app:#1}}
\newcommand{\abs}[1]{\ensuremath{\vert{#1}\vert}\xspace}
\newcommand{\abslr}[1]{\ensuremath{\left\vert{#1}\right\vert}\xspace}
\newcommand{\eps}[0]{\ensuremath{\varepsilon}}
\let\epsilon\eps
\newcommand{\cA}{\ensuremath{{\mathcal A}}\xspace}
\newcommand{\cN}{\ensuremath{{\mathcal N}}\xspace}
\newcommand{\cP}{\ensuremath{{\mathcal P}}\xspace}
\newcommand{\cQ}{\ensuremath{{\mathcal Q}}\xspace}
\newcommand{\cX}{\ensuremath{{\mathcal X}}\xspace}
\newcommand{\bbN}{\ensuremath{{\mathbb N}}\xspace}
\newcommand{\bbR}{\ensuremath{{\mathbb R}}\xspace}
\newcommand{\defeq}[0]{\ensuremath{\;{\vcentcolon=}\;}\xspace}
\DeclareMathOperator*{\E}{\mathbb{E}}
\newcommand{\iid}[0]{\text{i.i.d.}\xspace}
\DeclareMathOperator{\diag}{diag}
\DeclareMathOperator{\trace}{tr}
\DeclareMathOperator{\rank}{rank}
\DeclareMathOperator*{\Var}{Var}
\newcommand{\eqdist}{\mathbin{\stackrel{\rm dist}{=}}}
\newcommand{\mat}[1]{\boldsymbol{#1}}
\renewcommand{\vec}[1]{\boldsymbol{\mathrm{#1}}}
\newcommand{\vecalt}[1]{\boldsymbol{#1}}
\newcommand{\normof}[1]{\|#1\|}
\newcommand{\bmat}[1]{\begin{bmatrix} #1 \end{bmatrix}}
\newcommand{\spmat}[1]{\left(\begin{smallmatrix} #1 \end{smallmatrix}\right)}
\newcommand{\eye}{\mat{I}\xspace}
\newcommand{\mA}{\ensuremath{\mat{A}}\xspace}
\newcommand{\mB}{\ensuremath{\mat{B}}\xspace}
\newcommand{\mC}{\ensuremath{\mat{C}}\xspace}
\newcommand{\mD}{\ensuremath{\mat{D}}\xspace}
\newcommand{\mE}{\ensuremath{\mat{E}}\xspace}
\newcommand{\mG}{\ensuremath{\mat{G}}\xspace}
\newcommand{\mI}{\ensuremath{\mat{I}}\xspace}
\newcommand{\mL}{\ensuremath{\mat{L}}\xspace}
\newcommand{\mM}{\ensuremath{\mat{M}}\xspace}
\newcommand{\mN}{\ensuremath{\mat{N}}\xspace}
\newcommand{\mQ}{\ensuremath{\mat{Q}}\xspace}
\newcommand{\mR}{\ensuremath{\mat{R}}\xspace}
\newcommand{\mS}{\ensuremath{\mat{S}}\xspace}
\newcommand{\mT}{\ensuremath{\mat{T}}\xspace}
\newcommand{\mU}{\ensuremath{\mat{U}}\xspace}
\newcommand{\mV}{\ensuremath{\mat{V}}\xspace}
\newcommand{\mW}{\ensuremath{\mat{W}}\xspace}
\newcommand{\mX}{\ensuremath{\mat{X}}\xspace}
\newcommand{\mY}{\ensuremath{\mat{Y}}\xspace}
\newcommand{\mZ}{\ensuremath{\mat{Z}}\xspace}
\newcommand{\mLambda}{\ensuremath{\mat{\Lambda}}\xspace}
\newcommand{\mSigma}{\ensuremath{\mat{\Sigma}}\xspace}
\newcommand{\va}{\ensuremath{\vec{a}}\xspace}
\newcommand{\ve}{\ensuremath{\vec{e}}\xspace}
\newcommand{\vg}{\ensuremath{\vec{g}}\xspace}
\newcommand{\vr}{\ensuremath{\vec{r}}\xspace}
\newcommand{\vs}{\ensuremath{\vec{s}}\xspace}
\newcommand{\vt}{\ensuremath{\vec{t}}\xspace}
\newcommand{\vv}{\ensuremath{\vec{v}}\xspace}
\newcommand{\vx}{\ensuremath{\vec{x}}\xspace}
\newcommand{\vz}{\ensuremath{\vec{z}}\xspace}
\newcommand{\vlambda}{\ensuremath{\vecalt{\lambda}}\xspace}
\sodef\allcapsspacing{\upshape}{0.15em}{0.65em}{0.6em}%
\colorlet{todo_background_normal}{white}
\definecolor{todo_background_dark}{RGB}{39,40,34}
\definecolor{advice_text}{RGB}{78, 12, 123}
\colorlet{advice_background}{todo_background_normal}
\definecolor{incomplete_text}{RGB}{204, 64, 84}
\colorlet{incomplete_background}{todo_background_normal}
\newcounter{question}
\newcommand{\R}{\bbR}
\newcommand{\bs}[1]{\mat{#1}}
\newcommand{\norm}[1]{\normof{#1}}
\newcommand{\tr}{\trace}
\DeclareMathOperator{\hutch}{H}
\newcommand{\hutchpp}{\text{Hutch\raisebox{0.35ex}{\relscale{0.75}++}}\xspace} 
\newcommand{\hutchppgauss}{\text{Gaussian-Hutch\raisebox{0.35ex}{\relscale{0.75}++}}\xspace} 
\newcommand{\hutchppbold}{\textbf{Hutch\raisebox{0.35ex}{\relscale{0.75}++}}\xspace}
\newcommand{\hutchppna}{\text{NA-Hutch\raisebox{0.35ex}{\relscale{0.75}++}}\xspace} 
\newcommand{\hutchppnabold}{\textbf{NA-Hutch\raisebox{0.35ex}{\relscale{0.75}++}}\xspace}
\newcommand{\tsfrac}[2]{{\textstyle\frac{#1}{#2}}}
\title{Hutch++: Optimal Stochastic Trace Estimation}
\author{
	Raphael A. Meyer \\ New York University\\ \texttt{ram900@nyu.edu}
	\and 
	Cameron Musco\\ UMass Amherst\\ \texttt{cmusco@cs.umass.edu}
	\and
	Christopher Musco\\ New York University\\ \texttt{cmusco@nyu.edu}
	\and
	David P. Woodruff\\ Carnegie Mellon University\\ \texttt{dwoodruf@cs.cmu.edu}
}
\begin{document}

\maketitle


\begin{abstract}
We study the problem of estimating the trace of a matrix \mA that can only be accessed through matrix-vector multiplication.
We introduce a new randomized algorithm, \hutchpp,  which computes a \((1 \pm \epsilon)\) approximation to \(\tr(\mA)\) for any positive semidefinite (PSD) \mA using just \(O(1/\epsilon)\) matrix-vector products.
This improves on the ubiquitous \emph{Hutchinson's estimator}, which requires \(O(1/\epsilon^2)\) matrix-vector products.
Our approach is based on a simple technique for reducing the variance of Hutchinson's estimator using a low-rank approximation step, and is easy to implement and analyze.
Moreover, we prove that, up to a logarithmic factor, the complexity of \hutchpp is optimal amongst all matrix-vector query algorithms, even when queries can be chosen adaptively.
We show that it significantly outperforms Hutchinson's method in experiments.
While our theory mainly requires \mA to be positive semidefinite, we provide generalized guarantees for general square matrices, and show empirical gains in such applications.
\end{abstract}


\section{Introduction}
\label{sec:intro}
A ubiquitous problem in numerical linear algebra is that of approximating the trace of a \(d\times d\) matrix \mA that can only be accessed via matrix-vector multiplication queries. In other words, we are given access to an oracle that can evaluate \(\mA\vx\) for any \(\vx\in \R^d\), and the goal is to return an approximation to $\tr(\mA)$ using as few queries to this oracle as possible. An exact solution can be obtained with $d$ queries because $\tr(\mA) = \sum_{i=1}^d \mA_{ii} = \sum_{i=1}^d \ve_i^T \mA\ve_i$, where $\ve_i$ denotes the $i^\text{th}$ standard basis vector. The goal is thus to develop algorithms that use far fewer than $d$ matrix-vector multiplications. 

Known as \emph{implicit} or \emph{matrix free} trace estimation, this problem arises in applications that require the trace of a matrix $\mA$, where $\mA$ is itself a transformation of some other  matrix $\mB$. For example, $\mA = \mB^q$, $\mA = \mB^{-1}$, or $\mA = \exp(\mB)$. In all of these cases, explicitly computing $\mA$  would require roughly $O(d^3)$ time, whereas multiplication with a vector $\vx$ can be implemented more quickly using iterative methods. For example, $\mB^q\vx$ can be computed in just $O(d^2)$ time for constant $q$, and for well-conditioned matrices, $\mB^{-1}\vx$ and $\exp(\mB)\vx$ can also be computed in $O(d^2)$ time using the conjugate gradient or Lanczos methods \cite{Higham:2008}. 
Implicit trace estimation is used to approximate matrix norms \cite{HanMalioutovAvron:2017,MuscoNetrapalliSidford:2018},  spectral densities \cite{LinSaadYang:2016,Cohen-SteinerKongSohler:2018,BravermanKrauthgamerKrishnan:2020}, log-determinants \cite{BoutsidisDrineasKambadur:2015,HanMalioutovShin:2015}, the Estrada index \cite{UbaruSaad:2018,WangSunMusco:2020}, eigenvalue counts in intervals \cite{Di-NapoliPolizziSaad:2016}, triangle counts in graphs \cite{Avron:2010}, and much more \cite{Chen:2016}. In these applications, we typically have that \mA is symmetric, and often positive semidefinite (PSD).

\subsection{Hutchinson's Estimator}
The most common method for implicit trace estimation is {Hutchinson's stochastic estimator} \cite{Hutchinson:1990}. This elegant randomized algorithm works as follows: let $\mG = [\vg_1, \ldots, \vg_m]\in \R^{d\times m}$ be a matrix containing i.i.d. random variables with mean $0$ and variance $1$. A simple calculation shows that $\E [\vg_i^T \mA\vg_i] = \tr(\mA)$ for each $\vg_i\in \R^d$, and $\vg_i^T \mA\vg_i$ can be computed with just one matrix-vector multiplication. 
So to approximate $\tr(\mA)$, Hutchinson's estimator returns the following average:
\begin{align}
\label{eq:hutch}
\text{Hutchinson's Estimator:} &&\hutch_m(\mA) = \frac{1}{m} \sum_{i=1}^m \vg_i^T\mA\vg_i = \frac{1}{m}\tr(\mG^T\mA\mG).
\end{align} 
Hutchinson's original work suggests using random $\pm 1$ sign vectors for $\vg_1, \ldots, \vg_m$, and an earlier paper by Girard suggests standard normal random variables \cite{Girard:1987}. Both choices perform similarly, as both random variables are sub-Gaussian. For vectors with sub-Gaussian random entries, it can be proven that, when $\mA$ is positive semidefinite, $(1-\epsilon)\tr(\mA) \leq \hutch_m(\mA) \leq (1+\epsilon) \tr(\mA)$ with probability $\ge 1-\delta$ if we use $m=O \left ( {\log(\nicefrac1\delta)/\epsilon^2}\right )$ matrix-vector multiplication queries \cite{AvronToledo:2011,Roosta-KhorasaniAscher:2015}.\footnote{For non-PSD matrices, this generalizes to $\tr(\mA) - \epsilon \|\mA\|_F \leq \hutch_m(\mA) \leq \tr(\mA) + \epsilon \|\mA\|_F$, which implies the relative error bound since when $\mA$ is PSD, $\|\mA\|_F \le \tr(\mA)$.} For constant $\delta$ (e.g., $\delta = \nicefrac1{10}$) the bound is $O(1/\epsilon^2)$.

\subsection{Our results}
Since Hutchinson's work, and the non-asymptotic analysis in \cite{AvronToledo:2011}, there has been no improvement on this $O(1/\epsilon^2)$ matrix-vector multiplication bound for trace approximation. 
Our main contribution is a quadratic improvement: we provide a new algorithm, \hutchpp, that obtains the same $(1\pm\epsilon)$ guarantee with $O(1/\epsilon)$ matrix-vector multiplication queries.
This algorithm is nearly as simple as the original Hutchinson's method, and can be implemented in just a few lines of code. 

\begin{algorithm}[H]
	\caption{\hutchpp}
	\label{alg:hutchpp-qr}
	{\bfseries input}: Matrix-vector multiplication oracle for matrix, \(\mA\in\bbR^{d \times d}\). Number of queries, \(m\). \\
	{\bfseries output}: Approximation to \(\tr(\mA)\).\\
	\vspace{-1em}
	\begin{algorithmic}[1]
		\STATE Sample \(\mS\in\bbR^{d \times \frac m3}\) and \(\mG\in\bbR^{d \times \frac m3}\) with \iid $\{+1,-1\}$ entries.
		\STATE Compute an orthonormal basis $\mQ \in \bbR^{d \times \frac m3}$ for the span of $\mA\mS$ (e.g., via QR decomposition).
		\STATE {\bfseries return} \(\hutchpp(\mA) = \trace(\mQ^T\mA\mQ) + \frac 3m \trace(\mG^T(\eye-\mQ\mQ^T)\mA(\eye-\mQ\mQ^T)\mG)\).
	\end{algorithmic}
\end{algorithm}
\noindent\hutchpp requires $m$ matrix-vector multiplications with $\mA$: $m/3$ to compute $\mA\cdot\mS$, $m/3$ to compute $\mA\cdot\mQ$, and $m/3$ to compute $\mA\cdot (\eye - \mQ\mQ^T)\mG$. It requires $O(dm^2)$ additional runtime to compute the basis $\mQ$ and the product $(\eye-\mQ\mQ^T)\mG = \mG -\mQ\mQ^T\mG$.
For concreteness, we state the method with random sign matrices, but the entries of $\mS$ and $\mG$ can be any sub-Gaussian random variables with mean $0$ and variance $1$, including e.g., standard Gaussians. Our main theorem on \hutchpp is:

\begin{theorem}
	\label{thm:intro_thm}
	If \hutchpp is implemented with $m = O({\sqrt{\log(\nicefrac1\delta)}/\eps} + \log(\nicefrac1\delta))$ matrix-vector multiplication queries, then for any PSD $\mA$, with probability $\ge 1-\delta$, the output $\hutchpp(\mA)$ satisfies:
	\begin{align*}
	(1-\epsilon)\tr(\mA) \leq \hutchpp(\mA) \leq (1+\epsilon)\tr(\mA).
	\end{align*}
\end{theorem}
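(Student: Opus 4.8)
The plan is to split the trace into two parts—the trace captured by the low-rank projection $\mQ\mQ^T$ and the trace of the residual $(\eye-\mQ\mQ^T)\mA(\eye-\mQ\mQ^T)$—and bound the error of each separately. The first part, $\trace(\mQ^T\mA\mQ)$, is computed \emph{exactly} (no randomness in the estimate given $\mQ$), so it contributes no error; what it buys us is that $\mA_{\mathrm{res}} \defeq (\eye-\mQ\mQ^T)\mA(\eye-\mQ\mQ^T)$ is PSD with a much smaller trace than $\mA$. The second part is just Hutchinson's estimator applied to $\mA_{\mathrm{res}}$ using $m/3$ queries, so the existing sub-Gaussian Hutchinson bound gives, with probability $\ge 1-\delta$, an additive error of $O\!\big(\sqrt{\log(1/\delta)/m}\big)\cdot\|\mA_{\mathrm{res}}\|_F$ (using the non-PSD form in the footnote, applied to the PSD matrix $\mA_{\mathrm{res}}$, for which $\|\mA_{\mathrm{res}}\|_F \le \trace(\mA_{\mathrm{res}})$). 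So the whole error is controlled by $\trace(\mA_{\mathrm{res}})$ (equivalently $\|\mA_{\mathrm{res}}\|_F$), and it suffices to show $\trace(\mA_{\mathrm{res}}) = O(\epsilon)\cdot\trace(\mA)$ when $\mS$ has $\Theta(1/\epsilon + \log(1/\delta))$ columns.

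To bound $\trace(\mA_{\mathrm{res}})$, the key point is that $\mQ$ spans $\mA\mS$, a random sketch of the column space of $\mA$, so projecting off $\mQ\mQ^T$ removes at least as much trace as projecting onto the top singular directions would—more precisely, one compares against the best rank-$k$ approximation. Write $\lambda_1 \ge \lambda_2 \ge \cdots \ge 0$ for the eigenvalues of $\mA$ and let $k = m/6$ or so. A deterministic observation is that for \emph{any} orthonormal $\mQ$, $\trace((\eye-\mQ\mQ^T)\mA(\eye-\mQ\mQ^T)) = \trace(\mA) - \trace(\mQ\mQ^T\mA) \le \trace(\mA) - (\text{mass of } \mA \text{ captured by } \mQ)$; the nontrivial part is showing that the randomized sketch $\mA\mS$ captures nearly as much as the optimal rank-$k$ subspace, i.e. a relative-error low-rank approximation (or at least a constant-factor one) in trace/Frobenius norm. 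This is exactly the guarantee of a randomized SVD / subspace-sketch with $O(k)$ columns; invoking such a result (or redoing the standard analysis: split the optimal top-$k$ and bottom part, use that $\mS$ restricted to the top-$k$ eigenspace is well-conditioned w.h.p. once it has $\gtrsim k + \log(1/\delta)$ columns) gives $\trace(\mA_{\mathrm{res}}) \le C\sum_{i>k}\lambda_i$ with probability $\ge 1-\delta$. Now the crucial \emph{trace-specific} trick: since $\mA$ is PSD, for any $k$ we have $\sum_{i>k}\lambda_i \le \frac{1}{?}$—more usefully, $\lambda_{k} \le \frac{1}{k}\sum_{i\le k}\lambda_i \le \frac{1}{k}\trace(\mA)$, and bounding the tail by a geometric/flat argument gives $\sum_{i>k}\lambda_i \le \frac{?}{k}\trace(\mA)$. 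The clean statement to aim for is $\min_{\rank(\mA_k)=k}\|\mA-\mA_k\|_F \le \frac{1}{2\sqrt k}\trace(\mA)$ for PSD $\mA$ (since $\|\mA-\mA_k\|_F = \sqrt{\sum_{i>k}\lambda_i^2}$ and $\lambda_{k+1}\le \trace(\mA)/k$, so $\sum_{i>k}\lambda_i^2 \le \lambda_{k+1}\sum_{i>k}\lambda_i \le \trace(\mA)^2/k$). Choosing $k = \Theta(1/\epsilon)$ then makes the residual Frobenius norm $O(\epsilon)\trace(\mA)$.

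Putting the pieces together: with $m = \Theta(1/\epsilon + \log(1/\delta))$ columns for $\mS$ (so $k=m/6 = \Theta(1/\epsilon)$ after absorbing the $\log(1/\delta)$ additive term), the low-rank step guarantees $\|\mA_{\mathrm{res}}\|_F \le \trace(\mA_{\mathrm{res}}) \le C\epsilon\,\trace(\mA)$ with probability $\ge 1-\delta/2$; conditioned on $\mQ$, Hutchinson on $\mA_{\mathrm{res}}$ with $m/3$ queries gives additive error $\le \epsilon\,\trace(\mA)$ (after rescaling constants) with probability $\ge 1-\delta/2$; a union bound finishes it. The place where I expect to do the most careful work is the randomized low-rank approximation bound—specifically verifying that a sign/sub-Gaussian sketch $\mA\mS$ with only $O(k + \log(1/\delta))$ columns yields a near-optimal (constant-factor, or $1+$const relative-error) rank-$k$ approximation in Frobenius norm with the stated failure probability; the PSD trace manipulation in the last paragraph and the Hutchinson invocation are both routine once that is in hand. (One subtlety worth being explicit about: $\mS$ and $\mG$ must be independent, and the Hutchinson bound is applied \emph{conditionally} on $\mS$, hence on $\mQ$, which is why $\mA_{\mathrm{res}}$ can be treated as a fixed PSD matrix in that step.)
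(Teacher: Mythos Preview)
Your overall structure matches the paper's proof: decompose $\tr(\mA)$ into the exactly-computed $\tr(\mQ^T\mA\mQ)$ plus Hutchinson on the residual, invoke a randomized low-rank approximation bound to control $\|\mA_{\mathrm{res}}\|_F$ in terms of $\|\mA-\mA_k\|_F$, and use the key PSD inequality $\|\mA-\mA_k\|_F \le \tfrac{1}{\sqrt{k}}\tr(\mA)$ (this is exactly the paper's \lemmaref{low-rank-frob-trace}, which you rederive). However, there is a genuine gap in your final assembly.

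You claim that with $k=\Theta(1/\epsilon)$ the residual Frobenius norm is $O(\epsilon)\tr(\mA)$, and separately that $\tr(\mA_{\mathrm{res}})\le C\epsilon\,\tr(\mA)$. Both are false. From $\|\mA-\mA_k\|_F\le \tfrac{1}{\sqrt{k}}\tr(\mA)$ with $k=\Theta(1/\epsilon)$ you only get $\|\mA_{\mathrm{res}}\|_F = O(\sqrt{\epsilon})\,\tr(\mA)$, not $O(\epsilon)$. And the trace bound $\tr(\mA_{\mathrm{res}})\le C\epsilon\,\tr(\mA)$ cannot hold for $k=O(1/\epsilon)$ in general: take $\mA=\mI_d$, where projecting off any $k$-dimensional subspace leaves residual trace $d-k$, a constant fraction of $\tr(\mA)=d$. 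Randomized-SVD results give Frobenius (Schatten-2) guarantees, not Schatten-1 guarantees, and your attempted tail bound $\sum_{i>k}\lambda_i \le \tfrac{C}{k}\tr(\mA)$ is simply false for flat spectra.

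The fix, and the point you are missing, is that the two square-root factors \emph{multiply}. Hutchinson with $\ell$ queries gives error $O\bigl(\sqrt{\log(1/\delta)/\ell}\bigr)\,\|\mA_{\mathrm{res}}\|_F$, and the low-rank step gives $\|\mA_{\mathrm{res}}\|_F \le O(1/\sqrt{k})\,\tr(\mA)$, so the total error is $O\bigl(\sqrt{\log(1/\delta)/(k\ell)}\bigr)\,\tr(\mA)$. One then sets $k=\ell=\Theta\bigl(\sqrt{\log(1/\delta)}/\epsilon + \log(1/\delta)\bigr)$ so that $k\ell \gtrsim \log(1/\delta)/\epsilon^2$; this is precisely the paper's \theoremref{hutchpp} and how the stated query complexity arises. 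Your parameter choice $m=\Theta(1/\epsilon+\log(1/\delta))$ would, once the arithmetic is corrected, yield error $\Theta\bigl(\epsilon\sqrt{\log(1/\delta)}\bigr)\tr(\mA)$, off by the $\sqrt{\log(1/\delta)}$ factor.
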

\hutchpp can be viewed as a natural \emph{variance reduced} version of Hutchinson's estimator. The method starts by computing an orthonormal span $\mQ \in \R^{d\times \frac m3}$ by running a single iteration of power method with a random start matrix $\mS$.  $\mQ$ coarsely approximates the span of $\mA$'s top eigenvectors.  Then we separate $\mA$ into its projection onto the subspace spanned by $\mQ$, and onto that subspace's orthogonal compliment, writing $\tr(\mA) = \tr(\mQ\mQ^T\mA\mQ\mQ^T) + \tr\left((\eye - \mQ\mQ^T)\mA(\eye - \mQ\mQ^T)\right)$. By the cyclic property of the trace, the first term is equal to $\tr(\mQ^T\mA\mQ)$, which is computed \emph{exactly} by \hutchpp with $\frac m3$ matrix-vector multiplications. The second term is approximated using Hutchinson's estimator with the random vectors in $\mG$. 

Thus, the error in estimating $\tr(\mA)$ is entirely due to approximating this second term. The key observation is that the variance when estimating this term is much lower than when estimating $\tr(\mA)$ directly. Specifically, it is proportional to $\|(\eye - \mQ\mQ^T)\mA(\eye - \mQ\mQ^T)\|_F^2$, which, using 
%
standard tools from randomized linear algebra \cite{CohenElderMusco:2015,Woodruff:2014}, we can show is bounded by $\epsilon \tr(\mA)^2$ with good probability when $m = O(1/\epsilon)$.
This yields our improvement over Hutchinson's method applied directly to $\mA$, which has variance bounded by  $\tr(\mA)^2$.
%
The full proof of \theoremref{intro_thm} is in  \sectionref{upper_bound_proof}.


\algorithmref{hutchpp-qr} is \emph{adaptive}: it multiplies $\mA$ by a sequence of query vectors $\vr_1, \ldots, \vr_m$, where later queries depend on earlier ones. In contrast, Hutchinson's method is \emph{non-adaptive}: $\vr_1, \ldots, \vr_m$ are chosen in advance, before computing any of the products $\mA\vr_1, \ldots, \mA\vr_m$. In addition to \algorithmref{hutchpp-qr}, we give a non-adaptive variant of \hutchpp that obtains the same $ O(1/\epsilon)$ bound. 
We complement these results with a nearly matching lower bound, proven in \sectionref{lower_bound}. Specifically, via a reduction from the {Gap-Hamming problem} from communication complexity, we show that any matrix-vector query algorithm whose queries have bounded bit complexity requires $m = \Omega\left(\frac{1}{\epsilon \log(1/\epsilon)}\right)$ queries to estimate the trace of a PSD matrix up to a $(1\pm \epsilon)$ multiplicative approximation. We also prove a tight $m = \Omega\left(\frac{1}{\epsilon}\right)$ lower bound for non-adaptive algorithms in the real RAM model of computation.

Finally, we provide a generalization of our upper bound to non-PSD matrices. We also provide an analysis of the variance the \hutchpp estimator, which compliments the high-probability bound of  \theoremref{intro_thm}. The variance analysis involves explicit constants, which may be useful to practioners.

\medskip\noindent\textbf{Empirical Results.}
In \sectionref{experiments} we present experimental results on synthetic and real-world matrices, including applications of trace estimation to approximating log determinants, the graph Estrada index, and the number of triangles in a graph.
We demonstrate that \hutchpp improves substantially on Hutchinson's estimator, and on related estimators based on approximating the top eigenvalues of $\mA$.
While our relative error bounds only apply to PSD matrices, \hutchpp can be used unmodified on general square matrices, and we experimentally confirm that it still outperforms Hutchinson's in this case.
We note that \hutchpp is simple to implement and essentially parameter free: the only choice needed is the number of matrix-vector multiplication queries $m$. 

\subsection{Prior Work}
\textbf{Upper bounds.}
A nearly tight non-asymptotic analysis of Hutchinson's estimator for positive semidefinite matrices was given by Avron and Toledo using an approach based on reducing to Johnson-Lindenstrauss random projection \cite{AvronToledo:2011,DasguptaGupta:2003,Achlioptas:2003}. A slightly tighter approach from \cite{Roosta-KhorasaniAscher:2015} obtains a $(1\pm\epsilon)$ multiplicative error bound with $m= O(1/\epsilon^2)$ matrix-vector multiplication queries. This bound is what we improve on with \hutchpp. A more in depth discussion of different variations on Hutchinson's method and existing error bounds can be found in survey of \cite{MartinssonTropp:2020}. 

A number of papers suggest variance reduction schemes for Hutchinson's estimator. Some take advantage of sparsity structure in $\mA$ \cite{TangSaad:2011,StathopoulosLaeuchliOrginos:2013} and others use a ``decomposition'' approach similar to \hutchpp
\cite{AdamsPenningtonJohnson:2018}. 
Most related to our work are two papers which, like \hutchpp, perform the decomposition by projecting onto some $\mQ$ that approximately spans $\mA$'s top eigenspace \cite{GambhirStathopoulosOrginos:2017,Lin:2017}. The justification is that this method should perform much better than Hutchinson's when $\mA$ is close to low-rank, because $\tr(\mQ^T\mA\mQ)$ will capture most of $\mA$'s trace.
Our contribution is an analysis of this approach which 1) improves on Hutchinson's \emph{even when $\mA$ is far from low-rank} and 2) shows that a very coarse approximation to the top eigenvectors suffices (computed using one iteration of the power method). 
%
%
Finally, we note two papers which directly use the approximation $\tr(\mA) \approx \tr(\mQ^T\mA\mQ)$, where $\mQ$ is computed with a  randomized SVD method \cite{SaibabaAlexanderianIpsen:2017,Hanyu-Li:2020}. Of course, this approach works best for nearly-low rank matrices. 

\vspace{.5em}
\medskip\noindent\textbf{Lower bounds.}
Our lower bounds extend a recent line of work on lower bounds for linear algebra problems in the ``matrix-vector query model'' \cite{SimchowitzEl-AlaouiRecht:2018, SunWoodruffYang:2019,BravermanHazanSimchowitz:2020}. 
 \cite{WimmerWuZhang:2014} proves a lower bound of $\Omega(1/\epsilon^2)$ queries for PSD trace approximation in an alternative model that allows for adaptive ``quadratic form'' queries: $\vr_1^T \mA \vr_1, \ldots, \vr_m^T \mA \vr_m$. This model captures Hutchinson's estimator, but not \hutchpp, which is why we are able to obtain an upper bound of $O(1/\epsilon)$ queries.




\section{Preliminaries}
\textbf{Notation.} 
For $\va \in \R^d$, $\|\va\|_2 = (\sum_{i=1}^d a_i^2)^{1/2}$ denotes the $\ell_2$ norm and $\|\va\|_1 = \sum_{i=1}^d |a_i|$ denotes the $\ell_1$ norm. For $\mA\in \R^{n\times d}$, $\|\mA\|_F = (\sum_{i=1}^n\sum_{j=1}^d \mA_{ij}^2)^{1/2}$ denotes the Frobenius norm. For square $\mA \in \R^{d\times d}$, $\tr(\mA) = \sum_{i=1}^d \mA_{ii}$ denotes the trace. 
Our main results on trace approximation are proven for symmetric positive semidefinite (PSD) matrices, which are the focus of many applications. Any symmetric $\mA\in\R^{d\times d}$ has eigendecomposition $\mA = \mV\mLambda \mV^T$, where $\mV \in \R^{d\times d}$ is orthogonal and $\mLambda$ is a real-valued diagonal matrix.
We let $\vlambda = \diag(\mLambda)$ be a vector containing $\mA$'s eigenvalues in descending order: $\lambda_1 \geq \lambda_2 \geq \ldots \geq \lambda_d$. 
When $\mA$ is PSD, $\lambda_i \geq 0$ for all $i$. 
We use the identities $\tr(\mA) = \norm{\vlambda}_1$ and $\|\mA\|_F = \norm{\vlambda}_2$. We let $\mA_k = \arg\min_{\mB, \rank(\mB) = k} \|\mA - \mB\|_F$ denote the optimal $k$-rank approximation to $\mA$. For a PSD matrix $\mA$, $\mA_k = \mV_k\mLambda_k \mV_k^T$, where $\mV_k\in \R^{d\times k}$ contains the first $k$ columns of $\mV$ and $\mLambda_k$ is the $k\times k$ top left submatrix of $\mLambda$. 

We state a few results for non-PSD matrices which depend on the nuclear norm. Consider a general square matrix $\mA\in\R^{d\times d}$ with singular value decomposition $\mA = \mU\mSigma \mV^T$, where $\mV,\mU \in \R^{d\times d}$ is orthogonal and $\mSigma$ is a positive diagonal matrix containing $\mA$'s singular values, $\sigma_1, \ldots, \sigma_d$. The nuclear norm $\|\mA\|_*$ is equal to $\|\mA\|_* = \sum_{i=1}^d \sigma_i$. For PSD $A$, $\|\mA\|_* = \tr(\mA)$. 

\medskip \noindent\textbf{Hutchinson's Analysis.} We require a standard bound on the accuracy of Hutchinson's estimator:
\begin{lemma}
	\label{lem:hutch-frob}
	Let \(\mA\in\bbR^{d \times d}\), \(\delta\in(0, \nicefrac12]\), $\ell\in\bbN$. Let  $\hutch_\ell(\mA)$ be the  \(\ell\)-query Hutchinson estimator defined in \eqref{eq:hutch}, implemented with mean 0, i.i.d. sub-Gaussian random variables with constant sub-Gaussian parameter. For fixed constants $c,C$, if $\ell > c\log(\nicefrac1\delta)$, then with probability \( \ge 1-\delta\), 
\[
\abslr{\hutch_\ell(\mA) - \trace(\mA)} \leq C\sqrt{\frac{\log(\nicefrac1\delta)}{\ell}} \normof\mA_F.
\]
So, if $\ell = O\left(\frac{\log(\nicefrac1\delta)}{\epsilon^2}\right)$ then, 
with probability \(\ge 1-\delta\),
$\abslr{\hutch_\ell(\mA) - \trace(\mA)} \leq \epsilon \normof\mA_F$.
\end{lemma}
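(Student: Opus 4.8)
The plan is to reduce the statement to the Hanson--Wright inequality for quadratic forms of random vectors with independent sub-Gaussian coordinates. First I would reduce to the symmetric case: since $\vg^T\mA\vg = \vg^T\bigl(\tfrac12(\mA+\mA^T)\bigr)\vg$ for every $\vg$, replacing $\mA$ by its symmetrization $\mA_s \defeq \tfrac12(\mA+\mA^T)$ changes neither $\hutch_\ell(\mA)$ nor $\tr(\mA)$, and $\|\mA_s\|_F\le\|\mA\|_F$, so it suffices to prove the bound for symmetric $\mA$. Next I would rewrite the estimator as a single quadratic form: stacking the i.i.d.\ columns $\vg_1,\dots,\vg_\ell$ of $\mG$ into one vector $\vh\in\R^{d\ell}$ with i.i.d.\ mean-$0$, variance-$1$, constant-sub-Gaussian-parameter entries,
\[
\ell\cdot\hutch_\ell(\mA) = \tr(\mG^T\mA\mG) = \sum_{i=1}^\ell \vg_i^T\mA\vg_i = \vh^T\mB\vh, \qquad \mB \defeq \mat{I}_\ell\otimes\mA,
\]
which is block diagonal with $\ell$ copies of $\mA$. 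A one-line computation gives $\E[\vh^T\mB\vh] = \tr(\mB) = \ell\,\tr(\mA)$, while $\|\mB\|_F = \sqrt{\ell}\,\|\mA\|_F$ and $\|\mB\|_2 = \|\mA\|_2 \le \|\mA\|_F$.

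Then I would invoke Hanson--Wright: for an absolute constant $c_0>0$ depending only on the sub-Gaussian parameter,
\[
\pr\!\left[\,\bigl|\vh^T\mB\vh - \E[\vh^T\mB\vh]\bigr| > t\,\right] \le 2\exp\!\left(-c_0\min\!\left(\tfrac{t^2}{\|\mB\|_F^2},\ \tfrac{t}{\|\mB\|_2}\right)\right), \qquad t\ge0.
\]
Setting $t = C\sqrt{\ell\log(\nicefrac1\delta)}\,\|\mA\|_F$ for a constant $C$ to be fixed, the Frobenius term equals $c_0C^2\log(\nicefrac1\delta)$ and the spectral term is $t/\|\mB\|_2 \ge C\sqrt{\ell\log(\nicefrac1\delta)}$, which is at least $c_0C^2\log(\nicefrac1\delta)$ precisely when $\ell \ge c_0^2C^2\log(\nicefrac1\delta)$. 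This is the single place the hypothesis $\ell > c\log(\nicefrac1\delta)$ is used: it forces the \emph{Frobenius}, not the spectral, term to be the minimum. Hence the exponent is $c_0C^2\log(\nicefrac1\delta)$, and choosing $C$ large enough that $2\exp(-c_0C^2\log(\nicefrac1\delta))\le\delta$ for all $\delta\in(0,\nicefrac12]$ bounds the failure probability by $\delta$. On the complementary event, dividing by $\ell$ yields $\abslr{\hutch_\ell(\mA)-\tr(\mA)} \le C\sqrt{\log(\nicefrac1\delta)/\ell}\,\|\mA\|_F$, which is the claim; the last sentence of the lemma then follows by taking $\ell = \Theta(\log(\nicefrac1\delta)/\epsilon^2)$ large enough.

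The main obstacle is simply having Hanson--Wright available with the correct split into $\|\mB\|_F$ and $\|\mB\|_2$; everything else is bookkeeping, and the role of $\ell > c\log(\nicefrac1\delta)$ is only to keep the spectral term from becoming the bottleneck. For a self-contained argument I would instead split $\vg^T\mA\vg-\tr(\mA)$ into its diagonal part $\sum_j\mA_{jj}(g_j^2-1)$ --- a sum of independent sub-exponential variables controlled by Bernstein's inequality with variance proxy $\|\diag(\mA)\|_2^2\le\|\mA\|_F^2$ --- and its off-diagonal part $\sum_{j\ne k}\mA_{jk}g_jg_k$, handled by decoupling plus a moment (or Gaussian comparison) bound; for the Rademacher entries used in \algorithmref{hutchpp-qr} the diagonal part vanishes identically, which streamlines this route. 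Alternatively one could follow Avron--Toledo and reduce to a Johnson--Lindenstrauss-type moment estimate, but the Hanson--Wright route is the most direct.
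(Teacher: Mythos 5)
Your proposal is correct and follows essentially the same route as the paper: both lift $\ell\cdot\hutch_\ell(\mA)$ to a single quadratic form $\vh^T\mB\vh$ over the block-diagonal matrix $\mB=\mI_\ell\otimes\mA$, apply Hanson--Wright, compute $\|\mB\|_F=\sqrt\ell\,\|\mA\|_F$ and $\|\mB\|_2=\|\mA\|_2$, and then use $\ell\gtrsim\log(\nicefrac1\delta)$ together with $\|\mA\|_F\ge\|\mA\|_2$ to ensure the Frobenius term is the binding one in the min. Your preliminary symmetrization step is harmless but unnecessary (Hanson--Wright as stated already applies to arbitrary square $\mA$), and the Bernstein-plus-decoupling alternative is a sound sketch but not what the paper does.
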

We refer the reader to \cite{rudelson2013hanson} for a formal definition of sub-Gaussian random variables: both normal $\cN(0,1)$ random variables and $\pm 1$ random variables are sub-Gaussian with constant parameter. \lemmaref{hutch-frob} is proven in  \appendixref{hutch-frob} for completeness. It is slightly more general than prior work \cite{Roosta-KhorasaniAscher:2015} in that it applies to non-PSD, and even asymmetric matrices, which will be important in the analysis of our non-adaptive algorithm. A similar result was recently shown in \cite{CortinovisKressner:2020}. 


\section{Complexity Analysis}
\label{sec:upper_bound_proof}

We start by providing the technical intuition behind \hutchpp. First note that, for a PSD matrix with eigenvalues $\vlambda$, $\|\mA\|_F  \leq  \tr(\mA)$, so \lemmaref{hutch-frob} immediately implies that Hutchinson's estimator obtains a relative error guarantee with $O(1/{\eps^2})$ queries. However, this bound is only tight when $\|\vlambda\|_2 \approx \|\vlambda\|_1$, i.e., when $\mA$ has \emph{significant mass concentrated on just a small number of eigenvalues.} 

\hutchpp simply eliminates this possibility by approximately projecting off \mA's large eigenvalues using a projection $\mQ\mQ^T$. By doing so, it only needs to compute a stochastic estimate for the trace of $(\eye - \mQ\mQ^T)\mA(\eye - \mQ\mQ^T)$. The error of this estimate is proportional to $\|(\eye - \mQ\mQ^T)\mA(\eye - \mQ\mQ^T)\|_F$, which we show is \emph{always much smaller than $\tr(\mA)$}. 
In particular, suppose that $\mQ = \mV_k$ exactly spanned the top $k$ eigenvectors \mA and thus $(\eye - \mQ\mQ^T)\mA(\eye - \mQ\mQ^T) = \mA - \mA_k$. Then we have:

\begin{lemma}
\label{lem:low-rank-frob-trace}
For any PSD matrix \mA,
\(
	\normof{\mA-\mA_k}_F \leq \frac1{\sqrt k} \trace(\mA).
\)
\end{lemma}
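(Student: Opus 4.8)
The plan is to work entirely at the level of eigenvalues. Write the eigenvalues of the PSD matrix $\mA$ in descending order as $\lambda_1 \geq \lambda_2 \geq \cdots \geq \lambda_d \geq 0$. Since $\mA_k$ is the optimal rank-$k$ approximation, the matrix $\mA - \mA_k$ has eigenvalues $\lambda_{k+1}, \ldots, \lambda_d$ (and zeros), so $\normof{\mA - \mA_k}_F^2 = \sum_{i=k+1}^d \lambda_i^2$, while $\trace(\mA) = \sum_{i=1}^d \lambda_i$. Thus the claim reduces to the purely scalar inequality $\sum_{i>k}\lambda_i^2 \leq \frac1k\bigl(\sum_{i=1}^d\lambda_i\bigr)^2$.

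The key observation is a bound on the $(k+1)$-st eigenvalue: because the $\lambda_i$ are sorted in descending order, $\lambda_{k+1} \leq \lambda_j$ for every $j \in \{1,\ldots,k\}$, and averaging these $k$ inequalities gives $\lambda_{k+1} \leq \frac1k\sum_{j=1}^k \lambda_j \leq \frac1k \trace(\mA)$, where the last step uses nonnegativity of all the $\lambda_i$. Now I would bound the tail sum of squares by pulling out the largest factor: for each $i > k$, $\lambda_i^2 \leq \lambda_{k+1}\lambda_i$ (again using $\lambda_i \leq \lambda_{k+1}$ and $\lambda_i \geq 0$), hence
\[
\sum_{i>k}\lambda_i^2 \;\leq\; \lambda_{k+1}\sum_{i>k}\lambda_i \;\leq\; \frac{\trace(\mA)}{k}\cdot \trace(\mA) \;=\; \frac{\trace(\mA)^2}{k}.
\]
Taking square roots yields $\normof{\mA-\mA_k}_F \leq \frac1{\sqrt k}\trace(\mA)$, as desired.

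There is essentially no obstacle here — the whole argument is a two-line estimate once one reduces to eigenvalues, which is justified by the spectral theorem for symmetric matrices recalled in the Preliminaries. The only point worth stating carefully is that both uses of nonnegativity of the eigenvalues (to drop the terms $\lambda_i$ for $i>k$ from $\trace(\mA)$, and to go from $\lambda_i^2$ to $\lambda_{k+1}\lambda_i$) rely on $\mA$ being PSD; the inequality is genuinely false for general symmetric matrices, so the PSD hypothesis cannot be removed.
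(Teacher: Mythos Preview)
Your proof is correct and follows essentially the same approach as the paper: bound $\lambda_{k+1}$ by $\tfrac{1}{k}\trace(\mA)$ via the ordering of eigenvalues, then use $\sum_{i>k}\lambda_i^2 \le \lambda_{k+1}\sum_{i>k}\lambda_i \le \tfrac{1}{k}\trace(\mA)^2$. The paper's proof is the same chain of inequalities, just written more tersely.
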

\begin{proof}
We have \(\lambda_{k+1} \leq \frac1k \sum_{i=1}^k \lambda_i \leq \frac1k \trace(\mA)\), so:
\[
	\normof{\mA-\mA_k}_F^2
	= \sum_{i=k+1}^d \lambda_i^2
	\leq \lambda_{k+1} \sum_{i=k+1}^d \lambda_i
	\leq \frac1k \trace(\mA) \sum_{i=k+1}^d \lambda_i
	\leq \frac1k \trace(\mA)^2.
	\qedhere
\]
\end{proof}
The above analysis can be tightened by a factor of two via Lemma 7 in \cite{DBLP:conf/stoc/GilbertSTV07}.

\lemmaref{low-rank-frob-trace} immediately suggests the possibility of an algorithm with $O(1/\epsilon)$ query complexity: Set $k = O(1/\epsilon)$ and split $\tr(\mA) = \tr(\mA_k) + \tr(\mA - \mA_k)$. The first term can be computed exactly with $O(1/\epsilon)$ matrix-vector multiplication queries if $\mV_k$ is known, since $\tr(\mA_k)= \tr(\mV_k^T\mA\mV_k)$. By \lemmaref{low-rank-frob-trace} combined with \lemmaref{hutch-frob}, the second can be estimated to error $\pm\epsilon \tr(\mA)$ using just $O(1/\epsilon)$ queries instead of $O(1/\epsilon^2)$. 
Of course, we can't compute $\mV_k$ exactly with a small number of matrix-vector multiplication queries, but this is easily resolved by using an approximate projection. Using standard tools from randomized linear algebra, $O(k)$ queries suffices to find a $\mQ$ with $\norm{(\eye - \mQ\mQ^T)\mA(\eye - \mQ\mQ^T)}_F \leq O(\norm{\mA-\mA_k}_F)$, which is all that is needed for a $O(1/\epsilon)$ query result.

Concretely, we use \lemmaref{low-rank-frob-trace} to prove the following general theorem, from which \theoremref{intro_thm} and our non-adaptive algorithmic result will follow as direct corollaries.

\begin{theorem}
\label{thm:hutchpp}
Let \(\mA\in\bbR^{d \times d}\) be PSD, \(\delta\in(0,\nicefrac12)\), \(\ell\in\bbN\), \(k\in\bbN\). Let $\tilde \mA$ and $\bs{\Delta}$ be any matrices with:
\begin{align*}
\tr(\mA) &= \tr(\tilde \mA) + \tr(\bs{\Delta}) & &\text{and} & \|\bs{\Delta}\|_F \leq 2\|\mA - \mA_k\|_F.
\end{align*}
For fixed constants $c,C$, if $\ell > c\log(\nicefrac1\delta)$, then with probability \(1-\delta\), $Z = \left[\tr(\tilde \mA) +\hutch_\ell(\bs{\Delta})\right]$ satisfies:
\begin{align*}
	\Bigl\lvert Z - \trace(\mA)\Bigr\rvert \leq 2C\sqrt{\tsfrac{\log(\nicefrac1\delta)}{k\ell}} \cdot \trace(\mA).
\end{align*}
In particular, if \(k = \ell = O\left(\tsfrac{\sqrt{\log(\nicefrac1\delta)}}{\eps} + \log(\nicefrac1\delta)\right)\), $Z$  is a \((1\pm \epsilon)\) error approximation to \(\tr(\mA)\).
\end{theorem}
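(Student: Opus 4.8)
The plan is to combine the two lemmas already in hand: \lemmaref{hutch-frob} controls the error of Hutchinson's estimator in terms of the Frobenius norm of its argument, and \lemmaref{low-rank-frob-trace} bounds $\|\mA - \mA_k\|_F$ by $\tfrac1{\sqrt k}\tr(\mA)$. Since $Z - \tr(\mA) = \tr(\tilde\mA) + \hutch_\ell(\bs\Delta) - \tr(\mA) = \hutch_\ell(\bs\Delta) - \tr(\bs\Delta)$ (using the hypothesis $\tr(\mA) = \tr(\tilde\mA) + \tr(\bs\Delta)$), the whole error reduces to the accuracy of Hutchinson's estimator applied to $\bs\Delta$ alone. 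Note $\bs\Delta$ need not be symmetric, which is why we invoke the non-PSD/asymmetric version of \lemmaref{hutch-frob}.

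First I would write $\bigl\lvert Z - \tr(\mA)\bigr\rvert = \bigl\lvert \hutch_\ell(\bs\Delta) - \tr(\bs\Delta)\bigr\rvert$. Then, since $\ell > c\log(\nicefrac1\delta)$, apply \lemmaref{hutch-frob} to $\bs\Delta$ to get, with probability $\ge 1-\delta$,
\[
\bigl\lvert \hutch_\ell(\bs\Delta) - \tr(\bs\Delta)\bigr\rvert \le C\sqrt{\tfrac{\log(\nicefrac1\delta)}{\ell}}\,\|\bs\Delta\|_F.
\]
Next, chain the two bounds on $\|\bs\Delta\|_F$: by hypothesis $\|\bs\Delta\|_F \le 2\|\mA - \mA_k\|_F$, and by \lemmaref{low-rank-frob-trace} this is at most $\tfrac{2}{\sqrt k}\tr(\mA)$. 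Substituting gives $\bigl\lvert Z - \tr(\mA)\bigr\rvert \le 2C\sqrt{\tfrac{\log(\nicefrac1\delta)}{k\ell}}\,\tr(\mA)$, which is the claimed bound.

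For the ``in particular'' clause I would set $k = \ell$ and choose $k = \ell = \Theta\!\left(\tfrac{\sqrt{\log(\nicefrac1\delta)}}{\eps} + \log(\nicefrac1\delta)\right)$ so that simultaneously $\ell \ge c\log(\nicefrac1\delta)$ (the hypothesis of \lemmaref{hutch-frob}, guaranteed by the additive $\log(\nicefrac1\delta)$ term with a large enough constant) and $2C\sqrt{\log(\nicefrac1\delta)/(k\ell)} = 2C\sqrt{\log(\nicefrac1\delta)}/k \le \eps$ (guaranteed by the $\sqrt{\log(\nicefrac1\delta)}/\eps$ term). Then $|Z - \tr(\mA)| \le \eps\tr(\mA)$, i.e. $Z$ is a $(1\pm\eps)$ approximation. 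There is no real obstacle here — the theorem is essentially a bookkeeping composition of the two lemmas; the only mild subtlety is making sure the constants in the choice of $k=\ell$ are large enough to satisfy both the side condition $\ell > c\log(\nicefrac1\delta)$ and the accuracy requirement at once, which is exactly what the two-term $\max$-style expression $\sqrt{\log(\nicefrac1\delta)}/\eps + \log(\nicefrac1\delta)$ encodes.
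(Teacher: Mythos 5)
Your proof is correct and follows essentially the same route as the paper: reduce $|Z - \tr(\mA)|$ to $|\hutch_\ell(\bs\Delta) - \tr(\bs\Delta)|$, apply \lemmaref{hutch-frob}, then chain the hypothesis $\|\bs\Delta\|_F \le 2\|\mA-\mA_k\|_F$ with \lemmaref{low-rank-frob-trace}. The observation that $\bs\Delta$ need not be PSD (hence the need for the general form of \lemmaref{hutch-frob}) and the discussion of the two-term choice of $k=\ell$ are both consistent with the paper's reasoning.
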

\begin{proof}
We have with probability $\ge 1-\delta$:
\begin{align*}
\abs{Z - \trace(\mA)}
&= \abs{\hutch_\ell(\bs{\Delta}) - \trace(\bs{\Delta})}\tag{since $Z = \tr(\tilde \mA) +\hutch_\ell(\bs{\Delta})$ and $\tr(\mA) = \tr(\tilde \mA) + \tr(\bs{\Delta}) $} \\
&\leq  C\sqrt{\tsfrac{\log(\nicefrac1\delta)}{\ell}} \normof{\bs{\Delta}}_F \tag{by the standard Hutchinson's analysis, \lemmaref{hutch-frob}}\\
& \leq 2C\sqrt{\tsfrac{\log(\nicefrac1\delta)}{\ell}}\normof{\mA-\mA_k}_F\tag{by the assumption that $\|\bs{\Delta}\|_F \le  2 \|\mA -\mA_k\|_F$} \\
&\leq 2C\sqrt{\tsfrac{\log(\nicefrac1\delta)}{k\ell}} \tr(\mA).\tag{by \lemmaref{low-rank-frob-trace}}
\end{align*}
\end{proof}

As discussed, \theoremref{hutchpp} would immediately yield an $O(1/\epsilon)$ query algorithm if we knew an optimal $k$-rank approximation for $\mA$. Since computing one is infeasible, our first version of \hutchpp (\algorithmref{hutchpp-qr}) instead uses a projection onto a subspace $\mQ$ which is computed with one iteration of the power method. We have:


\begin{reptheorem}{intro_thm}
	If \algorithmref{hutchpp-qr} is implemented with $m = O({\sqrt{\log(\nicefrac1\delta)}/\eps} + \log(\nicefrac1\delta))$ matrix-vector multiplication queries, then for any PSD $\mA$, with probability $\ge 1-\delta$, the output $\hutchpp(\mA)$ satisfies:
	$(1-\epsilon)\tr(\mA) \leq \hutchpp(\mA) \leq (1+\epsilon)\tr(\mA)$.
\end{reptheorem}
\begin{proof}
	Let \mS, \mG, and \mQ be as in \algorithmref{hutchpp-qr}. We instantiate \theoremref{hutchpp} with $\tilde \mA = \mQ^T\mA\mQ$ and $\bs{\Delta} = (\eye - \mQ\mQ^T)\mA(\eye - \mQ\mQ^T)$. Note that, since $\mQ$ is orthogonal, $(\eye - \mQ\mQ^T)$ is a projection matrix, so $(\eye - \mQ\mQ^T) = (\eye - \mQ\mQ^T)^2$. This fact, along with the cyclic property of the trace, gives:
	\begin{align*}
	\tr(\tilde \mA) &= \tr(\mA\mQ\mQ^T) & &\text{and} & \tr(\bs{\Delta}) &= \tr(\mA(\eye - \mQ\mQ^T)),
	\end{align*}
	and thus $\tr(\tilde\mA) + \tr(\bs{\Delta}) = \tr(\mA)$ as required by \theoremref{hutchpp}. Furthermore, since multiplying by a projection matrix can only decrease Frobenius norm, $\|\bs{\Delta}\|_F^2 \leq \|\mA(\eye - \mQ\mQ^T)\|_F^2 = \|\mA - \mA\mQ\mQ\|_F^2$. 
	
	Recall that $\mQ$ is an orthogonal basis for the column span of $\mA\mS$, where $\mS$ is a random sign matrix with $\frac m3$ columns. $\mQ$ is thus an orthogonal basis for a linear sketch of $\mA$'s column space, and it is well known that $\mQ$ will align with large eigenvectors of $\mA$, and $\|\mA - \mA\mQ\mQ^T\|_F^2$ will be small \cite{Sarlos:2006,Woodruff:2014}. Concretely, applying Corollary 7 and Claim 1 from \cite{DBLP:journals/corr/abs-2004-08434}, we have that, as long as $\frac m3 \geq O(k + \log(\nicefrac1\delta))$, with probability $\ge 1-\delta$:
	\begin{align*}
	\|\mA - \mA\mQ\mQ^T\|_F^2 \leq 2 \|\mA - \mA_k\|_F^2.
	\end{align*}
	Accordingly, $\|\bs{\Delta}\|_F \leq 2 \|\mA - \mA_k\|_F^2$ as required by \theoremref{hutchpp}. The result then immediately follows by setting $k = O({\sqrt{\log(\nicefrac1\delta)}/\eps} + \log(\nicefrac1\delta))$ and noting that $\hutchpp(\mA) = \left[\tr(\tilde \mA) +\hutch_\ell(\bs{\Delta})\right]$ where $\ell = O({\sqrt{\log(\nicefrac1\delta)}/\eps} + \log(\nicefrac1\delta))$.
\end{proof}

Notably, none of the analysis above uses the fact that \mA is PSD except for \lemmaref{low-rank-frob-trace}.
However, \lemmaref{low-rank-frob-trace} holds for any matrix by replacing the trace with the nuclear norm (the two are equal for PSD matrices).
So, the following result holds for general square matrices:
\begin{theorem}
	\label{thm:general-hutchpp-qr}
	If \algorithmref{hutchpp-qr} is implemented with \(m = O({\sqrt{\log(\nicefrac1\delta)}/\eps} + \log(\nicefrac1\delta))\) matrix-vector multiplication queries, then for any \mA, with probability \(\ge 1-\delta\), the output \(\hutchpp(\mA)\) satisfies:
	\[
		\abs{\hutchpp(\mA) - \trace(\mA)}
		\leq \sqrt\eps \normof{\mA-\mA_{1/\eps}}_F
		\leq \eps \normof{\mA}_*.
	\]
\end{theorem}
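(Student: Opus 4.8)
The plan is to mirror the proof of the restated \theoremref{intro_thm} exactly, tracking where PSD-ness was used and substituting the nuclear-norm versions of the relevant lemmas. As the excerpt already notes, the only place the PSD hypothesis enters is \lemmaref{low-rank-frob-trace}; everything else (the decomposition $\tr(\mA) = \tr(\tilde\mA) + \tr(\bs{\Delta})$, the Frobenius-norm contraction under projection, and the randomized-SVD guarantee $\|\mA - \mA\mQ\mQ^T\|_F^2 \le 2\|\mA - \mA_k\|_F^2$ from Corollary~7 and Claim~1 of the cited sketching paper) holds verbatim for arbitrary square $\mA$. So first I would restate the general form of \lemmaref{low-rank-frob-trace}: for any square $\mA$, $\|\mA - \mA_k\|_F \le \frac{1}{\sqrt k}\|\mA\|_*$, which follows from the identical argument applied to the singular values $\sigma_1 \ge \cdots \ge \sigma_d \ge 0$ in place of the eigenvalues, using $\sigma_{k+1} \le \frac1k\sum_{i=1}^k \sigma_i \le \frac1k\|\mA\|_*$ and $\|\mA - \mA_k\|_F^2 = \sum_{i>k}\sigma_i^2 \le \sigma_{k+1}\sum_{i>k}\sigma_i \le \frac1k\|\mA\|_*^2$.

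Next, I would run \theoremref{hutchpp} — or rather its natural non-PSD analogue, which I would need to state or re-derive, since \theoremref{hutchpp} as written assumes $\mA$ PSD. The cleaner route is to just chain the same four inequalities directly: with $\tilde\mA = \mQ^T\mA\mQ$ and $\bs{\Delta} = (\eye - \mQ\mQ^T)\mA(\eye - \mQ\mQ^T)$, the error $|\hutchpp(\mA) - \tr(\mA)|$ equals $|\hutch_\ell(\bs{\Delta}) - \tr(\bs{\Delta})|$, which by \lemmaref{hutch-frob} (which is already stated for general, even asymmetric, matrices) is at most $C\sqrt{\log(\nicefrac1\delta)/\ell}\,\|\bs{\Delta}\|_F$. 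Then $\|\bs{\Delta}\|_F \le \|\mA - \mA\mQ\mQ^T\|_F \le \sqrt2\,\|\mA - \mA_k\|_F$ by the projection contraction and the randomized-SVD bound, valid once $\frac m3 \ge O(k + \log(\nicefrac1\delta))$. Setting $k = \ell = \lceil 1/\eps\rceil$ and absorbing constants gives $|\hutchpp(\mA) - \tr(\mA)| \le \sqrt\eps\,\|\mA - \mA_{1/\eps}\|_F$, and the final inequality $\sqrt\eps\,\|\mA - \mA_{1/\eps}\|_F \le \eps\|\mA\|_*$ is exactly the general \lemmaref{low-rank-frob-trace} with $k = 1/\eps$.

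One subtlety worth being careful about: for non-symmetric $\mA$ the quantity $\vg^T\mA\vg$ still has expectation $\tr(\mA)$, but the variance/concentration analysis of \lemmaref{hutch-frob} must genuinely cover the asymmetric case — the excerpt explicitly flags that the lemma "applies to non-PSD, and even asymmetric matrices," so I would simply invoke it as a black box rather than redo it. Likewise, the matrix $\bs{\Delta} = (\eye - \mQ\mQ^T)\mA(\eye - \mQ\mQ^T)$ is not symmetric in general, so I should make sure to apply \lemmaref{hutch-frob} to $\bs{\Delta}$ as-is and not to any symmetrization of it.

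**Anticipated main obstacle.** There is essentially no deep obstacle — the result is a corollary obtained by swapping $\tr(\cdot) \to \|\cdot\|_*$ and $\lambda_i \to \sigma_i$. The only real care-point is bookkeeping: confirming that \lemmaref{hutch-frob} and the randomized-SVD guarantee are quoted in their general (non-PSD / asymmetric) forms, and that no step silently used $\mA = \mA^T$ or $\mA \succeq 0$. The constants need a light touch to land exactly at $\sqrt\eps$ and $\eps$ (choosing $k,\ell \asymp 1/\eps$ with the hidden $\log(1/\delta)$ terms absorbed for constant $\delta$, matching how \theoremref{intro_thm} was stated).
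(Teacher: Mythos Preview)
Your proposal is correct and follows exactly the route the paper takes: the paper's entire ``proof'' of \theoremref{general-hutchpp-qr} is the one-sentence remark preceding the theorem that the only PSD-specific ingredient in the analysis is \lemmaref{low-rank-frob-trace}, which generalizes verbatim with $\tr(\mA)$ replaced by $\|\mA\|_*$. Your write-up is in fact more careful than the paper's, correctly flagging that \lemmaref{hutch-frob} and the sketching bound from \cite{DBLP:journals/corr/abs-2004-08434} are already stated for general (possibly asymmetric) matrices, and that the trace decomposition $\tr(\mA)=\tr(\tilde\mA)+\tr(\bs\Delta)$ uses only the cyclic property and idempotence of $\eye-\mQ\mQ^T$.
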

Using the same number of queries, Hutchinson's estimator achieves a bound of \(O(\sqrt\eps) \cdot \normof{\mA}_F\).
Since \(\normof{\mA-\mA_{1/\eps}}_F \leq \normof\mA_F\), the first inequality in \theoremref{general-hutchpp-qr} shows that \hutchpp is never asymptotically slower than Hutchinson's, even for non-PSD matrices.
Furthermore, if \mA has quickly decaying eigenvalues, this inequality shows that \hutchpp will converge especially quickly.
The second inequality mirrors \theoremref{intro_thm}, stating the deviation of \hutchpp in terms of nuclear norm instead of trace. For PSD matrices, the two are equivalent.


\subsection{A Non-Adaptive Variant of \texorpdfstring{\hutchpp}{Hutch++}}
As discussed in \sectionref{intro},
\algorithmref{hutchpp-qr} is \emph{adaptive}: it uses the result of computing \(\mA\mS\) to compute \(\mQ\), which is then multiplied by \mA to compute the \(\trace(\mQ^T\mA\mQ)\) term.
Meanwhile, Hutchinson's estimator is \emph{non-adaptive}: it samples a single random matrix upfront, batch-multiplies by \mA once, and computes an approximation to $\tr(\mA)$ from the result, without any further queries.

Not only is non-adaptivity an interesting theoretical property, but it can be practically useful, since parallelism or block iterative methods often make it faster to multiply an implicit matrix by many vectors at once. With these considerations in mind, we describe a non-adaptive variant of \hutchpp, which we call \hutchppna. \hutchppna obtains nearly the same theoretical guarantees as \algorithmref{hutchpp-qr}, although it tends to perform slightly worse in our experiments.
%
%

We leverage a streaming low-rank approximation result of Clarkson and Woodruff \cite{DBLP:conf/stoc/ClarksonW09} which shows that  if \(\mS\in\bbR^{d \times m}\) and \(\mR\in\bbR^{d \times cm}\) are sub-Gaussian random matrices with \(m = O(k\log(\nicefrac1\delta))\) and $c > 1$ a fixed constant, then with probability \(1-\delta\), the matrix \(\tilde\mA = \mA\mR(\mS^T\mA\mR)^+(\mA\mS)^T\) satisfies \(\normof{\mA-\tilde\mA}_F \leq 2 \normof{\mA-\mA_k}_F\). Here $^+$ denotes the Moore-Penrose pseudoinverse.
We can compute \(\trace(\tilde\mA)\) efficiently without explicitly constructing \(\tilde\mA\in \R^{d\times d}\)  by noting that it is equal to \(\trace((\mS^T\mA\mR)^+(\mA\mS)^T(\mA\mR))\) via the cyclic property of the trace.
This yields:
\begin{algorithm}[H]
	\caption{\hutchppna (Non-Adaptive variant of \hutchpp)}
	\label{alg:hutchpp-streaming}
	{\bfseries input}: Matrix-vector multiplication oracle for matrix, \(\mA\in\bbR^{d \times d}\). Number of queries, \(m\). \\
	{\bfseries output}: Approximation to \(\tr(\mA)\).\\
	\vspace{-1em}
	%
\begin{algorithmic}[1]
	\STATE Fix constants $c_1,c_2,c_3$ such that $c_1 < c_2$ and $c_1 + c_2 + c_3 = 1$. 
	\STATE Sample \(\mS\in\bbR^{d \times c_1 m}\), \(\mR\in\bbR^{d \times c_2 m}\), and \(\mG\in\bbR^{d \times  c_3 m}\)  with \iid $\{+1,-1\}$ entries.
	\STATE Compute \(\mZ = \mA\mR\) and \(\mW = \mA\mS\).
	\STATE {\bfseries return} \(\hutchppna(\mA)  = \trace( (\mS^T\mZ)^+(\mW^T\mZ)) + \frac{1}{c_3m} \left [\trace(\mG^T\mA\mG) - \trace(\mG^T\mZ(\mS^T\mZ)^+\mW^T\mG) \right ]\)
\end{algorithmic}
\end{algorithm}
\noindent \hutchppna requires $m$ matrix-vector multiplications with $\mA$. In our experiments, it works well with $c_1 = c_3 = \nicefrac{1}{4}$ and $c_2 = \nicefrac{1}{2}$. Assuming $m < d$, it requires $O(dm^2)$ further runtime, to perform the matrix multiplications on line $4$ and to compute $(\mS^T \mZ)^+$,  which takes $O(dm^2 + m^3)$ time.
 
\begin{theorem}\label{thm:hutchna}
	If \hutchppna is implemented with $m = O(\log(\nicefrac1\delta)/\eps)$ matrix-vector multiplication queries and $\frac{c_2}{c_1}$ a sufficiently large constant, then for any PSD $\mA$, with probability $\ge 1-\delta$, the output $\hutchppna(\mA)$ satisfies:
	$(1-\epsilon)\tr(\mA) \leq \hutchppna(\mA) \leq (1+\epsilon)\tr(\mA)$.
%
\end{theorem}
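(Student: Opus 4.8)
The plan is to reduce to \theoremref{hutchpp} in exactly the way the proof of \theoremref{intro_thm} does, but with a decomposition $(\tilde\mA,\bs{\Delta})$ of $\mA$ that can be produced non-adaptively. I would set
\[
\tilde\mA \defeq \mA\mR(\mS^T\mA\mR)^+(\mA\mS)^T = \mZ(\mS^T\mZ)^+\mW^T \qquad\text{and}\qquad \bs{\Delta}\defeq \mA-\tilde\mA,
\]
where $\mZ=\mA\mR$ and $\mW=\mA\mS$ are the matrices already formed on line 3 of \algorithmref{hutchpp-streaming}. The first thing to check is that \algorithmref{hutchpp-streaming} actually returns $\trace(\tilde\mA)+\hutch_{c_3m}(\bs{\Delta})$: by the cyclic property of the trace $\trace(\tilde\mA)=\trace((\mS^T\mZ)^+\mW^T\mZ)=\trace((\mS^T\mZ)^+(\mW^T\mZ))$, which is the first term output by the algorithm, and $\tsfrac1{c_3m}\bigl[\trace(\mG^T\mA\mG)-\trace(\mG^T\tilde\mA\mG)\bigr]=\tsfrac1{c_3m}\trace(\mG^T\bs{\Delta}\mG)=\hutch_{c_3m}(\bs{\Delta})$, which is the bracketed term (using $\mG^T\mZ(\mS^T\mZ)^+\mW^T\mG=\mG^T\tilde\mA\mG$). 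Since $\trace(\mA)=\trace(\tilde\mA)+\trace(\bs{\Delta})$ holds trivially, this places us in the setting of \theoremref{hutchpp} with $\ell=c_3m$, provided we can bound $\|\bs{\Delta}\|_F$.

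To control $\|\bs{\Delta}\|_F$ I would invoke the Clarkson--Woodruff streaming low-rank approximation bound quoted above, taking its ``small'' sketch to be $\mS$ (with $c_1 m$ columns) and its ``large'' sketch to be $\mR$ (with $c_2 m$ columns, i.e.\ a factor $c_2/c_1$ larger); the hypothesis ``$\tsfrac{c_2}{c_1}$ a sufficiently large constant'' is exactly what guarantees the ``$c>1$'' requirement. Setting $k=\Theta(c_1 m/\log(\nicefrac1\delta))=\Theta(1/\eps)$ so that $c_1m=O(k\log(\nicefrac1\delta))$, this gives $\|\bs{\Delta}\|_F=\|\mA-\tilde\mA\|_F\le 2\|\mA-\mA_k\|_F$ with probability $\ge 1-\delta$, which is the remaining hypothesis of \theoremref{hutchpp}. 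One subtlety I would flag explicitly: $\tilde\mA$ is generally \emph{not} symmetric even though $\mA$ is, so the application of Hutchinson's analysis inside \theoremref{hutchpp} relies on the version of \lemmaref{hutch-frob} valid for asymmetric matrices --- which is precisely why that lemma was stated at that level of generality.

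Finally I would do the parameter arithmetic. With $m=O(\log(\nicefrac1\delta)/\eps)$ we have $\ell=c_3m=\Theta(\log(\nicefrac1\delta)/\eps)$, so $\ell>c\log(\nicefrac1\delta)$ once $\eps$ is below a fixed constant, and $k\ell=\Theta(m^2/\log(\nicefrac1\delta))=\Theta(\log(\nicefrac1\delta)/\eps^2)$, whence $\sqrt{\log(\nicefrac1\delta)/(k\ell)}=\Theta(\eps)$. A union bound over the Clarkson--Woodruff event and the event of \theoremref{hutchpp} (each invoked at failure probability $\delta/2$, which only changes constants) then yields $\bigl|\hutchppna(\mA)-\trace(\mA)\bigr|\le 2C\sqrt{\tsfrac{\log(\nicefrac1\delta)}{k\ell}}\,\trace(\mA)=O(\eps)\trace(\mA)$ with probability $\ge1-\delta$, and rescaling $\eps$ by the hidden constant gives the stated $(1\pm\eps)$ bound. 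The main obstacle is the bookkeeping that ties $k$, $\ell$, and $m$ together consistently: the Clarkson--Woodruff guarantee forces $k\asymp c_1 m/\log(\nicefrac1\delta)$ and a large $c_2/c_1$, and one must then check that $\ell=c_3m$ makes $k\ell$ land at $\Theta(\log(\nicefrac1\delta)/\eps^2)$ so that \theoremref{hutchpp} outputs exactly $(1\pm\eps)$ relative error; keeping track of the asymmetry of $\tilde\mA$ is the second point not to overlook.
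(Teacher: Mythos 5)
Your proposal is correct and follows essentially the same route as the paper's proof: the same decomposition $\tilde\mA=\mZ(\mS^T\mZ)^+\mW^T$, $\bs\Delta=\mA-\tilde\mA$, the same invocation of Clarkson--Woodruff (Theorem~4.7) to get $\|\bs\Delta\|_F\le 2\|\mA-\mA_k\|_F$ with $k=\Theta(1/\eps)$, and the same plugging into \theoremref{hutchpp} with $\ell=c_3m$. You fill in a few things the paper leaves implicit --- verifying via cyclicity that the algorithm's output literally equals $\trace(\tilde\mA)+\hutch_{c_3m}(\bs\Delta)$, flagging the non-symmetry of $\tilde\mA$ (which is exactly why \lemmaref{hutch-frob} is stated for general square matrices), and the $k\ell$ arithmetic --- but these are elaborations, not a different argument.
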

\begin{proof}
We apply \theoremref{hutchpp} with $\tilde\mA = \mZ(\mS^T\mZ)^+\mW^T$, $\bs{\Delta} = \mA - \tilde\mA$, $k = O(1/\epsilon)$ and $\ell = c_3 m = O(\frac{\log(\nicefrac1\delta)}{\epsilon})$. $\tr(\mA) = \tr(\tilde \mA) + \tr(\bs{\Delta})$ and $\hutchppna(\mA) = [\tr(\tilde \mA) + \hutch_\ell(\bs{\Delta})]$. 
By Theorem 4.7 of \cite{DBLP:conf/stoc/ClarksonW09}, since $c_1 m = O(k \log(\nicefrac1\delta))$,  $\norm{\bs{\Delta}}_F \le 2 \norm{\mA-\mA_k}_F$ with probability $\ge 1-\delta$ as required. 
\end{proof}


\section{Lower Bounds}
\label{sec:lower_bound}
A natural question is if the $O(1/\epsilon)$ matrix-vector query bound of \theoremref{intro_thm} and \theoremref{hutchna} is tight. In this section, we prove that it is up to a logarithmic factor, even for algorithms that perform \emph{adaptive} queries like \hutchpp. Our lower bound is via a reduction to communication complexity: we show that a better algorithm for PSD trace estimation would imply a better 2-party communication protocol for the Gap-Hamming problem, which would violate known adaptive lower bounds for that problem \cite{ChakrabartiRegev:2012}. To prove this result we need to assume a fixed precision model of computation. Specifically we require that the entries in each query vector $\vr$ are integers bounded in absolute value by $2^b$, for some fixed constant $b$. By scaling, this captures the setting where the query vectors are non-integer, but have bounded precision. Formally, we prove in \sectionref{adaptive}:
\begin{theorem}
	\label{thm:adaptive_lower_bound}
	Any algorithm that accesses a positive semidefinite matrix $\mA$ via matrix-vector multiplication queries  $\mA\vr_1, \ldots, \mA\vr_m$, where $\vr_1,\ldots, \vr_m$ are possibly adaptively chosen vectors with integer entries in $\{-2^b,\ldots, 2^b\}$, requires $m = \Omega\left(\frac{1}{\epsilon(b + \log(\nicefrac1\epsilon))}\right)$ such queries to output an estimate $t$ so that, with probability $>\nicefrac 23$, $(1-\epsilon)\tr(\mA)\leq t \leq (1+\epsilon)\tr(\mA)$. 
\end{theorem}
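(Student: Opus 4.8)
The plan is to prove this by a reduction from the Gap-Hamming problem, invoking the multi-round (interactive) communication lower bound of Chakrabarti and Regev \cite{ChakrabartiRegev:2012}: in Gap-Hamming, Alice holds $\mX \in \{-1,+1\}^{d\times d}$, Bob holds $\mY \in \{-1,+1\}^{d\times d}$, they are promised $|\langle \mX,\mY\rangle| \geq d$ (where $\langle\mX,\mY\rangle = \tr(\mX^T\mY)$, viewing the matrices as $d^2$-bit vectors), and they must output $\mathrm{sign}(\langle\mX,\mY\rangle)$; any randomized, possibly adaptive, bounded-error protocol needs $\Omega(d^2)$ bits. I would set $d = \Theta(1/\epsilon)$ and, given a Gap-Hamming instance, form
\[
	\mA = c\,\mI_d + \mX^T\mY + \mY^T\mX,
\]
where $c = \Theta(d)$ is chosen large enough that $\mA \succeq 0$. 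Then $\tr(\mA) = cd + 2\langle\mX,\mY\rangle = \Theta(d^2)$, while the two Gap-Hamming cases differ in trace by $4|\langle\mX,\mY\rangle| = \Theta(d) = \Theta(\epsilon)\cdot\tr(\mA)$; choosing the constant in $d = \Theta(1/\epsilon)$ small enough, the $(1\pm\epsilon)$-ambiguity intervals around the two possible traces are disjoint, so any $(1\pm\epsilon)$ trace estimate for $\mA$ decides $\mathrm{sign}(\langle\mX,\mY\rangle)$.

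Next I would simulate an arbitrary $m$-query adaptive matrix-vector trace algorithm as a communication protocol. Alice drives the algorithm; to answer a query $\vr$ with integer entries in $\{-2^b,\dots,2^b\}$, she sends $\vr$ and $\mX\vr$ to Bob, Bob returns $\mY\vr$ and $\mY^T(\mX\vr)$, and Alice assembles $\mA\vr = c\,\vr + \mX^T(\mY\vr) + \mY^T(\mX\vr)$ locally, which suffices to compute the next (adaptively chosen) query. Every transmitted vector is integer-valued with entries of magnitude at most $d^2 2^b$, so each query costs $O(d(b+\log d))$ bits and the whole protocol costs $O\!\left(md(b+\log d)\right)$ bits; at the end Alice reads off the trace estimate and outputs the Gap-Hamming answer. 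Equating with the $\Omega(d^2)$ lower bound yields $m \cdot O(d(b+\log d)) = \Omega(d^2)$, i.e.\ $m = \Omega\!\left(\tfrac{d}{b+\log d}\right) = \Omega\!\left(\tfrac{1}{\epsilon(b+\log(1/\epsilon))}\right)$.

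The step I expect to be the main obstacle is reconciling PSD-ness with the trace gap. For worst-case sign matrices, $\|\mX^T\mY + \mY^T\mX\|$ can be as large as $\Theta(d^2)$ (e.g.\ $\mX = \mY$ all-ones), which would force $c = \Theta(d^2)$, shrink the relative trace gap to $\Theta(1/d^2)$, and only give $m = \Omega(\epsilon^{-1/2}/\log(1/\epsilon))$. The resolution is to use that the hard Gap-Hamming distribution of \cite{ChakrabartiRegev:2012} may be taken over (near-)uniform random sign matrices, for which $\|\mX\|,\|\mY\| = O(\sqrt d)$ and hence $\|\mX^T\mY + \mY^T\mX\| = O(d)$ with probability $1-o(1)$; conditioning on this event lets us keep $c = \Theta(d)$, changes the communication lower bound by only a constant factor, and its rare failure is absorbed into the protocol's error budget. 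Remaining routine points are: bit-complexity bookkeeping for the transmitted vectors; boosting the trace estimator's success probability so the induced protocol has error below $1/3$; and observing that adaptivity is free, since \cite{ChakrabartiRegev:2012} already lower-bounds interactive protocols. As a sanity check, this $\mA$ has $\|\mA\|_F = \Theta(d^{3/2})$ and $\|\mA - \mA_k\|_F = \Theta(d^{3/2})$ for $k \ll d$, so both Hutchinson's estimator and \hutchpp genuinely need $\Theta(1/\epsilon)$ queries on it, matching the lower bound up to the $\log(1/\epsilon)$ factor.
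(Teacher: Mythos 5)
Your reduction is correct in spirit and gives the stated bound, but it takes a genuinely different path through the PSD constraint than the paper does, and that difference matters. The paper sets up Gap-Hamming on length-$n$ vectors $\vs,\vt$, reshapes them into $\sqrt n\times\sqrt n$ matrices $\mS,\mT$, and takes $\mA = (\mS+\mT)^T(\mS+\mT)$. Because $\mA$ is a Gram matrix, it is PSD \emph{for every input pair}, so the reduction works directly from the worst-case communication lower bound of Chakrabarti--Regev (Lemma 5) with no distributional conditioning whatsoever. The trace computation $\tr(\mA) = \|\vs+\vt\|_2^2 = 2n + 2\langle\vs,\vt\rangle$ then plays the same role as your $cd + 2\langle\mX,\mY\rangle$, and the four-message simulation per query (Alice sends $\vr$, Bob returns $\mT\vr$, Alice sends $\mZ\vr$, Bob returns $\mT^T\mZ\vr$) has the same $O(\sqrt n(b+\log n))$ bit cost as your protocol. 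Your construction $\mA = c\mI + \mX^T\mY + \mY^T\mX$ is analogous to expanding the Gram matrix and replacing the $\mS^T\mS + \mT^T\mT$ part by a scalar shift, but this abandons automatic positivity: as you correctly flag, for worst-case signs $\|\mX^T\mY + \mY^T\mX\|_2$ can be $\Theta(d^2)$, and fixing this forces you to move to a distributional version of the lower bound, condition on the $O(\sqrt d)$ spectral-norm event, and then argue the conditioning loses only $o(1)$ in success probability. All of that is doable (the hard distribution for Gap-Hamming can indeed be taken to be uniform, under which the conditioning event holds with probability $1-o(1)$), but it is extra machinery that the Gram-matrix construction makes unnecessary, and it also requires you to be explicit about invoking the \emph{distributional} rather than the worst-case form of the Chakrabarti--Regev theorem. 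If you keep your approach, make that invocation explicit; otherwise, swapping to $\mA = (\mS+\mT)^T(\mS+\mT)$ removes the obstacle you spent a paragraph circumventing.
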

For constant $b$ our lower bound is $\Omega\left(\frac{1}{\epsilon\log(\nicefrac1\epsilon)}\right)$, which matches \theoremref{intro_thm} and \theoremref{hutchna} up to a $\log(\nicefrac1\epsilon)$ factor. 
We also provide an alternative lower bound which holds in the real RAM model of computation (all inputs and arithmetic operations involve real numbers). This second lower bound is tight up to constants, but only applies to non-adaptive algorithms. It is proven using different information theoretic techniques -- we reduce to a hypothesis testing problem involving \emph{negatively spiked} covariance matrices \cite{CaiMaWu:2015,PerryWeinBandeira:2018}. Formally, we prove in \appendixref{lower_bound}:

\begin{theorem}
	\label{thm:nonadaptive_lower_bound}
	Any algorithm that accesses a postive semidefinite matrix $\mA$ through matrix-vector multiplication queries  $\mA\vr_1, \ldots, \mA\vr_m$, where $\vr_1,\ldots, \vr_m$ are real valued \emph{non-adaptively} chosen vectors requires $m = \Omega\left(\frac{1}{\epsilon}\right)$ such queries to output an estimate $t$ so that, with probability $> \nicefrac 34$, $(1-\epsilon)\tr(\mA)\leq t \leq (1+\epsilon)\tr(\mA)$. 
\end{theorem}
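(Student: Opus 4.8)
The plan is to exhibit a distribution over hard PSD instances for which any non-adaptive matrix-vector algorithm with $m = o(1/\epsilon)$ queries cannot distinguish two cases whose traces differ multiplicatively by $(1\pm\Omega(\epsilon))$. Following the negatively-spiked covariance approach, I would take $d$ large (say $d = \Theta(1/\epsilon^2)$ or larger, chosen at the end) and set $k = \Theta(1/\epsilon)$. In the ``null'' case, let $\mA = \mI_d$; in the ``spiked'' case, let $\mA = \mI_d - \frac{1}{k}\mV\mV^T$ where $\mV \in \R^{d\times k}$ has orthonormal columns drawn from the Haar measure on the Stiefel manifold. Both matrices are PSD (eigenvalues $1$ and $1-\tfrac1k$), but $\tr(\mI_d) = d$ while $\tr(\mA) = d - 1$, which is a $(1 - 1/d)$ factor — too weak. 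So instead I would scale the spike: take $\mA = \mI_d - \frac{c}{k}\mV\mV^T$ with an appropriate constant $c$ and, crucially, use $k = \Theta(\epsilon d)$ so that the trace gap is $\frac{ck}{d} = \Theta(c\epsilon \cdot d / d)$... more carefully, with $k$ eigenvalues shifted by $\Theta(1)$ the trace changes by $\Theta(k)$, and we want $\Theta(k) = \Theta(\epsilon d)$, i.e. $k = \Theta(\epsilon d)$; then distinguishing the two cases is exactly distinguishing $\tr(\mA) = d$ from $\tr(\mA) = d(1-\Theta(\epsilon))$.

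The core of the argument is the indistinguishability claim: with a \emph{fixed} (non-adaptive) query matrix $\mR \in \R^{d\times m}$, the algorithm observes $\mY = \mA\mR$. In the null case $\mY = \mR$ deterministically; in the spiked case $\mY = \mR - \frac{c}{k}\mV(\mV^T\mR)$. I would argue that when $m$ is small relative to $k$, the random projection $\mV^T\mR$ is so ``spread out'' that the distribution of $\mY$ under the spike is statistically close to that under the null. Concretely, I would compute (or bound) the total variation or, more conveniently, the $\chi^2$-divergence between the distribution of $\mA\mR$ in the two cases. By rotational invariance we may assume $\mR$ has orthonormal columns (the algorithm can whiten), so the relevant random object is $\mP\mR$ where $\mP = \mV\mV^T$ is a uniformly random rank-$k$ projection; the observation is essentially the $m\times m$ Gram-type matrix $\mR^T\mP\mR$, a compressed Wishart/Jacobi-ensemble object. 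Standard second-moment computations (as in \cite{CaiMaWu:2015,PerryWeinBandeira:2018}) give that the $\chi^2$ divergence is bounded by a function of $m$ and $k/d$ that stays $o(1)$ as long as $m \le c' \sqrt{k \cdot d}/\,(\text{something})$ — in particular as long as $m = o(k) = o(\epsilon d)$, and choosing $d$ large enough (e.g. $d = 1/\epsilon^2$, giving $k = 1/\epsilon$ and the threshold $m = o(1/\epsilon)$) this yields the claimed bound. The reduction to a lower bound on \emph{any} estimator $t$ is then routine: if $t$ were a $(1\pm\epsilon/4)$ approximation with probability $>3/4$ it would distinguish the two cases with advantage $>1/2$, contradicting the divergence bound via Pinsker / the data-processing inequality.

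The main obstacle is the divergence computation: getting a clean bound on the $\chi^2$ (or TV) distance between $\mR^T\mV\mV^T\mR$ under Haar-random $\mV$ and the null, and pinning down exactly which regime of $(m, k, d)$ makes it $o(1)$. This requires care with moments of products of entries of a Haar-random projection (equivalently, Weingarten calculus or explicit Jacobi-ensemble moment formulas), and one must verify that the ``signal strength'' $c/k$ times the effective dimension interacts correctly so that $m = \Theta(1/\epsilon)$ is genuinely the threshold — i.e. that the bound is tight and not off by a $\sqrt{\log}$ or a polynomial factor. A secondary technical point is handling the non-adaptivity/whitening reduction rigorously (arguing the algorithm gains nothing from a non-orthonormal $\mR$, and that a deterministic $\mR$ is WLOG by an averaging argument over the algorithm's internal randomness). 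Once the divergence bound is in hand at the right threshold, the remaining steps — choosing constants so both matrices are PSD, verifying the multiplicative trace gap, and the estimator-to-distinguisher reduction — are straightforward.
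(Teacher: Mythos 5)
Your high-level program — reduce trace estimation to a hypothesis test between a ``null'' covariance and a ``negatively spiked'' covariance, then bound the $\chi^2$ divergence using the results of Cai--Ma--Wu and Perry--Wein--Bandeira--Moitra — is exactly the right circle of ideas, and it is the one the paper uses. But the concrete hard instance you write down has a fatal flaw: your null hypothesis is the \emph{deterministic} matrix $\mA = \mI_d$, while your alternative is $\mA = \mI_d - c\,\mV\mV^T$ with random $\mV$. With a fixed query matrix $\mR$, the observation in the null case is $\mA\mR = \mR$, a point mass; in the spiked case it is $\mR - c\,\mV(\mV^T\mR)$, which differs from $\mR$ with probability $1$ (for any nonzero $\mR$ and continuously distributed $\mV$). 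In the real RAM model the algorithm simply checks whether $\mA\ve_1 = \ve_1$ and distinguishes the two cases with a single query. So the total variation distance you want to bound equals $1$, and no moment/divergence computation can rescue the argument. The underlying issue is that the randomness must appear in \emph{both} hypotheses in a way that makes the observation a genuinely random object whose law is only slightly perturbed by the spike; a randomly rotated low-rank defect of a fixed identity does not do this.

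The paper sidesteps this by placing the hard distribution in the Wishart world: it sets $\mA = \mG^T\mD\mG$ where $\mG \in \R^{n\times d}$ has i.i.d.\ $\cN(0,1)$ entries with $n = 1/\eps$, and $\mD$ is either $\mI_n$ or $\mI_n$ with one diagonal entry zeroed out. Now \emph{both} hypotheses are random. After arguing (via rotational invariance) that a non-adaptive algorithm may as well query with the first $m$ standard basis vectors, the observation $\mA\mE_m$ can be rewritten as a deterministic function of $m$ i.i.d.\ samples from $\cN(\mat{0},\mC)$ with $\mC = \mI_n$ or $\mC = \mI_n - \vz\vz^T$ for a random unit vector $\vz$ (plus independent Gaussian padding). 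This is exactly the rank-one negatively spiked Wishart testing problem in ambient dimension $n = 1/\eps$, and the clean $\chi^2$ formula $D_{\chi^2}(\cQ\|\cP) = \E\bigl[(1-\langle\vv,\vv'\rangle^2)^{-m/2}\bigr] - 1$ over independent random unit vectors $\vv,\vv'\in\R^n$ gives the $m = \Omega(n) = \Omega(1/\eps)$ threshold after a beta-distribution moment computation. Note also the dimensional bookkeeping differs from yours: the spike is rank one in an effective dimension of $1/\eps$ (not rank $\Theta(\eps d)$ in $\R^d$), and the ambient $d$ enters only through the concentration of $\tr(\mG^T\mD\mG)$ around $d\cdot\tr(\mD)$ needed to turn distinguishing $\mD_1$ from $\mD_2$ into a relative-error trace gap. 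Your plan to ``whiten'' $\mR$ to orthonormal columns and to convert an estimator into a distinguisher are both present in the paper and are fine as stated; what is missing is the randomization of the matrix itself so that the observations form a nondegenerate statistical model.
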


%

\subsection{Adaptive lower bound}\label{sec:adaptive}
The proof of \theoremref{adaptive_lower_bound} is based on reducing the Gap-Hamming problem to trace estimation. This problem has been well studied in communication complexity since its introduction in \cite{IndykWoodruff:2003}. 
\begin{problem}[Gap-Hamming]
	\label{prob:gap-hamming}
	Let Alice and Bob be communicating parties who hold vectors $\vs\in \{-1,1\}^n$ and $\vt\in \{-1,1\}^n$, respectively. The Gap-Hamming problem asks Alice and Bob to return:
	\begin{align*}
		1 &\text{ if } \langle \vs,\vt \rangle \geq \sqrt{n} & &\text{and} & -1 &\text{ if } \langle \vs,\vt \rangle \leq -\sqrt{n}. 
	\end{align*}
\end{problem}
A tight lower bound on the unbounded round, randomized communication complexity of this problem was first proven in \cite{ChakrabartiRegev:2012}, with alternative proofs appearing in \cite{Vidick:2012,Sherstov:2012}. Formally:
\begin{lemma}[Theorem 2.6 in \cite{ChakrabartiRegev:2012}]
	\label{lem:gh_lowerbound}
	The randomized communication complexity for solving \problemref{gap-hamming} with probability $\geq \nicefrac 23$ is $\Omega(n)$ bits. 
\end{lemma}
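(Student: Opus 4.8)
The statement is Theorem~2.6 of \cite{ChakrabartiRegev:2012}, so what follows is a sketch of the route I would take rather than a new argument. The plan is to pass to the distributional setting via Yao's minimax principle, take the hard input distribution $\mu$ to be uniform on $\{-1,1\}^n \times \{-1,1\}^n$, and show that no deterministic protocol of cost $c = o(n)$ can be correct on a $\tfrac23$ fraction of the $\mu$-mass supported on the promise region. Under $\mu$ the quantity $\langle \vs,\vt\rangle$ is a sum of $n$ independent $\pm 1$ terms, so by a Berry--Esseen estimate the ``$1$-inputs'' $\{\langle\vs,\vt\rangle \ge \sqrt n\}$, the ``$-1$-inputs'' $\{\langle\vs,\vt\rangle \le -\sqrt n\}$, and the undefined middle band each carry $\Omega(1)$ mass (the middle band is harmless, since the protocol may answer arbitrarily there). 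A cost-$c$ deterministic protocol partitions the inputs into at most $2^c$ combinatorial rectangles $R = A \times B$ carrying $\pm 1$ labels, and correctness forces the $1$-labeled rectangles to together cover most of the $1$-inputs while containing only a small fraction of the $-1$-inputs, and symmetrically. So the whole question reduces to: how far toward one side of the gap can a rectangle be biased, as a function of its size?

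The technical core is a structural estimate for the conditional law of $\langle\vs,\vt\rangle$ given $\vs\in A$, $\vt\in B$ with $A,B\subseteq\{-1,1\}^n$ large. Two ingredients go in. First, a size-versus-mean bound: by subadditivity of entropy, a set $B$ of density $2^{-\alpha n}$ has mean $\bar{\vb} = \E_{\vt\in B}[\vt]$ with $\|\bar{\vb}\|_2^2 = O(\alpha n)$, and likewise for $A$, so the conditional mean $\E[\langle\vs,\vt\rangle\mid A\times B] = \langle\bar{\va},\bar{\vb}\rangle$ is controlled and, more usefully, no small family of directions can carry the bias. Second, an (anti-)concentration estimate for the bilinear form $\langle\vs,\vt\rangle$ restricted to the rectangle, showing that on a large rectangle this form is smoothed out at scale $\sqrt n$, so it cannot sit almost entirely inside $\{\langle\vs,\vt\rangle\ge \sqrt n\}$ while essentially avoiding $\{\langle\vs,\vt\rangle\le -\sqrt n\}$ unless the rectangle is correspondingly small. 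Combining the two and then counting rectangles would give $2^c \ge 2^{\Omega(n)}$, i.e. $c = \Omega(n)$, which transfers back through Yao's principle to the claimed randomized bound.

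The difficulty is entirely in making that rectangle/corruption bound strong enough. The naive version --- plugging the mean-norm bound into the trivial containment $\Pr_{\mu}[E\cap(A\times B)] \le \Pr_{\mu}[E]$ together with a standard concentration inequality --- only shows that a one-sided rectangle has $\mu$-measure $2^{-\Omega(\sqrt n)}$, and hence only yields the weaker $\Omega(\sqrt n)$ lower bound; the obstruction is ``dictator-like'' rectangles (e.g. fixing $\Theta(\sqrt n)$ coordinates of both $\vs$ and $\vt$ to $+1$) that are genuinely one-sided yet only moderately small, so that a counting argument over $2^{o(n)}$ such rectangles is not contradicted. Extracting the extra $\sqrt n$ factor is precisely the content of the \emph{smooth rectangle bound} of \cite{ChakrabartiRegev:2012}; alternatively one can substitute Vidick's overlap-concentration inequality \cite{Vidick:2012} or Sherstov's sharpening of the pattern-matrix / approximate-degree method \cite{Sherstov:2012}. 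Each of these three routes suffices, and each turns on a delicate analysis of bilinear forms restricted to large subsets of the Boolean cube --- that analysis, not the reduction to it, is the main obstacle. I would follow \cite{Vidick:2012}, whose concentration inequality is the most self-contained of the three.
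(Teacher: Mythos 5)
The paper does not prove this lemma at all; it is imported verbatim as Theorem~2.6 of \cite{ChakrabartiRegev:2012} and used as a black box in the reduction of \theoremref{adaptive_lower_bound}. Your sketch is therefore not being compared against a paper-internal argument but against the literature, and on those terms it is an accurate and well-calibrated account: the Yao reduction, the uniform hard distribution, the Berry--Esseen step showing the two promise regions each carry constant mass, the reduction to a corruption/rectangle bound, and --- crucially --- your correct identification that the naive size-vs.-mean plus concentration argument stalls at $\Omega(\sqrt n)$ because of dictator-like rectangles, with the extra factor coming only from the smooth rectangle bound (\cite{ChakrabartiRegev:2012}), Vidick's bilinear concentration inequality (\cite{Vidick:2012}), or Sherstov's corruption argument (\cite{Sherstov:2012}). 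You are also right that the hard technical content lives entirely in that rectangle estimate and that no short self-contained proof is known. In short: for an imported theorem of this depth, citing it is the right move, and your sketch correctly locates both the structure of the known proofs and the precise point where a from-scratch attempt would get stuck.
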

With \lemmaref{gh_lowerbound} in place, we have all we need to prove \theoremref{adaptive_lower_bound}. 
\begin{proof}[Proof of \theoremref{adaptive_lower_bound}\hspace{-0.15cm}]
	Fix a perfect square \(n\in\bbN\).
	Consider an instance of  \problemref{gap-hamming} with inputs $\vs\in \R^{n}$ and $\vt\in \R^{n}$. Let $\mS \in \R^{\sqrt{n}\times\sqrt{n}}$ and $\mT \in \R^{\sqrt{n}\times\sqrt{n}}$ contain the entries of $\vs$ and $\vt$ rearranged into matrices (e.g., placed left-to-right, top-to-bottom). Let $\mZ = \mS + \mT$ and let $\mA = \mZ^T\mZ$. $\mA$ is positive semidefinite and we have:
	\begin{align*}
		\tr(\mA) = \|\mZ\|_F^2 = \|\vs + \vt\|_2^2 &= \|\vs\|_2^2 + \|\vt\|_2^2 + 2 \langle \vs,\vt\rangle = 2n+ 2 \langle \vs,\vt\rangle.
		\label{eq:trace_redc}
	\end{align*}
	If $\langle \vs,\vt\rangle \geq \sqrt{n}$ then we will have $\tr(\mA) \geq 2(n + \sqrt{n})$ and if $\langle \vs,\vt\rangle \leq -\sqrt{n}$ then we will have $\tr(\mA) \leq 2(n - \sqrt{n})$. So, if Alice and Bob can approximate $\tr(\mA)$ up to relative error $(1\pm 1/\sqrt{n})$, then they can solve \problemref{gap-hamming}. We claim that they can do so with just $O(m\cdot \sqrt{n}(\log n + b))$ bits of communication if there exists an $m$-query adaptive matrix-vector multiplication algorithm for positive semidefinite trace estimation achieving error $(1\pm 1/\sqrt{n})$.

	Specifically, Alice takes charge of running the query algorithm. To compute $\mA\vr$ for a vector $\vr$, Alice and Bob first need to compute $\mZ \vr$. To do so, Alice sends $\vr$ to Bob, which takes $O(\sqrt{n}\cdot b)$ bits since $\vr$ has entries bounded by $2^b$. Bob then computes $\mT \vr$, which has entries bounded by $\sqrt{n}2^b$. He sends the result to Alice, using $O(\sqrt{n}(b+ \log n))$ bits. Upon receiving $\mT \vr$, Alice computes $\mZ \vr = \mS \vr + \mT \vr$. Next, they need to multiply $\mZ\vr$ by $\mZ^T$ to obtain $\mA\vr = \mZ^T\mZ\vr$. To do so, Alice sends $\mZ \vr$ to Bob (again using $O(\sqrt{n}(b+ \log n))$ bits) who computes $\mT^T \mZ\vr$. The entries in this vector are bounded by $2n2^b$, so Bob sends the result back to Alice using $O(\sqrt{n}(b + \log n))$ bits. Finally, Alice computes $\mS^T \mZ\vr$ and adds the result to $\mT^T \mZ\vr$ to obtain $\mZ^T \mZ\vr = \mA \vr$. Given this result, Alice chooses the next query vector according to the algorithm and repeats.
	
	 Overall, running the full matrix-vector query algorithm requires $O(m\cdot \sqrt{n}(\log n + b))$ bits of communication. So, from  \lemmaref{gh_lowerbound} we have that $m = \Omega(\sqrt{n}/(\log n + b))$ queries are needed to approximate the trace to accuracy $1 \pm \epsilon$ for $\epsilon = \nicefrac{1}{\sqrt{n}}$, with probability $> \nicefrac23$.
\end{proof}


\section{Variance Analysis}
\label{sec:variance-analysis}
In this section, we bound the variance of a version of the \hutchpp estimator (\algorithmref{hutchpp-var}), which involves Gaussian random vectors. While the high-probability bounds of \theoremref{intro_thm} and  \theoremref{general-hutchpp-qr} hold for this version of the algorithm, the variance bounds have the advantage of involving (small) explicit constants. They be used to obtain high probability bounds with similarly explicit constants via Chebyshev's inequality, albeit with a worse $\delta$ dependence than \theoremref{intro_thm} and  \theoremref{general-hutchpp-qr}.

\begin{algorithm}[H]
	\caption{\hutchppgauss (Gaussian Variant of \hutchpp)}
	\label{alg:hutchpp-var}
	{\bfseries input}: Matrix-vector multiplication oracle for matrix, \(\mA\in\bbR^{d \times d}\). Number of queries, \(m\). \\
	{\bfseries output}: Approximation to \(\tr(\mA)\).\\
	\vspace{-1em}
	\begin{algorithmic}[1]
		\STATE Sample \(\mS\in\bbR^{d \times \frac{m+2}{4}}\) with \iid \(\cN(0,1)\) entries and \(\mG\in\bbR^{d \times \frac{m-2}{2}}\) with \iid \(\{+1,-1\}\) entries.
		\STATE Compute an orthonormal basis \(\mQ \in \bbR^{d \times \frac{m+2}{4}}\) for the span of \(\mA\mS\) (e.g., via QR decomposition).
		\vspace{-1em}
		\STATE {\bfseries return} \(\hutchppgauss(\mA) = \tr(\mQ^T\mA\mQ) + \frac2{m-2} \tr(\mG^T(\eye-\mQ\mQ^T)\mA(\eye-\mQ\mQ^T)\mG)\).
	\end{algorithmic}
\end{algorithm}
The only difference between \algorithmref{hutchpp-var} and \algorithmref{hutchpp-qr} is that \mS is now Gaussian, and constants are set slightly differently (to minimize variance). Our main result follows:

\begin{theorem}
\label{thm:hutchpp-var}
If \algorithmref{hutchpp-var} is implemented with \(m\) queries, then for PSD \mA,
\[
	\E[\hutchppgauss(\mA)]=\tr(\mA) \hspace{1cm}\text{and}\hspace{1cm} \Var[\hutchppgauss(\mA)] \leq \tsfrac{16}{(m-2)^2}\tr^2(\mA)
\]
\end{theorem}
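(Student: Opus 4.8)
The plan is to condition on the random sketch $\mS$ (equivalently, on the computed orthonormal basis $\mQ$), then handle the residual randomness in $\mG$. For unbiasedness: since $\mG$ is independent of $\mS$, and the Rademacher quadratic form is an unbiased trace estimator, conditioning gives $\E[\hutchppgauss(\mA)\mid\mQ] = \tr(\mQ^T\mA\mQ) + \tr\bigl((\eye-\mQ\mQ^T)\mA(\eye-\mQ\mQ^T)\bigr)$. Writing $\bs{\Delta}\defeq(\eye-\mQ\mQ^T)\mA(\eye-\mQ\mQ^T)$ and using the cyclic property of the trace together with idempotence of $\eye-\mQ\mQ^T$, this equals $\tr(\mA\mQ\mQ^T) + \tr(\mA(\eye-\mQ\mQ^T)) = \tr(\mA)$ (this is exactly the decomposition from the proof of \theoremref{intro_thm}). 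Hence $\E[\hutchppgauss(\mA)\mid\mQ]$ is the \emph{deterministic} quantity $\tr(\mA)$, so the tower rule gives $\E[\hutchppgauss(\mA)]=\tr(\mA)$, and — crucially — the law of total variance collapses to $\Var[\hutchppgauss(\mA)] = \E_{\mS}\bigl[\Var[\hutchppgauss(\mA)\mid\mQ]\bigr]$, the "between-group" term being zero.

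Next I would bound the conditional variance. Given $\mQ$, the summand $\tr(\mQ^T\mA\mQ)$ is constant, so $\Var[\hutchppgauss(\mA)\mid\mQ]$ is precisely the variance of the $\ell$-sample Rademacher Hutchinson estimator of $\tr(\bs{\Delta})$, where $\ell = \tfrac{m-2}{2}$ and the normalizing factor $\tfrac{2}{m-2}=\tfrac1\ell$ makes it an average. The classical single-vector variance identity for symmetric $\bs{\Delta}$ and Rademacher $\vg$ gives $\Var[\vg^T\bs{\Delta}\vg] \le 2\normof{\bs{\Delta}}_F^2$, so by independence of the columns of $\mG$, $\Var[\hutchppgauss(\mA)\mid\mQ] \le \tfrac{2}{\ell}\normof{\bs{\Delta}}_F^2 = \tfrac{4}{m-2}\normof{\bs{\Delta}}_F^2$. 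Because $\eye-\mQ\mQ^T$ is an orthogonal projection, right-multiplying by it cannot increase Frobenius norm, so $\normof{\bs{\Delta}}_F \le \normof{(\eye-\mQ\mQ^T)\mA}_F = \normof{\mA - \mA\mQ\mQ^T}_F$, the last step using symmetry of $\mA$.

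It remains to take the expectation over the Gaussian sketch $\mS$, which has $k=\tfrac{m+2}{4}$ columns and produces $\mQ$ as an orthonormal basis of $\mA\mS$ — exactly the randomized range finder. Here I would invoke a standard expectation bound for Gaussian-sketch low-rank approximation: for any target rank $r$ with oversampling $p=k-r\ge 2$, $\E_{\mS}\normof{\mA-\mQ\mQ^T\mA}_F^2 \le \bigl(1+\tfrac{r}{k-r-1}\bigr)\normof{\mA-\mA_r}_F^2$. Choosing $r=\tfrac{m-2}{8}$ makes $k-r-1=r$, so the prefactor is exactly $2$. Combining with the \emph{sharp} tail bound $\normof{\mA-\mA_r}_F^2 \le \tfrac{1}{4r}\tr^2(\mA)$ — which is the factor-of-two tightening of \lemmaref{low-rank-frob-trace} noted just after its proof (use $\lambda_{r+1}\le\tfrac1r\sum_{i\le r}\lambda_i$ and $\sum_{i>r}\lambda_i^2\le\lambda_{r+1}\sum_{i>r}\lambda_i$, then AM–GM on the head sum $\sum_{i\le r}\lambda_i$ and tail sum $\sum_{i>r}\lambda_i$) — we get $\E_{\mS}\normof{\mA-\mA\mQ\mQ^T}_F^2 \le \tfrac{1}{2r}\tr^2(\mA) = \tfrac{4}{m-2}\tr^2(\mA)$. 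Chaining the two displays yields $\Var[\hutchppgauss(\mA)] \le \tfrac{4}{m-2}\cdot\tfrac{4}{m-2}\tr^2(\mA) = \tfrac{16}{(m-2)^2}\tr^2(\mA)$.

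The only delicate point — and hence the main obstacle — is making every constant land exactly on $\tfrac{16}{(m-2)^2}$: this requires the squared ($\normof{\cdot}_F^2$) form of the Gaussian range-finder bound rather than the more commonly quoted $\normof{\cdot}_F$ form, the hidden factor-of-four coming from the sharp tail bound $\normof{\mA-\mA_r}_F^2\le\tfrac1{4r}\tr^2(\mA)$, and the balanced internal choice $r=\tfrac{m-2}{8}$ matched to the algorithm's column counts $\tfrac{m+2}{4}$ and $\tfrac{m-2}{2}$. One also checks that for the relevant $m$ these are positive integers with oversampling $p=k-r\ge 2$ (and that degenerate low-rank cases only make $\bs{\Delta}$, hence the variance, smaller). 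Everything else is routine given the lemmas already in the excerpt.
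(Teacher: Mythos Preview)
Your proposal is correct and follows essentially the same route as the paper: condition on $\mQ$ for unbiasedness, collapse the law of total variance, bound the conditional variance by $\tfrac{2}{\ell}\normof{\bs{\Delta}}_F^2$, drop one projection factor, apply the Halko--Martinsson--Tropp expectation bound (\importedtheoremref{projection-error}) with the balanced choice $p-1=k$ (your $r=\tfrac{m-2}{8}$), and finish with the sharp $\tfrac{1}{4k}$ tail bound (\importedlemmaref{l2-l1-l0-sharp}). If anything you are slightly more careful than the paper in noting that $\mG$ is Rademacher, so the $\tfrac{2}{\ell}\normof{\cdot}_F^2$ variance is an inequality rather than the Gaussian equality cited there.
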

Before stating the proof, we import three theorems. The first is on the variance of Hutchinson's estimator implemented with Gaussians, and is easy to derive directly:
\begin{importedlemma}[Lemma 1 from \cite{AvronToledo:2011}]
\label{implem:hutch-expect-var}
Hutchinson's estimator implemented with Gaussian random entries has \(\E[H_\ell(\mA)]=\trace(\mA)\) and \(\Var[H_\ell(\mA)] = \frac2\ell \normof{\mA}_F^2\).
\end{importedlemma}
We also require a result on the \textit{expected error} of a randomized low-rank approximation.
In contrast, the proof of \theoremref{intro_thm} uses a \textit{high-probability} result.
Note that the following result is why we use Gaussian random vectors instead of random sign bits.
\begin{importedtheorem}[Theorem 10.5 from \cite{DBLP:journals/siamrev/HalkoMT11}]
\label{impthm:projection-error}
Fix target rank \(k \geq 2\) and oversampling parameter \(p \geq 2\).
Let \(\mS\in\bbR^{d \times (k+p)}\) with \iid \(\cN(0,1)\) entries, and let \(\mQ\in\bbR^{d \times (k+p)}\) be an orthonormal span the columns of \(\mA\mS\).
Then,
\[
	\E\left[\normof{(\eye-\mQ\mQ^T)\mA}^2_F\right]
	\leq
	(1+\tsfrac{k}{p-1})\normof{\mA-\mA_k}_F^2
\]
\end{importedtheorem}
Finally, we state a strengthening of \lemmaref{low-rank-frob-trace}.

\begin{importedlemma}[Lemma 7 from \cite{DBLP:conf/stoc/GilbertSTV07}]
\label{implem:l2-l1-l0-sharp}
Let \(\mA_k\) be the best rank-k approximation to PSD matrix \mA.
Then \(\normof{\mA-\mA_k}_F \leq \frac{1}{2\sqrt k} \trace(\mA)\).
\end{importedlemma}

\begin{proof}[Proof of \theoremref{hutchpp-var}\hspace{-0.25em}]
%
		
Let $q$ be the number of columns in $\mS$ and $\ell$ be the number of columns in $\mG$, constants that will be chosen shortly. Note that \algorithmref{hutchpp-var} uses $m = 2q + \ell$ matrix vector multiplications. 
%
For notational simplicity, let \(\tilde{\tr}(\mA)\defeq\hutchppgauss(\mA)\). We have \(\tilde{\tr}(\mA) = \tr(\mQ^T\mA\mQ) + \hutch_\ell( (\eye-\mQ\mQ^T)\mA(\eye-\mQ\mQ^T))\).

We first prove the unbiased expectation. Note that it suffices to prove that, for any \emph{fixed  $\mQ$}, $\E[\tilde{\trace}(\mA)|\mQ] = \tr(\mA)$. This follows from  cyclic property of trace, the fact that \(\eye-\mQ\mQ^T\) is idempotent, and \importedlemmaref{hutch-expect-var}:
\begin{align*}
	\E[\tilde{\trace}(\mA)|\mQ] & = 	\E[\tr(\mQ^T\mA\mQ) + \hutch_\ell( (\eye-\mQ\mQ^T)\mA(\eye-\mQ\mQ^T)) |\mQ] \\
	&= \trace(\mQ^T\mA\mQ)+ \E[H_\ell((\eye-\mQ\mQ^T)\mA(\eye-\mQ\mQ^T))|\mQ] \\
	&=  \trace(\mQ^T\mA\mQ) + \tr(\eye-\mQ\mQ^T)\mA(\eye-\mQ\mQ^T)) \\
	&= \trace(\mA)
\end{align*}
Then, to bound the variance, we appeal to the Law of Total Variance:
\begin{align}
	\label{eq:total-variance}
	\Var[\tilde{\trace}(\mA)]
	= \E[\Var[\tilde{\trace}(\mA) | \mQ]] + \Var[\E[\tilde{\trace}(\mA)|\mQ]]
\end{align}
The second term is always zero because, as shown above, we always have that $\E[\tilde{\trace}(\mA)|\mQ] = \tr(\mA)$.
So, we only have to bound the first term in \equationref{total-variance}:
\begin{align*}
	\Var[\tilde{\trace}(\mA)]
	= \E[\Var[\tilde{\trace}(\mA) | \mQ]] 
	&= \E\left[\Var[\trace(\mQ^T\mA\mQ) + H_\ell((\eye-\mQ\mQ^T)\mA(\eye-\mQ\mQ^T)) | \mQ]\right] \\
	&= \E\left[\Var[H_\ell((\eye-\mQ\mQ^T)\mA(\eye-\mQ\mQ^T)) | \mQ]\right] \\
	&= \E\left[\tsfrac2\ell \normof{(\eye-\mQ\mQ^T)\mA(\eye-\mQ\mQ^T)}_F^2\right] \tag{\importedlemmaref{hutch-expect-var}}\\
	&\leq \tsfrac2\ell \E\left[\normof{(\eye-\mQ\mQ^T)\mA}_F^2\right] \tag{submultiplicativity} \\
	&\leq \tsfrac2\ell (1+\tsfrac{k}{p-1})\normof{\mA-\mA_k}_F^2 \tag{\importedtheoremref{projection-error}} \\
	&\leq \tsfrac2\ell (1+\tsfrac{k}{p-1})\tsfrac{1}{4k}\trace^2(\mA). \tag{\importedlemmaref{l2-l1-l0-sharp}}
\end{align*}
Following \importedtheoremref{projection-error}, \(k,p \geq 2\) are any values satisfying \(q = k+p\), and the bound is minimized when \(p-1 = k\).
This yields a variance bound of \(\frac{1}{\ell k}\trace^2(\mA)\).
Under the constraint \(m = 2q + \ell\), where \(q = 2k+1\), \(\frac{1}{\ell k}\) is minimized by setting \(k = \frac{m-4}{8}\) and \(\ell = m - \frac{m+4}{2}\), which yields a bound of \(\frac{1}{\ell k} = \frac{16}{(m-2)^2}\).
\end{proof}
Above, only \importedlemmaref{l2-l1-l0-sharp} uses the fact that \mA is PSD.
Furthermore, the proof of \importedlemmaref{l2-l1-l0-sharp} in \cite{DBLP:conf/stoc/GilbertSTV07} actually implies that  \(\normof{\mA-\mA_k}_F \leq \frac{1}{2\sqrt k} \normof{\mA}_*\) for any \mA, where \(\normof{\cdot}_*\) is the nuclear norm.
By repeating the above analysis, we have the following:
\begin{lemma}
\label{lem:non-psd-hutchpp-var}
For any \mA, \algorithmref{hutchpp-var} has
\(
	\E[\hutchppgauss(\mA)] = \trace(\mA)
\)
as well as
\[
	\Var[\hutchpp(\mA)] \leq \tsfrac{8}{m-2}\normof{\mA-\mA_k}_F^2 \leq \tsfrac{16}{(m-2)^2}\normof{\mA}_*^2
\]
where \(k = \frac{m-2}{8}-1\).
\end{lemma}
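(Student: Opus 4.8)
The plan is to rerun the proof of \theoremref{hutchpp-var} essentially line for line, replacing the single step that used positive semidefiniteness. In that proof PSD-ness entered only through \importedlemmaref{l2-l1-l0-sharp}, invoked in the form $\normof{\mA-\mA_k}_F \le \tsfrac{1}{2\sqrt k}\trace(\mA)$. I would first record the general-matrix version: with $\sigma_1 \ge \sigma_2 \ge \cdots$ the singular values of $\mA$ and $a = \sum_{i\le k}\sigma_i$, $b = \sum_{i>k}\sigma_i$, one has $\normof{\mA-\mA_k}_F^2 = \sum_{i>k}\sigma_i^2 \le \sigma_{k+1} b \le \tsfrac{a}{k}b \le \tsfrac{(a+b)^2}{4k} = \tsfrac{1}{4k}\normof{\mA}_*^2$. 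This is exactly the argument behind \lemmaref{low-rank-frob-trace} (in its sharpened form) carried out with singular values in place of eigenvalues, and it collapses to $\normof{\mA-\mA_k}_F \le \tsfrac{1}{2\sqrt k}\trace(\mA)$ when $\mA$ is PSD. This is the only new ingredient.

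Next I would observe that everything in the proof of \theoremref{hutchpp-var} before that final inequality is oblivious to the PSD assumption. Conditioning on $\mQ$, the cyclic property of the trace, the idempotence of $\eye-\mQ\mQ^T$, and unbiasedness of the Hutchinson block (which holds for any mean-zero, variance-one entries, hence for the $\{+1,-1\}$ matrix $\mG$) give $\E[\hutchppgauss(\mA)\mid\mQ] = \trace(\mA)$; hence $\E[\hutchppgauss(\mA)] = \trace(\mA)$, and the $\Var[\E[\cdot\mid\mQ]]$ term of the law of total variance vanishes. For the remaining term, the same manipulations as in \theoremref{hutchpp-var} yield
\[
\Var[\hutchppgauss(\mA)] = \E\!\left[\tsfrac{2}{\ell}\normof{(\eye-\mQ\mQ^T)\mA(\eye-\mQ\mQ^T)}_F^2\right] \le \tsfrac{2}{\ell}\,\E\!\left[\normof{(\eye-\mQ\mQ^T)\mA}_F^2\right] \le \tsfrac{2}{\ell}\bigl(1+\tsfrac{k}{p-1}\bigr)\normof{\mA-\mA_k}_F^2 ,
\]
using submultiplicativity of the Frobenius norm and \importedtheoremref{projection-error}, where $\ell$ is the number of columns of $\mG$ and $q = k+p$ the number of columns of $\mS$. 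The one point worth checking is that \importedlemmaref{hutch-expect-var}'s Gaussian variance formula $\tsfrac{2}{\ell}\normof{\cdot}_F^2$ remains a valid upper bound when the Hutchinson vectors are $\{+1,-1\}$ and, in the non-PSD case, when the argument matrix need not be symmetric; it does, and this is already how the proof of \theoremref{hutchpp-var} proceeds.

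Finally, taking the optimal split $p-1 = k$ collapses the bound to $\Var[\hutchppgauss(\mA)] \le \tsfrac{4}{\ell}\normof{\mA-\mA_k}_F^2$, and then the nuclear-norm bound from the first step gives $\Var[\hutchppgauss(\mA)] \le \tsfrac{1}{\ell k}\normof{\mA}_*^2$. Substituting the column counts of \algorithmref{hutchpp-var} --- $q = \tsfrac{m+2}{4}$, so that $q = 2k+1$ fixes $k$, and $\ell = \tsfrac{m-2}{2}$, which satisfies $2q+\ell = m$ --- turns these into $\tsfrac{8}{m-2}\normof{\mA-\mA_k}_F^2$ and $\tsfrac{16}{(m-2)^2}\normof{\mA}_*^2$, as claimed (with $k = \tsfrac{m-2}{8}$, up to the integrality and $k,p\ge 2$ side conditions that the stated value of $k$ absorbs). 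There is no genuine obstacle here: the proof is a verbatim re-use of \theoremref{hutchpp-var} with one lemma swapped. The only care needed is the constant bookkeeping ensuring $m = 2q+\ell$ and that the displayed constants come out, confirming the nuclear-norm form of \importedlemmaref{l2-l1-l0-sharp}, and the Gaussian-versus-Rademacher (and symmetric-versus-general) remark on the Hutchinson variance.
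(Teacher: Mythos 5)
Your proposal is correct and follows essentially the same route as the paper: the paper also proves this lemma by observing that only \importedlemmaref{l2-l1-l0-sharp} used PSD-ness in the proof of \theoremref{hutchpp-var}, replacing it with the nuclear-norm version $\normof{\mA-\mA_k}_F \le \tsfrac{1}{2\sqrt{k}}\normof{\mA}_*$, and rerunning the law-of-total-variance calculation verbatim. Your extra care about the Rademacher-vs-Gaussian and non-symmetric Hutchinson variance bound, and your clean derivation $k = \tsfrac{m-2}{8}$ from $q=2k+1$ with $q=\tsfrac{m+2}{4}$, are if anything slightly more careful than the paper's own constant bookkeeping, which contains a small inconsistency in the stated value of $k$.
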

Like in \theoremref{general-hutchpp-qr}, the first inequality shows how the decay of \mA's eigenvalues impacts the variance of \(\hutchppgauss{}\), while the second inequality is a analog to the variance guarantee in \theoremref{hutchpp-var}.


\section{Experimental Validation}
\label{sec:experiments}
We complement our theory with experiments on synthetic matrices and real-world trace estimation problems. Code for \hutchpp and \hutchppna is available at \url{https://github.com/RaphaelArkadyMeyerNYU/HutchPlusPlus}. We compare these methods to four algorithms, including both our adaptive and non-adaptive methods:
\begin{itemize}
	\item\textbf{Hutchinson's.} The standard estimator run with $\{+1,-1\}$ random vectors.\vspace{-.5em}
	\item\textbf{Subspace Projection.} The method from \cite{SaibabaAlexanderianIpsen:2017}, which computes an orthogonal matrix $\mQ \in \R^{d\times k}$ that approximately spans the top eigenvector subspace of $\mA \in \R^{d\times d}$ and returns $\tr(\mQ^T\mA\mQ)$ as an approximation to $\tr(\mA)$. A similar approach is employed in \cite{Hanyu-Li:2020}. \cite{SaibabaAlexanderianIpsen:2017} computes $\mQ$ using subspace iteration, which requires $k(q+1)$ matrix-vector multiplications when run for $q$ iterations. A larger $q$ results in a more accurate $\mQ$, but requires more multiplications. As in \cite{SaibabaAlexanderianIpsen:2017}, we found that setting $q=1$ gave the best performance, so we did so in our experiments. With $q= 1$, this method is similar to \hutchpp, except that is does not approximate the remainder of the trace outside the top eigenspace.
	\vspace{-.5em}
	\item\textbf{\hutchppbold.} The adaptive method of \algorithmref{hutchpp-qr} with $\{+1,-1\}$ random vectors.\vspace{-.5em}
	\item\textbf{\hutchppnabold.} The non-adaptive method of \algorithmref{hutchpp-streaming} with $c_1=c_3 = \nicefrac14$ and $c_2 = \nicefrac12$ and $\{+1,-1\}$ random vectors.\vspace{-.5em}
\end{itemize}

\subsection{Synthetic Matrices}
We first test the methods above on random matrices with power law spectra. For varying constant $c$, we let \mLambda be diagonal with \(\mLambda_{ii} = i^{-c}\). We generate a random  orthogonal matrix \(\mQ\in\bbR^{5000 \times 5000}\) by orthogonalizing a random Gaussian matrix and set $\mA = \mQ^T \mLambda \mQ$. $\mA$'s eigenvalues are the values in \mLambda. A larger $c$ results in a more quickly decaying spectrum, so we expect Subspace Projection to perform well. A smaller  $c$ results in a slowly decaying spectrum, which will mean that $\|\mA\|_F \ll \tr(\mA)$. In this case, we expect Hutchinson's to outperform its worst case multiplicative error bound: instead of error $\pm \epsilon \tr(\mA)$ after $O(1/\epsilon^2)$ matrix-multiplication queries, \lemmaref{hutch-frob} predicts error on the order of $\pm \epsilon \|\mA\|_F$. Concretely, for dimension $d = 5000$ and $c = 2$, we have $\|\mA\|_F = .63\cdot \tr(\mA)$ and for $c = .5$ we have $\|\mA\|_F = .02\cdot \tr(\mA)$. In general, unlike the Subspace Projection method and Hutchinson's estimator, we expect \hutchpp and \hutchppna to be less sensitive to $\mA$'s spectrum. 

In \figureref{synth_fig} we plot results for various $c$. Relative error should scale roughly as $\epsilon = O(m^{-\gamma})$, where $\gamma = \nicefrac12$ for Hutchinson's and $\gamma = 1$ for \hutchpp and \hutchppna. We thus use log-log plots, where we expect a linear relationship between the error $\epsilon$ and number of iterations $m$.

The superior performance of \hutchpp and \hutchppna shown in \figureref{synth_fig} is not surprising. These methods are designed to achieve the ``best of both worlds'': when $\mA$'s spectrum decays quickly, our methods approximate $\tr(\mA)$ well by projecting off the top eigenvalues. When it decays slowly, they perform essentially no worse than Hutchinson's. 
We note that the adaptivity of \hutchpp leads to consistently better performance over \hutchppna, and the method is simpler to implement as we do not need to set the constants $c_1, c_2, c_3$. Accordingly, this is the method we move forward with in our real data experiments.

\begin{figure}[t]
	\centering
	\begin{subfigure}{0.43\columnwidth}
		\includegraphics[width=\columnwidth]{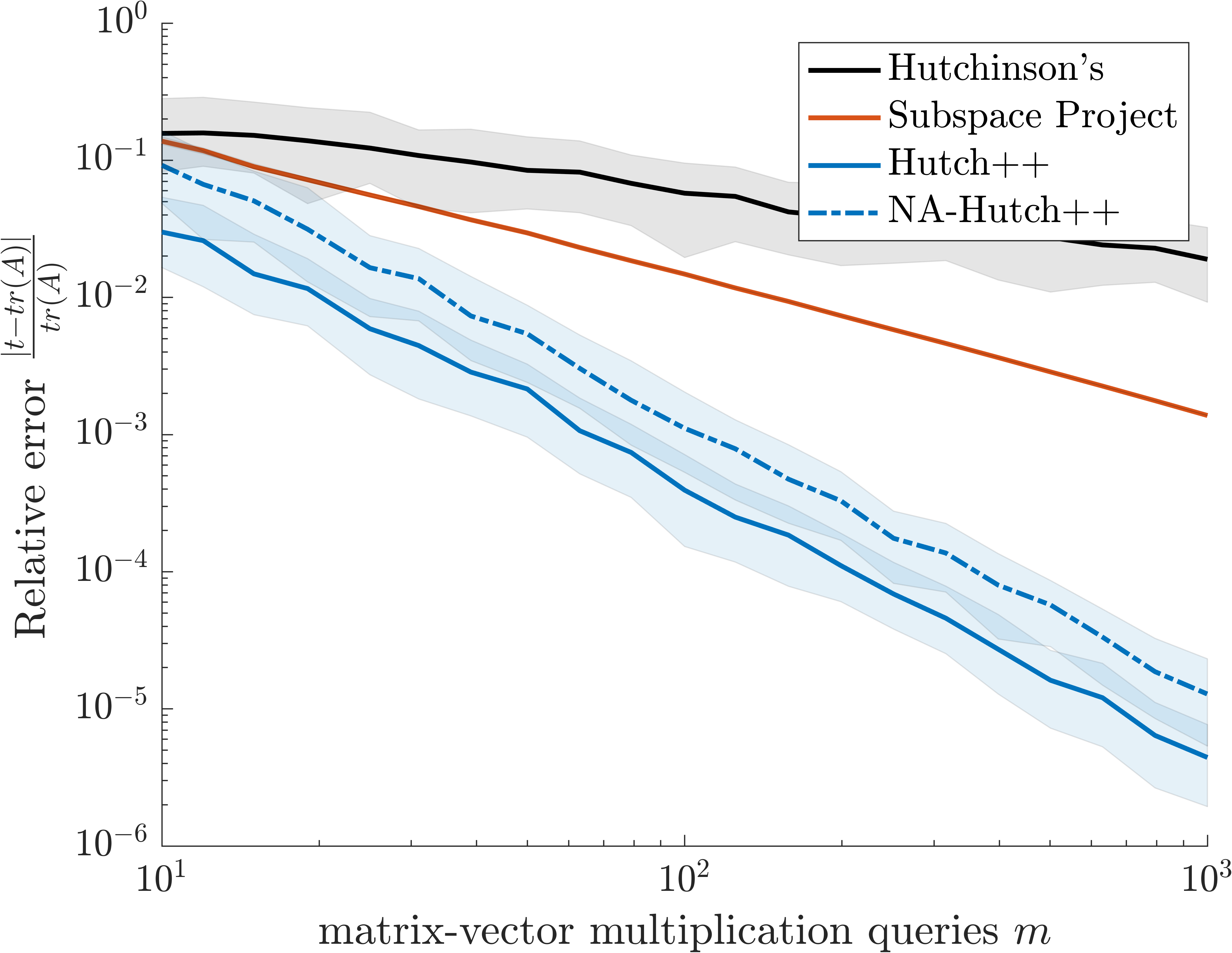}
		\caption{Fast Eigenvalue Decay ($c = 2$)}
	\end{subfigure}\hfill
	\begin{subfigure}{0.43\columnwidth}
		\includegraphics[width=\columnwidth]{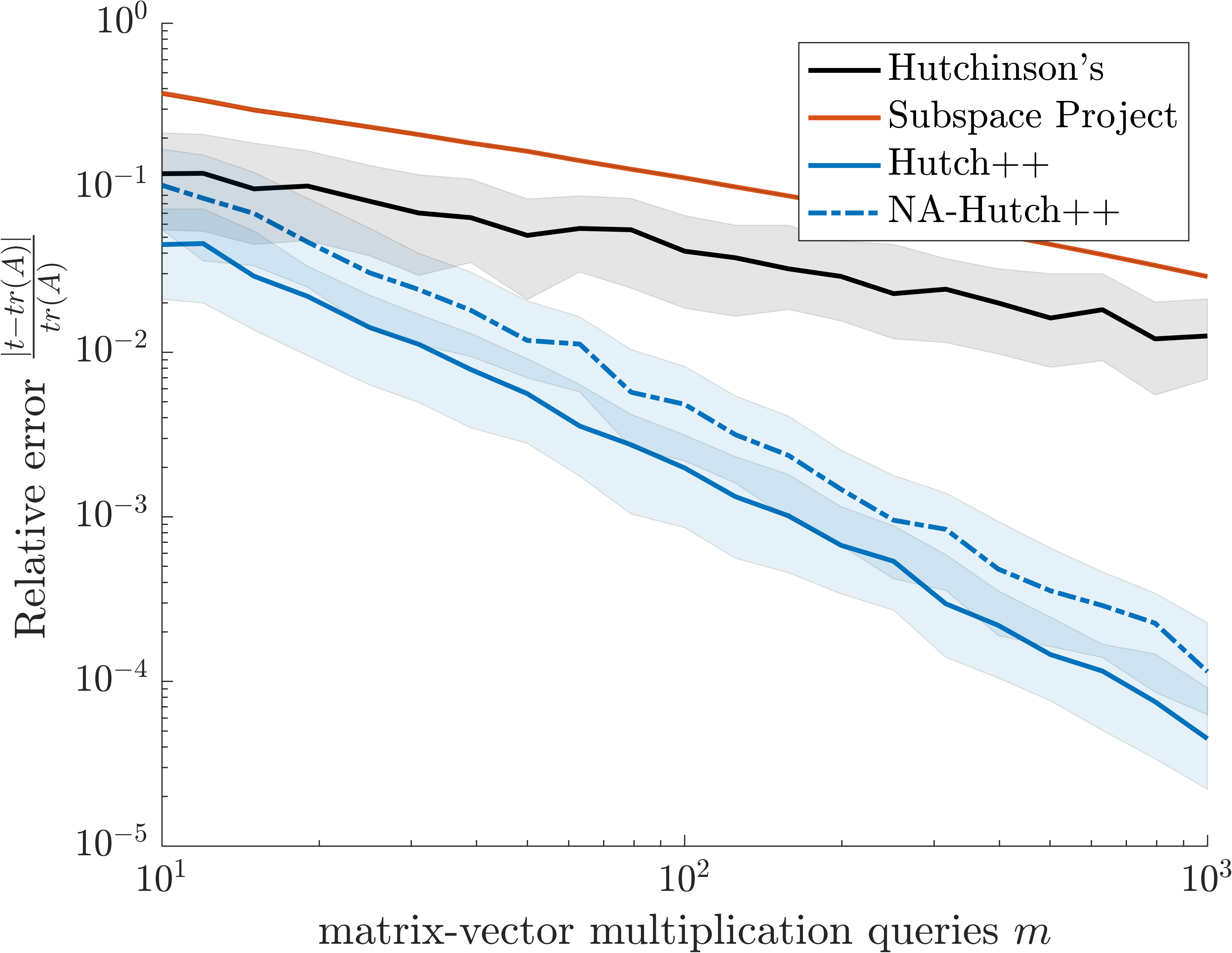}
		\caption{Medium Eigenvalue Decay ($c = 1.5$)}
	\end{subfigure}
	
	\begin{subfigure}{0.43\columnwidth}
		\includegraphics[width=\columnwidth]{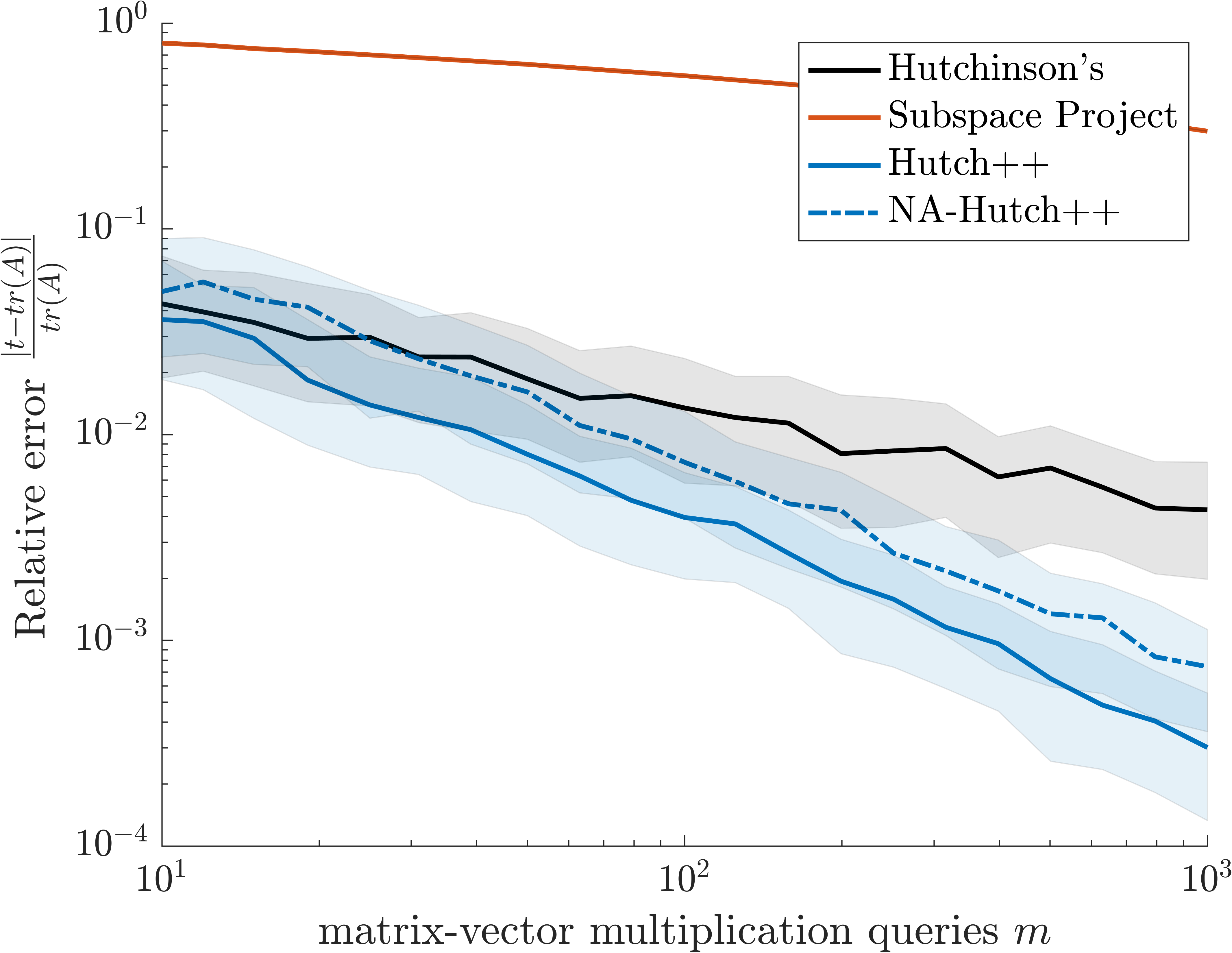}
		\caption{Slow Eigenvalue Decay ($c = 1$)}
	\end{subfigure}\hfill
	\begin{subfigure}{0.43\columnwidth}
		\includegraphics[width=\columnwidth]{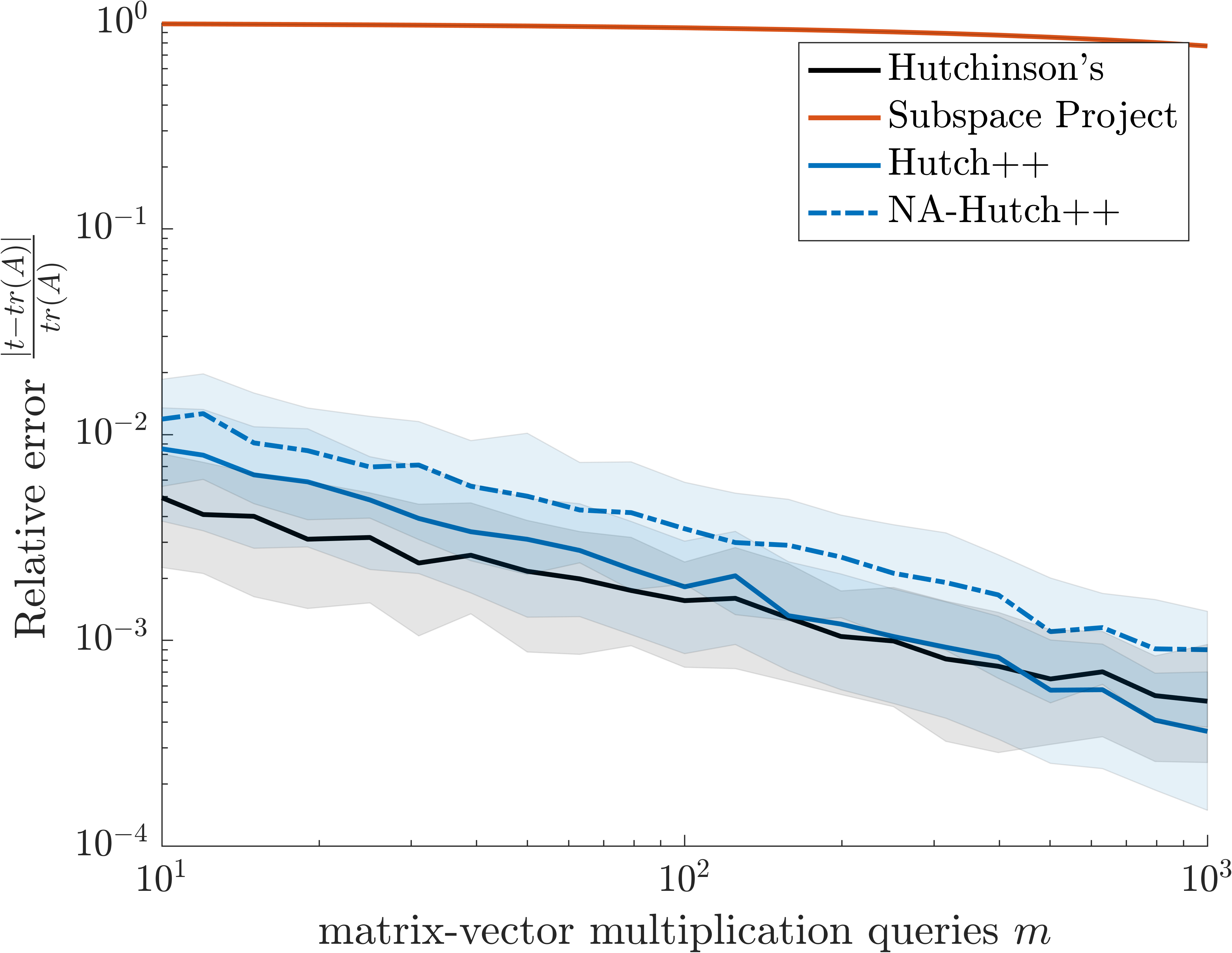}
		\caption{Very Slow Eigenvalue Decay ($c = .5)$}
	\end{subfigure}
	\caption{
		Relative error versus number of matrix-vector multiplication queries for trace approximation algorithms run on random matrices with power law spectra. We report the median relative error of the approximation $t$ after $200$ trials. The upper and lower bounds of the shaded region around each curve are the $25^\text{th}$ and $75^\text{th}$ percentile errors. Subspace Projection has consistently low variance, but as expected, only performs better than Hutchinson's when $c=2$ and there is very fast eigenvalue decay. \hutchpp and \hutchppna typically outperform both methods.
	}
	\label{fig:synth_fig}
\end{figure}

\subsection{Real Matrices}
To evaluate the real-world performance of \hutchpp we test it in the common setting where $\mA = f(\mB)$. In most applications, $\mB$ is symmetric with eigendecomposition $\mV^T\mLambda \mV$, and $f: \R\rightarrow \R$ is a function on real valued inputs. Then we have 
$f(\mB) = \mV^Tf(\mLambda) \mV$ where $f(\mLambda)$ is simply $f$ applied to the real-valued eigenvalues on the diagonal of $\mLambda$. When $f$ returns negative values, $\mA$ may not be postive semidefinite. Generally, computing $f(\mB)$ explicitly requires a full eigendecomposition and thus $\Omega(n^3)$ time. However, many iterative methods can more quickly approximate matrix-vector queries of the form $\mA \vr = f(\mB) \vr$. The most popular and general is the Lanczos method, which we employ in our experiments \cite{UbaruChenSaad:2017,MuscoMuscoSidford:2018}.\footnote{We use the implementation of Lanczos available at \href{https://github.com/cpmusco/fast-pcr}{https://github.com/cpmusco/fast-pcr}, but modified to block matrix-vector multiplies when run on multiple query vectors.}

We consider trace estimation in three example applications, involving both PSD and non-PSD matrices. We test on relatively small inputs, for which we can explicitly compute $\tr(f(\mB))$ to use as a baseline for the approximation error. However, our methods can scale to much larger matrices.

\begin{figure}[h]
	\centering
	\begin{subfigure}{0.45\columnwidth}
		\includegraphics[width=\columnwidth]{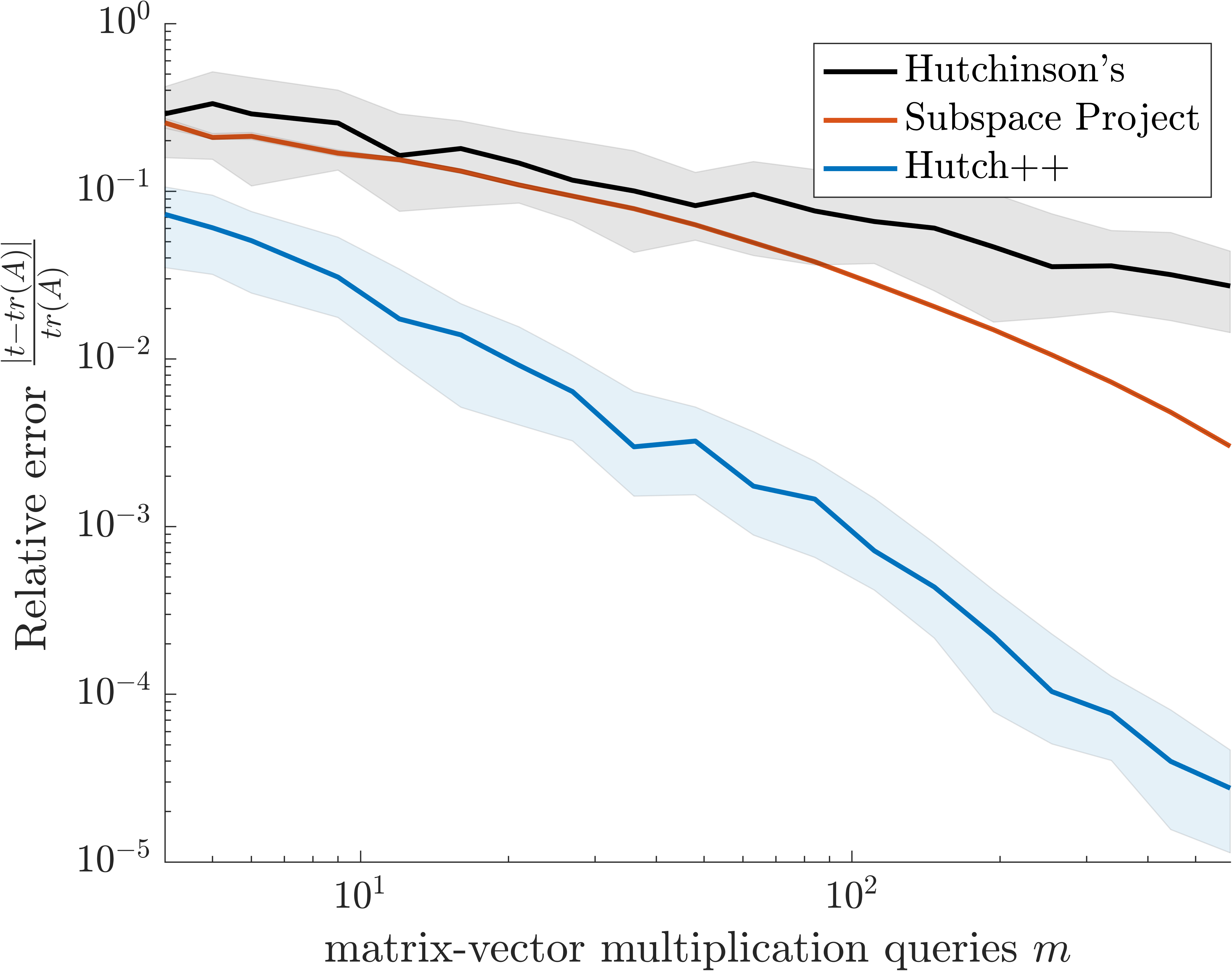}
		\caption{$\mA = \exp(\mB)$, where $\mB$ is the Roget's Thesaurus semantic graph adjacency matrix. For use in Estrada index computation. $\mA$ is PSD.}
	\end{subfigure}\hfill
	\begin{subfigure}{0.45\columnwidth}
		\includegraphics[width=\columnwidth]{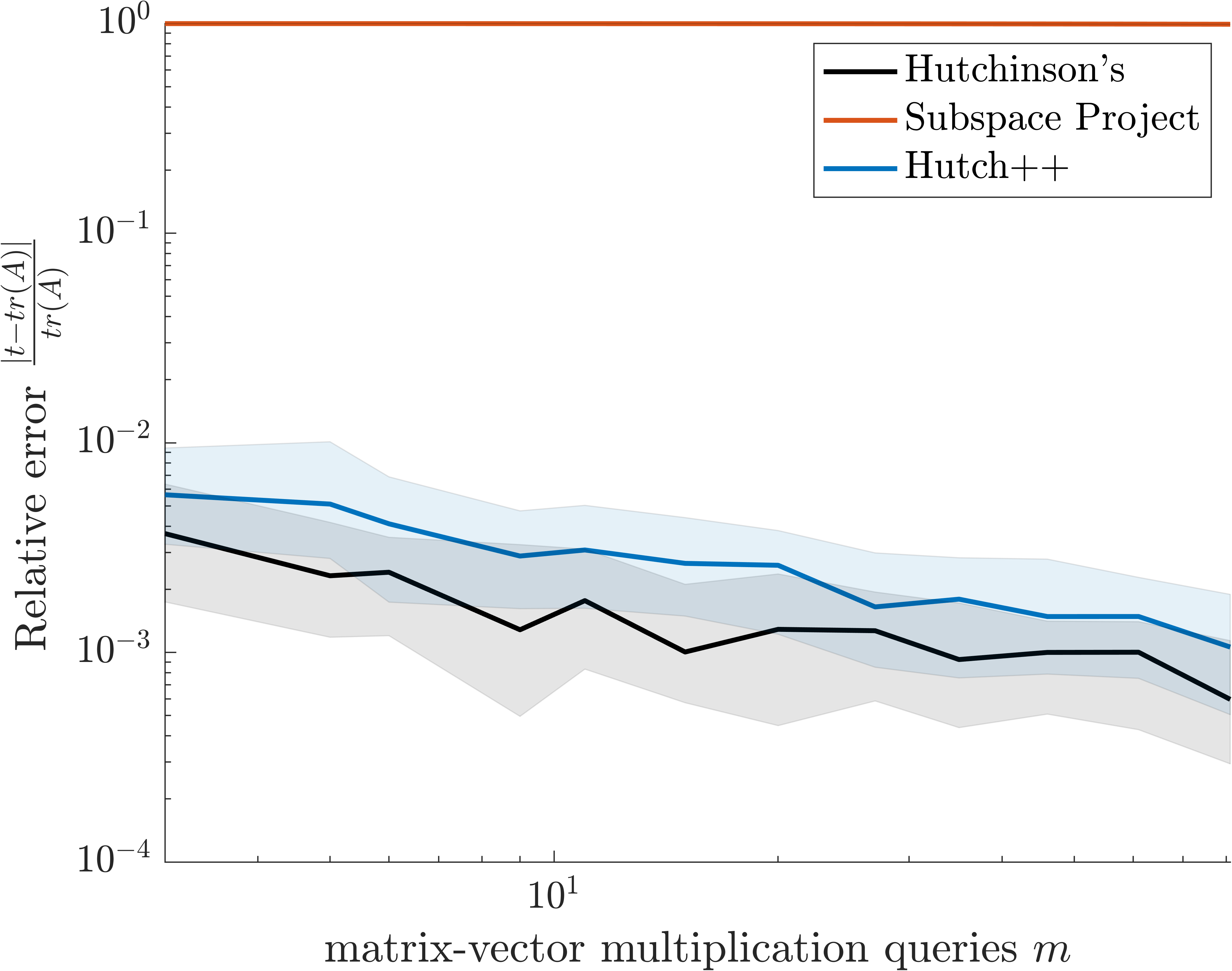}
		\caption{$\mA = \log(\mB + \lambda \eye)$, where $\mB$ is a 2D Gaussian process kernel covariance matrix. For use in log-likelihood computation.  $\mA$ is not PSD.}
	\end{subfigure}
	
	\caption{
		Relative error versus number of matrix-vector multiplication queries for trace approximations of transformed matrices, which were multiplied by vectors using the Lanczos method. We report median relative error of the approximation $t$ after $100$ trials. The upper and lower bounds of the shaded region around each curve are the $25^\text{th}$ and $75^\text{th}$ percentile errors. As expected, Subspace Project and \hutchpp outperform Hutchinson's when $\mA = \exp(\mB)$, as exponentiating leads to a quickly decaying spectrum. On the other hand, Hutchinson's performs well for $\mA = \log(\mB + \lambda \eye)$, which has a very flat spectrum. \hutchpp is still essentially as fast. Subspace Project fails in this case because the top eigenvalues of $\mA$ do not dominate its trace.
	}
	\label{fig:exp_fig}
\end{figure}
\medskip\noindent\textbf{Graph Estrada Index.} Given the binary adjacency matrix $\mB \in \{0,1\}^{d \times d}$ of a graph $G$, the Estrada index is defined as $\tr(\exp(\mB))$ \cite{estrada2000characterization,de2007estimating}, where $\exp(x) = e^x$. This index measures the strength of connectivity within $G$. A simple transformation of the Estrada index yields the \emph{natural connectivity} metric,  defined as $\log \left (\frac{1}{d} \tr(\exp(\mB)) \right )$ \cite{jun2010natural,EstradaHatanoBenzi:2012}. 

In our experiments, we approximated the Estrada index of the Roget's Thesaurus semantic graph, available from  \cite{BatageljMrvar:2006}. The Estrada index of this $1022$ node graph was originally studied in \cite{EstradaHatano:2008}. We use the Lanczos method to approximate matrix multiplication with $\exp(\mB)$, running it for $40$ iterations, after which the error of application was negligible compared to the approximation error of trace estimation. Results are shown in \figureref{exp_fig}.

\begin{figure}[h]
	\centering
	\begin{subfigure}{0.45\columnwidth}
		\includegraphics[width=\columnwidth]{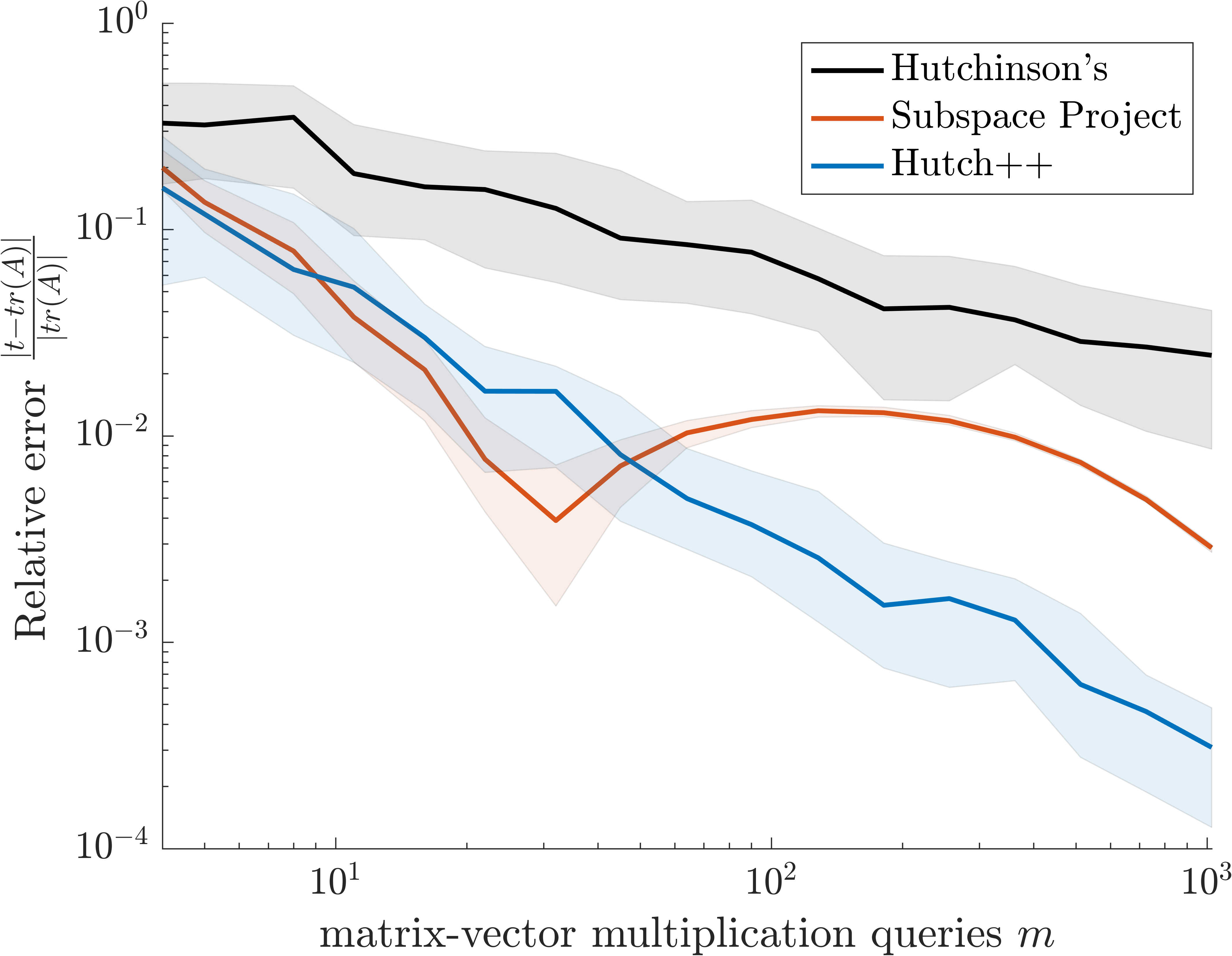}
		\caption{$\mA = \mB^3$ where $\mB$ is a Wikipedia voting network adjacency matrix. For use in triangle counting. $\mA$ is not PSD.}
	\end{subfigure}\hfill
	\begin{subfigure}{0.45\columnwidth}
		\includegraphics[width=\columnwidth]{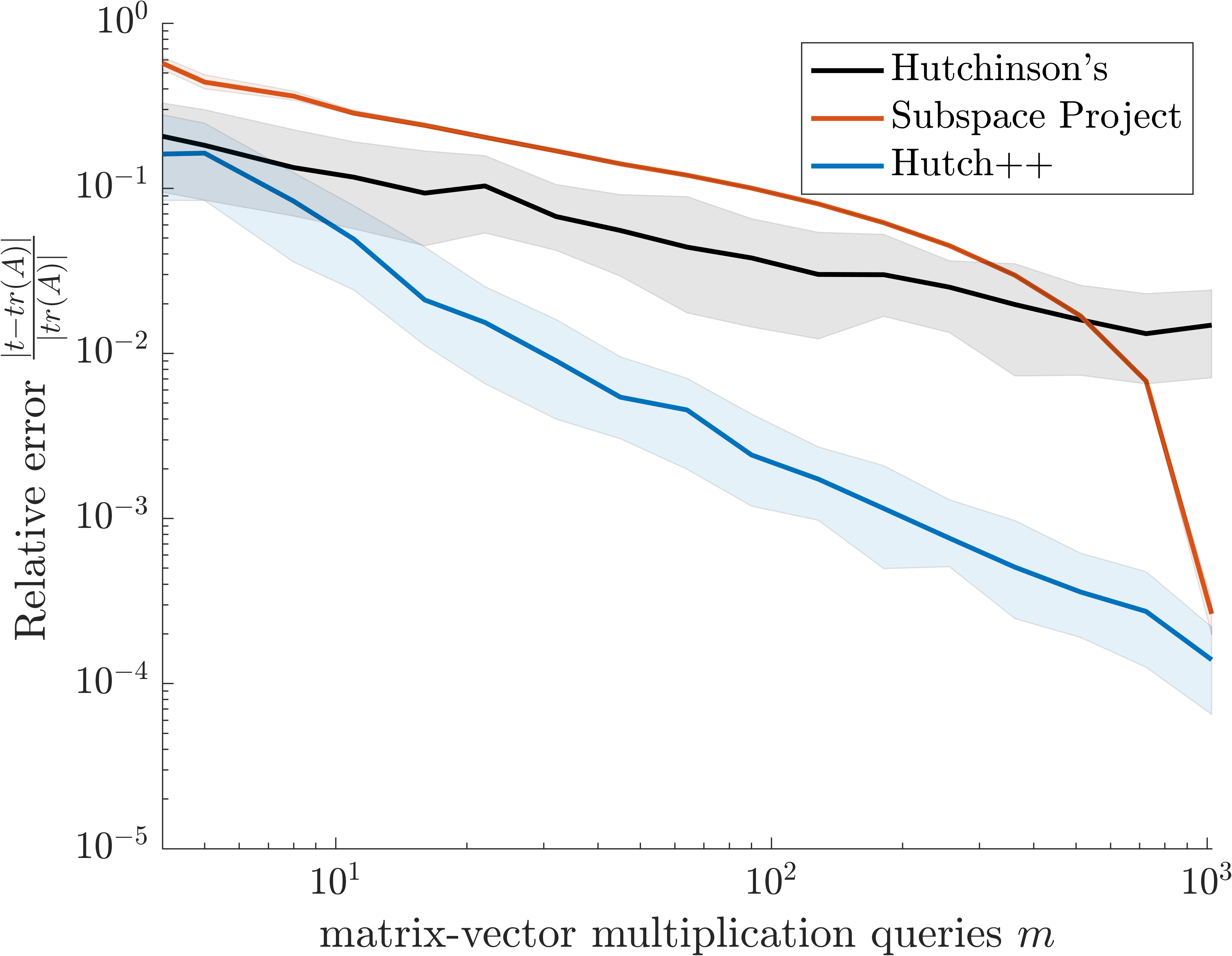}
		\caption{$\mA = \mB^3$ where $\mB$ is an arXiv.org citation network adjacency matrix. For use in triangle counting. $\mA$ is not PSD.}
	\end{subfigure}
	
	\caption{
		Relative error versus number of matrix-vector multiplication queries for trace approximations of transformed matrices. We report the median relative error of the approximation $t$ after $100$ trials. The upper and lower bounds of the shaded region around each curve are the $25^\text{th}$ and $75^\text{th}$ percentile errors.
		\hutchpp still outperforms the baseline methods even though $\mA$ is not PSD. We note that Subspace Project has somewhat uneven performance: increasing $m$ will take into account a larger number of top eigenvalues when approximating the trace. However, since these may be positive or negative, approximation error does not monotonically decrease. \hutchpp is not sensitive to this issue since it does not use \emph{just} the top eigenvalues: see \figureref{triangle__spect_fig} for more discussion.
	}
	\label{fig:triangle_fig}
\end{figure}

\medskip\noindent\textbf{Gaussian Process Log Likelihood.} Let $\mB\in\R^{d\times d}$ be a PSD kernel covariance matrix and let $\lambda \ge 0$ be a regularization parameter. In Gaussian process regression, the model log likelihood computation requires computing $\log\det(\mB + \lambda \eye) = \tr(f(\mB))$ where $f(x) = \log(x + \lambda)$ \cite{williams1996gaussian,rasmussen2004gaussian}. This quantity must be computed repeatedly for different choices of $\mB$ and $\lambda$ during hyper-parameter optimization, and it is often approximated using Hutchinson's method  \cite{BoutsidisDrineasKambadur:2015,UbaruChenSaad:2017,HanMalioutovAvron:2017,ScalableLD}. We note that, while $\mB$ is positive semidefinite, $\log(\mB + \lambda \eye)$ typically will not be. So our strongest theoretical bounds do not apply in this case, but \hutchpp can be applied unmodified, and as we see in \figureref{exp_fig}, still gives good performance.

In our experiments we consider a benchmark 2D Gaussian process regression problem from the GIS literature, involving precipitation data from Slovakia \cite{NetelerMitasova:2013}. $\mB$ is the kernel covariance matrix on $6400$ randomly selected training points out of 196,104 total points. Following the setup of \cite{erdelyi2020fourier}, we let $\mB$ be a Gaussian kernel matrix with width parameter $\gamma = 64$ and regularization parameter $\lambda = .008$, both determined via cross-validation on $\ell_2$ regression loss.

\begin{figure}[H]
	\centering
	\begin{subfigure}{0.45\columnwidth}
		\includegraphics[width=\columnwidth]{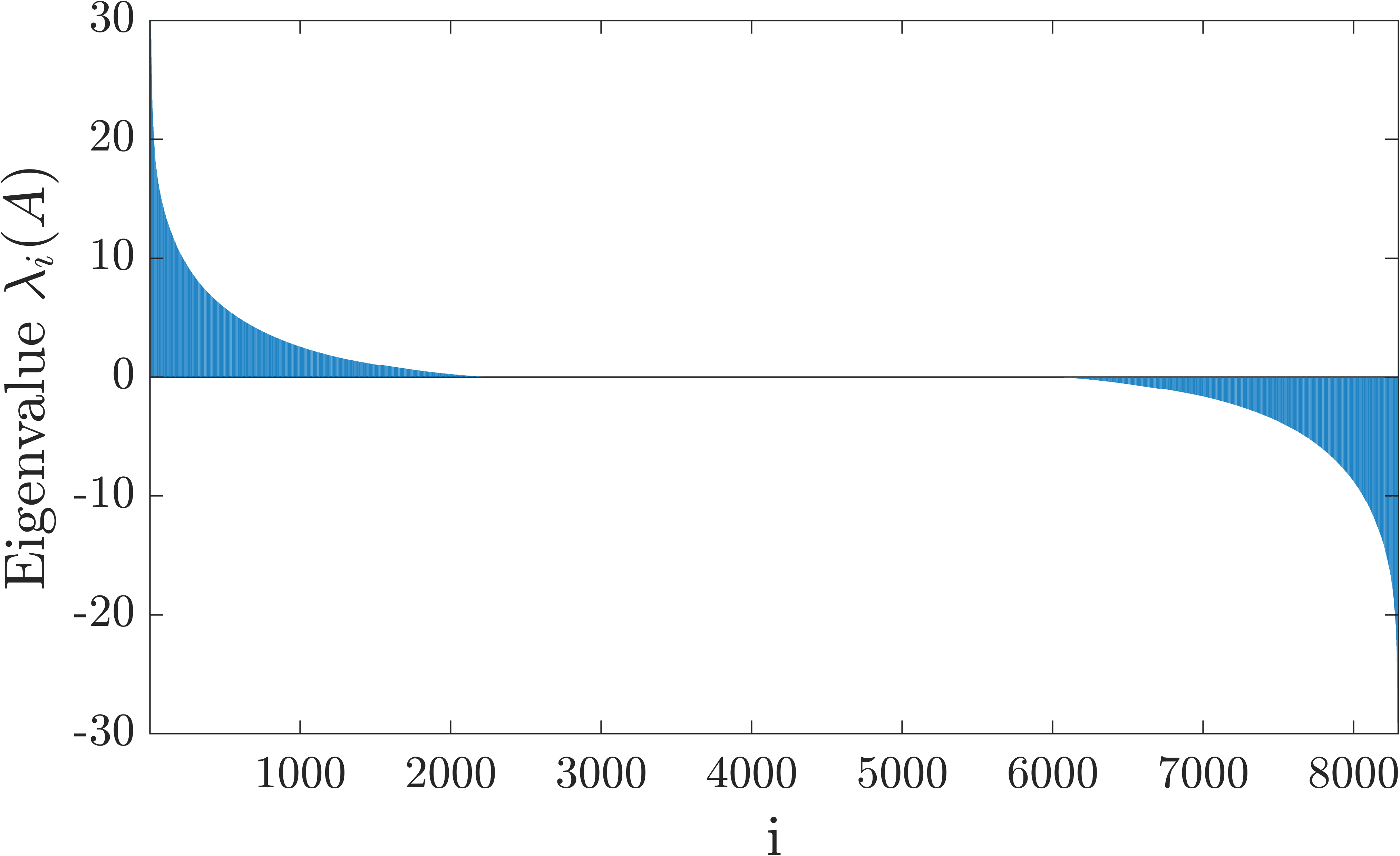}
		\caption{Eigenvalues for Wikipedia voting network adjacency matrix $\mB$.}
	\end{subfigure}\hfill
	\begin{subfigure}{0.45\columnwidth}
		\includegraphics[width=\columnwidth]{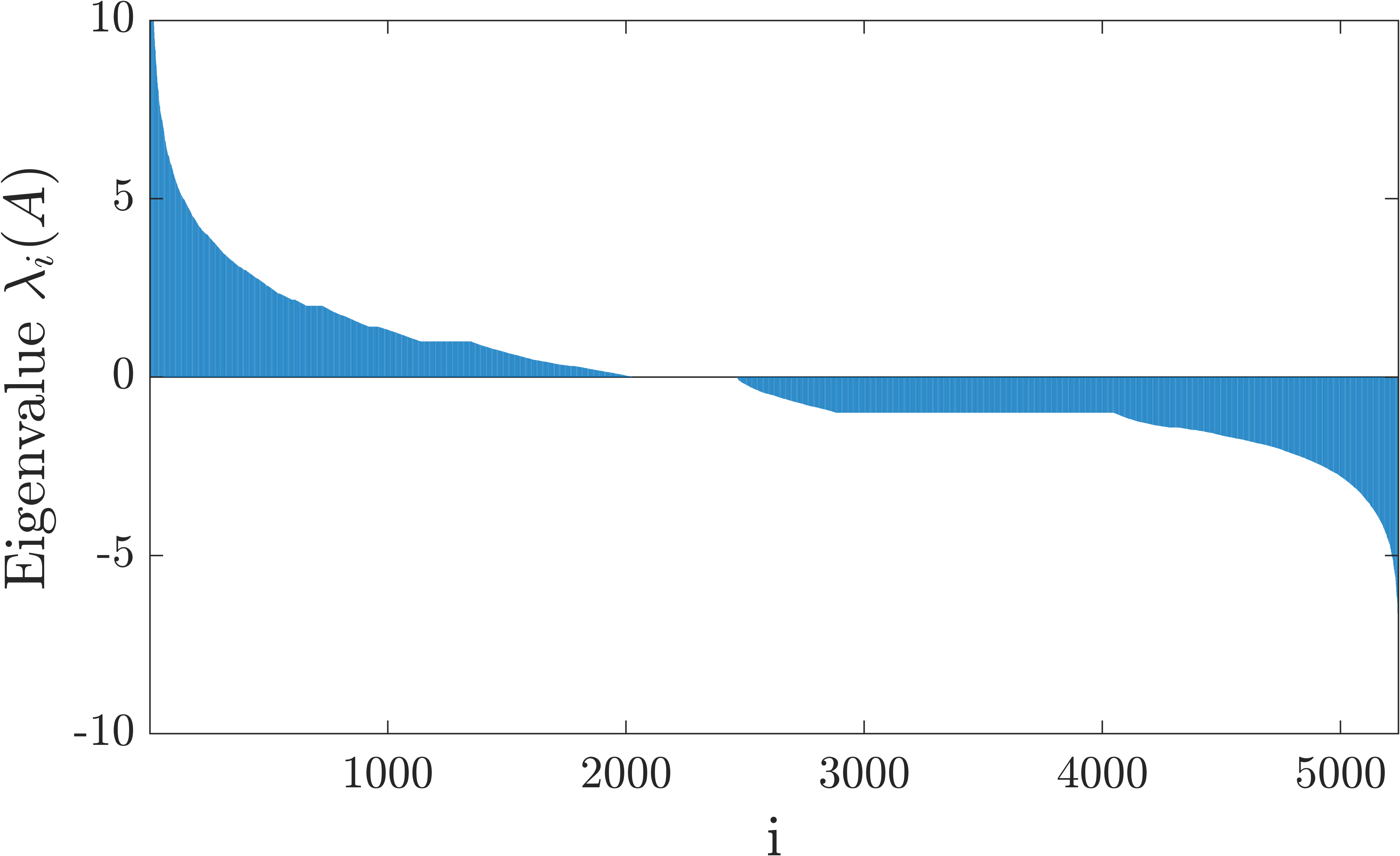}
		\caption{Eigenvalues for arXiv.org citation network adjacency matrix $\mB$.}
	\end{subfigure}
	
	\caption{
		Even when estimating the trace of a non-PSD matrix like $\mA = \mB^3$, which for the triangle counting examples above will have both positive and negative eigenvalues, \hutchpp can far outperform Hutchinson's method.
		As \theoremref{general-hutchpp-qr} and \lemmaref{non-psd-hutchpp-var} both suggest, it will approximately project off the largest \emph{magnitude} eigenvalues from $\mA$ (whether postive or negative), which then reduces the variance of estimating the trace $\tr(\mA) = \sum_{i=1}^d \lambda_i(\mB)^3$.   
	}
	\label{fig:triangle__spect_fig}
\end{figure}

\medskip\noindent\textbf{Graph Triangle Counting.} Given the binary adjacency matrix $\mB \in \{0,1\}^{d \times d}$ of an undirected graph $G$, the number of triangles in $G$ is equal to $\frac{1}{6}\tr(\mB^3)$. The triangle count is an important measure of local connectivity and extensive research studies its efficient approximation \cite{schank2005finding,becchetti2008efficient,pagh2012colorful}. Popular approaches include applying Hutchinson's method to $\mA = \mB^3$ \cite{Avron:2010}, or using the {EigenTriangle} estimator, which is similar to the Subspace Projection method \cite{tsourakakis2008fast}.

In our experiments, we study approximate triangle counting on two common benchmark graphs: an arXiv.org collaboration network\footnote{Link: \url{https://snap.stanford.edu/data/ca-GrQc.html}.} with 5,243 nodes and 48,260 triangles, and a Wikipedia administrator voting network\footnote{Link: \url{https://snap.stanford.edu/data/wiki-Vote.html}.} with 7,115 nodes and 608,389 triangles.
We again note that the adjacency matrix $\mB$ is not positive semidefinite, and neither is $\mA = \mB^3$. Nevertheless, we can apply \hutchpp and see very strong performance. In this setting we do not need to apply Lanczos for matrix-vector query computation: $\mA \vr$ can be computed exactly  using three matrix-vector multiplications with $\mB$. Results are shown \figureref{triangle_fig} with graph spectral visualized in \figureref{triangle__spect_fig}

\section*{Acknowledgments}
The authors would like to thank Joel A. Tropp for suggesting the variance analysis in \sectionref{variance-analysis}, as well as other valuable comments on the paper.
D. Woodruff would like to thank support from the National Institute of Health (NIH) grant 5R01 HG 10798-2 and a Simons Investigator Award.

\bibliographystyle{alpha}
\bibliography{hutchplusplus}

\appendix

\section{Proof of \lemmaref{hutch-frob}}
\label{app:hutch-frob}

We start by stating the Hanson-Wright inequality for i.i.d sub-Gaussian random variables:
\begin{importedtheorem}[\cite{rudelson2013hanson}]
\label{impthm:hanson-wright}
Let \(\vx\in\R^n\) be a vector of mean 0, i.i.d. sub-Gaussian random variables with constant sub-Gaussian parameter $C$.
Let \(\mA\in\bbR^{n \times n}\) be a matrix.
Then, there exists a constant \(c\) only depending on \(C\)  such that for every \(t \geq 0\),
\begin{align*}
	\Pr\left\{\abs{\vx^T\mA\vx - \E[\vx^T\mA\vx]} > t\right\}
	\leq 2 \exp\left(-c\cdot \min\left\{\frac{t^2}{\normof{\mA}_F^2} , \frac{t}{\normof{\mA}_2}\right\}\right).
\end{align*}
\end{importedtheorem}
Above, $\|\mA\|_2 = \max_x \|\mA\vx\|_2/\|\vx\|_2$ denotes the spectral norm.
We refer the reader to \cite{rudelson2013hanson} for a formal definition of sub-Gaussian random variables: both normal $\cN(0,1)$ random variables and $\pm 1$ random variables are sub-Gaussian with constant $C$.

\begin{replemma}{hutch-frob}
	Let \(\mA\in\bbR^{d \times d}\), \(\delta\in(0, \nicefrac12]\), $\ell\in\bbN$. Let  $\hutch_\ell(\mA)$ be the  \(\ell\)-query Hutchinson estimator defined in \eqref{eq:hutch}, implemented with mean 0, i.i.d. sub-Gaussian random variables with constant sub-Gaussian parameter. For fixed constants $c,C$, if $\ell > c\log(\nicefrac1\delta)$, then with prob. \(1-\delta\), 
\[
\abslr{\hutch_\ell(\mA) - \trace(\mA)} \leq C\sqrt{\frac{\log(\nicefrac1\delta)}{\ell}} \normof\mA_F.
\]
\end{replemma}

\begin{proof}
Let \(\bar\mA\in\bbR^{\ell d \times \ell d}\) be a block-diagonal matrix formed from $\ell$ repetitions of \(\mA\):
\[
	\bar\mA \defeq \bmat{
		\mA   & \mat0 & \ldots &\mat0  \\
		\mat0 & \mA   & \ldots &\mat0  \\
		\mat0 & \mat0 & \ddots &\vdots \\
		\mat0 & \mat0 & \ldots &\mA    \\
	}.
\]
Let $\mG \in \R^{d\times \ell}$ be as in \eqref{eq:hutch}. Let \(\vg_i\) be $\mG$'s \(i^\text{th}\) column and let \(\vg = [\vg_1, \ldots,\vg_\ell] \in\bbR^{d\ell}\) be a vectorization of \mG. We have that \(\ell\cdot\hutch_\ell(\mA) = \trace(\mG^T\mA\mG) = \vg^T\bar\mA\vg\).
So, by \importedtheoremref{hanson-wright},
\begin{align}
\label{eq:hanson_wright2}
	\Pr\left\{\abs{\vg^T\bar\mA\vg - \E[\vg^T\bar\mA\vg]} > t\right\}
	\leq 2 \exp\left(-c \cdot \min\left\{\frac{t^2}{\normof{\bar\mA}_F^2} , \frac{t}{\normof{\bar\mA}_2}\right\}\right).
\end{align}
We let $t' = t/\ell$, and substitute \(\E[\vg^T\bar\mA\vg]=\trace(\bar\mA)=\ell\trace(\mA)\), \(\normof{\bar\mA}_F^2 = \ell\normof{\mA}_F^2\), and \(\normof{\bar\mA}_2 = \normof{\mA}_2\) into \eqref{eq:hanson_wright2} to get:
\begin{align*}
	\Pr\left\{\abslr{\hutch_\ell(\mA) - \trace(\mA)} > t'\right\}
	\leq 2 \exp\left(-c \min\left\{\frac{\ell t'^2}{\normof{\mA}_F^2} , \frac{\ell t'}{\normof{\mA}_2}\right\}\right).
\end{align*}
Now, taking \(t' = \sqrt{\frac{\ln(\nicefrac2\delta)}{c\ell}} \normof\mA_F\), we have:
\begin{align*}
	\Pr\left\{\abslr{\hutch_\ell(\mA) - \trace(\mA)} > \sqrt{\frac{\ln(\nicefrac2\delta)}{c\ell}} \normof\mA_F \right\}
	\leq 2 \exp\left(- \min\left\{\log(\nicefrac2\delta), \sqrt{c\ell\log(\nicefrac2\delta)} \frac{\normof{\mA}_F}{\normof\mA_2}\right\}\right)	
\end{align*}
Since \(\frac{\normof\mA_F}{\normof\mA_2} \geq 1\),  if we take $\ell \geq  \ln(\nicefrac2\delta)/c $, we have that the minimum takes value \(\log(\nicefrac2\delta)\), so 
\[
	\Pr\left\{\abslr{\hutch_\ell(\mA) - \trace(\mA)} > \sqrt{\frac{\ln(\nicefrac2\delta)}{c\ell}} \normof\mA_F\right\}
	\leq \delta.
\]
The final result follows from noting that $\ln(\nicefrac2\delta) \leq 2\ln(\nicefrac1\delta)$ for $\delta \leq 1/2$.
\end{proof}

\section{Proof of \theoremref{nonadaptive_lower_bound}}
\label{app:lower_bound}
To prove our non-adaptive lower bound for the real RAM moodel we
first introduce a simple testing problem which we reduce to estimating the trace of a PSD matrix \mA to \((1\pm\epsilon)\) relative error:

\begin{problem}
	\label{prob:trace-style}
	Fix \(d,n\in\bbN\) such that $d \geq n$ and \(n\defeq\frac1\eps\) for $\epsilon \in (0,1]$.
	Let $\mD_1 = \mI_n$ and $\mD_2 = \spmat{\mI_{n-1} & \\ & 0}$.\footnote{Here $\mI_r$ denotes an $r \times r$ identity matrix.} Consider \(\mA = \mG^T\mD\mG\) generated by selecting \(\mG\in\bbR^{n \times d}\) with \iid random Guassian \(\cN(0,1)\) entries and \(\mD=\mD_1\) or \(\mD=\mD_2\) with equal probability.
	Then consider any algorithm which fixes a query matrix \(\mU\in\bbR^{d \times m}\), observes \(\mA\mU\in\bbR^{d \times m}\), and guesses if \(\mD=\mD_1\) or \(\mD=\mD_2\).
\end{problem}
The reduction from \problemref{trace-style} to relative error trace estimation is as follows: 

\begin{lemma}
	\label{lem:lower-bound-trace-style}
	For any $\epsilon \in (0,1]$ and sufficient large $d$, if a randomized algorithm \cA can estimate the trace of any $d \times d$ PSD matrix to relative error  \(1 \pm \frac\eps4\) with success probability \(\geq \frac34\) using $m$ queries, then \cA can be used to solve \problemref{trace-style} with success  probability \(\geq\frac23\) using $m$ queries.
\end{lemma}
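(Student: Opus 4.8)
The plan is to convert a relative-error trace estimator $\cA$ into a tester for \problemref{trace-style}: run $\cA$ on the random matrix $\mA = \mG^T\mD\mG$, feeding it exactly the products $\mA\mU$ that \problemref{trace-style} makes available (with $\mU$ the non-adaptive query matrix $\cA$ would use), and then threshold the estimate $\cA$ returns. The first step is to pin down the two hypotheses. By the cyclic property, $\tr(\mA) = \tr(\mD\mG\mG^T)$, and since $\E[\mG\mG^T] = d\,\mI_n$ we get $\E[\tr(\mA)\mid\mD=\mD_1] = dn$ and $\E[\tr(\mA)\mid\mD=\mD_2] = d(n-1) = (1-\eps)\,dn$. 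Thus the expected traces differ by exactly the multiplicative factor $(1-\eps) = (1-\tfrac1n)$, and I would use the threshold $\tau = (1-\tfrac\eps2)\,dn$, which sits strictly between them.

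The second step is to control the fluctuation of the random quantity $\tr(\mA)$ about its mean. When $\mD=\mD_1$, $\tr(\mA) = \|\mG\|_F^2$ is a chi-squared random variable with $dn$ degrees of freedom, and when $\mD=\mD_2$ it is chi-squared with $d(n-1)$ degrees of freedom; in either case the variance equals twice the mean. Chebyshev's inequality then shows that $\tr(\mA)$ lies within a $(1\pm\tfrac\eps8)$ multiplicative factor of its mean with probability at least $\tfrac{11}{12}$, provided $d$ exceeds a fixed constant times $1/\eps$ --- this is the quantitative content of ``sufficiently large $d$'' (a Bernstein-type tail bound would sharpen the constant but is not needed). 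Conditioned on this event and on $\cA$ being within $(1\pm\tfrac\eps4)$, elementary algebra --- namely $(1-\tfrac\eps4)(1-\tfrac\eps8) > 1-\tfrac\eps2$ and $(1+\tfrac\eps4)(1+\tfrac\eps8)(1-\eps) < 1-\tfrac\eps2$, both of which hold for every $\eps\in(0,1]$ --- shows that $\cA$'s output exceeds $\tau$ when $\mD=\mD_1$ and falls below $\tau$ when $\mD=\mD_2$, so the tester is correct.

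The last step is to add up the failure probabilities. Since $\cA$ succeeds with probability $\ge\tfrac34$ on \emph{every} fixed PSD input, it succeeds with probability $\ge\tfrac34$ marginally over the joint randomness of $\mG$, $\mD$, and $\cA$'s coins; the concentration event fails with probability $\le\tfrac1{12}$; a union bound yields overall success probability $\ge 1-\tfrac14-\tfrac1{12} = \tfrac23$, as required. Because $\cA$ is non-adaptive, the products it requests are exactly the columns of $\mA\mU$ for a matrix $\mU$ drawn from $\cA$'s own distribution over query matrices independently of $\mA$, so this entire procedure is a legal strategy for \problemref{trace-style} using the same number $m$ of queries.

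I expect the concentration step to be the only delicate point: the two hypotheses are separated by a mere $(1-\eps)$ factor, so that gap must survive simultaneously the $(1\pm\eps/4)$ slack of the estimator and the random spread of the $\chi^2$-distributed trace. Making both fit is precisely what forces the ``$d$ sufficiently large'' hypothesis and dictates the split of the $\tfrac13$ error budget as $\tfrac14 + \tfrac1{12}$.
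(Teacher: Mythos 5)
Your proof is correct and follows the paper's argument essentially step for step: both reduce to thresholding the estimator's output at $(1-\tfrac{\eps}{2})\,d/\eps$, concentrate the random quantity $\tr(\mA)$ around its mean with failure probability at most $\tfrac{1}{12}$ for $d = \Omega(1/\eps)$, and union-bound against $\cA$'s $\tfrac14$ failure rate to get overall success $\ge \tfrac23$. The only cosmetic difference is the concentration tool: you observe that $\tr(\mA)$ is exactly chi-squared with $dn$ or $d(n-1)$ degrees of freedom and apply Chebyshev, whereas the paper writes $\tr(\mA) = d\cdot\hutch_d(\mD)$ and invokes the Hanson--Wright-based \lemmaref{hutch-frob}; both give the same conclusion under the same ``sufficiently large $d$'' hypothesis.
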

\begin{proof}
	To solve \problemref{trace-style} we simply apply \cA to the matrix $\mA = \mG^T\mD \mG$ and guess $\mD_1$ if the trace is closer to $\frac{d}{\epsilon}$ and $\mD_2$ if it's closer to $\frac{d}{\epsilon}-d$. To see that this succeeds with probability $2/3$, we first need to understand the trace of $\mA$. To do so, note that $\tr(\mA) = \tr(\mG^T\mD \mG)$ is simply a scaled Hutchinson estimate for $\tr(\mD)$, i.e.
	$\trace(\mG^T\mD\mG) = d \cdot \hutch_d(\mD)$. So, via \lemmaref{hutch-frob}, for large enough $d$ we have that with probability $\ge \frac{11}{12}$ both of the following hold:
	\begin{align*}
	\frac1d\trace(\mG^T\mD_1\mG) \geq \left (1-\frac{\epsilon}{4}\right )\trace(\mD_1)
	\hspace{0.5cm}\text{ and}\hspace{0.5cm}
	\frac1d\trace(\mG^T\mD_2\mG) \leq \left(1+\frac{\epsilon}{4}\right )\trace(\mD_2).
	\end{align*}
	Additionally, with probability \(\frac34\), \cA computes an approximation $Z$ with $(1-\frac\eps4) \tr(\mA) \le Z \le (1+\frac\eps4) \tr(\mA)$.
	By a union bound, all of the above events happen with probability $\ge \frac23$.
	If \(\mD=\mD_1\):
	\[
	Z \geq (1-\tsfrac\eps4) \tr(\mA) \geq (1-\tsfrac\eps4)^2 \cdot d \cdot \trace(\mD_1) > (1-\tsfrac{\epsilon}{2})\cdot \frac{d}{\epsilon}.
	\]
	On the other hand, if \(\mD=\mD_2\),
	\begin{align*}
	Z
	\leq (1+\tsfrac\eps4) \tr(\mA)
	&\leq (1+\tsfrac\eps4)^2 \cdot d \cdot \trace(\mD_2)\\
	&= (1+\tsfrac\eps4)^2 \cdot (1-\eps)\cdot \frac{d}{\epsilon}
	< (1-\tsfrac{\epsilon}{2})\cdot \frac{d}{\epsilon}.
	\end{align*}
	Thus, with probability $2/3$, $Z$ is closer to $\frac{d}{\epsilon}$ when $\mD = \mD_1$ and closer to $\frac{d}{\epsilon}-d$ when $\mD = \mD_2$, so the proposed scheme guesses correctly.
\end{proof}

In the remainder of the section we show that \problemref{trace-style} requires $\Omega(1/\epsilon)$ queries, which combined with \lemmaref{lower-bound-trace-style} proves our main lower bound, \theoremref{nonadaptive_lower_bound}.
Throughout,  we let $\mX \eqdist \mY$ denote  that $\mX$ and $\mY$ are identically distributed.
We first argue that for \problemref{trace-style}, the non-adaptive query matrix $\mU$ might as well be chosen to be the first $m$ standard basis vectors.

\begin{lemma}
	\label{lem:remove-query}
	For \problemref{trace-style}, without loss of generality, we may assume that the query matrix \mU equals \(\mU = \mE_m=\bmat{\ve_1 & \ldots & \ve_m}\), the first \(m\) standard basis vectors.
\end{lemma}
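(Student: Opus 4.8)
The plan is to exploit the rotational invariance of the Gaussian matrix $\mG$: the only information-theoretically relevant object in \problemref{trace-style} is the diagonal matrix $\mD$, and a change of orthonormal basis on the domain of $\mG$ leaves $\mD$ untouched while acting on $\mA$ by conjugation.

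First I would reduce to query matrices with orthonormal columns. Write $\mU = \mQ\mC$ with $\mQ\in\R^{d\times r}$ having orthonormal columns, $r = \rank(\mU)\le m$, and $\mC\in\R^{r\times m}$ of full row rank. Since $\mC$ has a right inverse $\mC^+$, we have $\mA\mU = (\mA\mQ)\mC$ and $\mA\mQ = (\mA\mU)\mC^+$, so observing $\mA\mU$ is information-theoretically equivalent to observing $\mA\mQ$. As the best achievable success probability is monotone in the number of (orthonormal) queries, and as $m$ queries using $\mE_r$ is a restriction of $m$ queries using $\mE_m$, it suffices to prove the lower bound for a query matrix with exactly $m$ orthonormal columns. (We may also assume $m \le d$; otherwise $\mU$ has rank $\le d$ and the problem reduces to the case $m = d$, where $\mA$ is revealed entirely.)

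Next, given any such $\mU$ with orthonormal columns, extend it to an orthogonal matrix $\mV = [\,\mU \mid \mU_\perp\,]\in\R^{d\times d}$, so that $\mU = \mV\mE_m$. Because the rows of $\mG$ are i.i.d.\ $\cN(\vzero,\mI_d)$, the matrix $\mG\mV$ has the same law as $\mG$, and hence $\mV^T\mA\mV = (\mG\mV)^T\mD(\mG\mV)$ has the same law as $\mA = \mG^T\mD\mG$, conditionally on $\mD$. Therefore, conditionally on $\mD$,
\[
	\mA\mU \;=\; \mV\bigl(\mV^T\mA\mV\bigr)\mE_m \;\eqdist\; \mV\,\mA\,\mE_m ,
\]
so the joint law of $(\mD, \mA\mU)$ under query matrix $\mU$ equals the joint law of $\bigl(\mD, \mV\cdot(\mA\mE_m)\bigr)$ under query matrix $\mE_m$. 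Since $\mV$ is a fixed orthogonal matrix determined by the (known) query matrix $\mU$, the map $Y\mapsto \mV^T Y$ is an information-preserving bijection carrying observations made with query matrix $\mE_m$ to observations made with query matrix $\mU$, and it does not change $\mD$. Consequently, \problemref{trace-style} can be solved with $m$ queries using query matrix $\mE_m$ if and only if it can be solved with $m$ queries using $\mU$, with the same success probability; in particular it is no loss of generality to take $\mU = \mE_m$.

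The argument is essentially bookkeeping. The only points requiring care are the first reduction --- handling rank-deficient $\mU$ and arguing that moving to orthonormal columns and then to $\mE_m$ does not make the problem strictly harder (which is exactly what the right invertibility of $\mC$ and monotonicity in the number of queries deliver) --- and keeping track of the fact that the rotation $\mV$ acts on $\mG$ from the right, conjugating $\mA$, without ever needing $\mV$ and $\mD$ to commute.
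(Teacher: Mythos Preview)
Your proof is correct and follows essentially the same approach as the paper: reduce to orthonormal $\mU$, extend to a full orthogonal matrix, and use rotational invariance of $\mG$ to show that observations under $\mU$ and under $\mE_m$ differ only by a fixed invertible linear map. (Minor slip: the bijection carrying $\mE_m$-observations to $\mU$-observations is $Y\mapsto \mV Y$, not $Y\mapsto \mV^T Y$, but since it is a bijection this is immaterial.)
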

\begin{proof}
	First, we may assume without loss of generality that $\mU$ is orthonormal, since if it were not, we could simply reconstruct the queries $\mA \mU$ by querying $\mA$ with an orthonormal basis for the columns of $\mU$. 
	Next, by rotational invariance of the Gaussian distribution, if \(\mG \in \R^{n \times d}\) is an \iid \(\cN(0,1)\) matrix, and \(\mQ \in \R^{d \times d}\) is any orthogonal matrix, then \(\mG\mQ\) is  distributed identically to $\mG$.
	Let  \(\bar\mU \in \R^{d \times d - m}\) be any orthonormal span for the nullspace of \mU, so that \(\mQ\defeq\bmat{\mU & \bar\mU}\) is orthogonal.
	We have that  \(\mQ \mG^T\mD\mG \mE_m \eqdist \mQ  \mQ^T \mG^T\mD\mG \mQ \mE_m = \mG^T\mD\mG \mU\).
	So, using the result $\mG^T\mD\mG \mE_m$ of querying with matrix $\mE_m$, we can just multiply by $\mQ$ on the left to obtain a set of vectors that has the same distribution as if $\mU$ had been used as a query matrix.
\end{proof}

With \lemmaref{remove-query} in place, we are able to reduce \problemref{trace-style} to a simpler testing problem on distinguishing $m$ random vectors drawn from normal distributions with different covariance matrices:

\begin{problem}
	\label{prob:statistical-estimation}
	Let \(n=\frac1\eps\) and let $\vz\in\R^n$ be a uniformly random unit vector. Let $\mN \in \R^{n\times m}$ contain $m$ i.i.d. random Gaussian vectors drawn from an $n$-dimensional Gaussian distribution, $\cN(\mat{0},\mC)$, where the covariance matrix $\mC$ either equals $\mI$ or $\mI - \vz\vz^T$, with equality probability. The goal is to use $\mN$ to distinguish, with probability $> \tfrac{1}{2}$ what the true identity of $\mC$ is. 
\end{problem}

\begin{lemma}
	\label{lem:reduce-to-statistical-estimation}
	Let \cA be an algorithm that solves \problemref{trace-style} with $m$ queries and success probability $p$.
	Then \cA can be used to solve \problemref{statistical-estimation} with $m$ Gaussian samples and the same success probability.
\end{lemma}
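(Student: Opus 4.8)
The plan is to build a distribution-preserving reduction that turns an instance $\mN$ of \problemref{statistical-estimation} into exactly the data \cA sees in \problemref{trace-style}, matching the hypothesis $\mC=\mI$ with $\mD=\mD_1$ and $\mC=\mI-\vz\vz^T$ with $\mD=\mD_2$. By \lemmaref{remove-query} we may assume \cA issues the fixed query $\mU=\mE_m$, so its input is the $d\times m$ matrix $\mA\mE_m=\mG^T\mD\mG\mE_m$. Write $\mG_m\defeq\mG\mE_m\in\R^{n\times m}$ for the first $m$ columns of $\mG$ and $\mG_{-m}\in\R^{n\times(d-m)}$ for the rest. The first thing to record is the identity
\begin{align*}
\mA\mE_m \;=\; \begin{bmatrix}\mG_m^T\mD\mG_m\\ \mG_{-m}^T\mD\mG_m\end{bmatrix} \;=\; \begin{bmatrix}(\mD\mG_m)^T(\mD\mG_m)\\ \mG_{-m}^T(\mD\mG_m)\end{bmatrix},
\end{align*}
which holds because $\mD\in\{\mD_1,\mD_2\}$ is symmetric and idempotent; hence \cA's input is a fixed function of the pair $(\mD\mG_m,\ \mG_{-m})$, in which $\mD\mG_m$ has \iid $\cN(\mat0,\mD)$ columns and $\mG_{-m}$ is an independent \iid $\cN(0,1)$ matrix.

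The reduction is then: given $\mN$, draw a fresh \iid $\cN(0,1)$ matrix $\mH\in\R^{n\times(d-m)}$ independent of $\mN$, form $\mY\defeq\sbmat{\mN^T\mN\\ \mH^T\mN}$, run \cA on $\mY$ as if it were $\mA\mE_m$, and answer ``$\mC=\mI$'' if \cA guesses $\mD_1$ and ``$\mC=\mI-\vz\vz^T$'' if it guesses $\mD_2$. It remains to check that $\mY$ has exactly the conditional law of $\mA\mE_m$ under the paired hypotheses. For $\mC=\mI$ this is immediate: the columns of $\mN$ are \iid $\cN(\mat0,\mI)$, so $(\mN,\mH)\eqdist(\mG_m,\mG_{-m})$ with $\mD=\mD_1$, and the displayed identity gives $\mY\eqdist(\mA\mE_m\mid\mD=\mD_1)$. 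For $\mC=\mI-\vz\vz^T$ I would use that $\mD_2=\mI-\ve_n\ve_n^T$: picking (measurably in $\vz$) an orthogonal $\mR_{\vz}$ with $\mR_{\vz}\ve_n=\vz$ gives $\cN(\mat0,\mI-\vz\vz^T)=\mR_{\vz}\cdot\cN(\mat0,\mD_2)$, so $\mN\eqdist\mR_{\vz}\mW$ where $\mW$ has \iid $\cN(\mat0,\mD_2)$ columns. Then
\begin{align*}
\mY=\begin{bmatrix}\mN^T\mN\\ \mH^T\mN\end{bmatrix}\;\eqdist\;\begin{bmatrix}\mW^T\mW\\ (\mR_{\vz}^T\mH)^T\mW\end{bmatrix}\;\eqdist\;\begin{bmatrix}\mW^T\mW\\ \mH^T\mW\end{bmatrix},
\end{align*}
the last step using $\mR_{\vz}^T\mH\eqdist\mH$ by rotational invariance of the \iid Gaussian $\mH$ (which stays independent of $\mW$); the right-hand side is precisely the law of $\mA\mE_m$ when $\mD=\mD_2$. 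Since both problems put equal prior mass on their two hypotheses and \cA's internal coins are independent of everything, the joint law of (true hypothesis, input to \cA, answer of \cA) is identical in the two settings, so the reduction errs with exactly the same probability as \cA, on the same $m$ samples.

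I expect the only step needing more than bookkeeping to be that last display: \problemref{trace-style} has a \emph{fixed} deficient coordinate ($\ve_n$ inside $\mD_2$), whereas \problemref{statistical-estimation} has a \emph{random} direction $\vz$, and reconciling them requires absorbing the rotation $\mR_{\vz}$ into the auxiliary Gaussian block $\mH$ via rotational invariance --- the same device already used in the proof of \lemmaref{remove-query}. Everything else reduces to the observation that $\mA\mE_m$ depends on $\mG$ only through $\mD\mG_m$ and $\mG_{-m}$, together with \lemmaref{remove-query}.
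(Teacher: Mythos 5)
Your proposal is correct and is essentially the paper's proof. Both construct the same sample $\sbmat{\mN^T\mN\\ \mH^T\mN}$ from an extra fresh Gaussian block, and both handle the mismatch between the random spike direction $\vz$ and the fixed deficient coordinate $\ve_n$ by an orthogonal rotation whose effect on the auxiliary block is absorbed by rotational invariance of i.i.d.\ Gaussians; the paper packages this by taking a single uniformly random orthogonal $\mZ$ (whose last row plays the role of $\vz$) and writing $\mN \eqdist \mZ\mD\mS$, whereas you condition on $\vz$, pick $\mR_{\vz}$ measurably, and then integrate out --- a purely stylistic difference. Your preliminary observation that $\mA\mE_m$ depends on $\mG$ only through $(\mD\mG_m, \mG_{-m})$ via the idempotence of $\mD$ is the same algebraic fact the paper uses implicitly when it cancels $\mD^T\mZ^T\mZ\mD$ down to $\mD$.
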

\begin{proof}
	By \lemmaref{remove-query}, it suffices to show how to use the observed matrix \(\mN\) in \problemref{statistical-estimation} to create a sample from the distribution \(\mG^T\mD\mG \mE_m\) where $\mG \in \R^{n \times d}$ has \iid \(\cN(0,1)\) entries. Specifically, we claim that, if we sample \(\mL\in\bbR^{n \times (d-m)}\) with \iid \(\cN(0,1)\) entries, and compute
	\[
	\mM = \bmat{\mN^T\mN \\ \mL^T\mN},
	\]
	then $\mM$ is identically distributed to \(\mG^T\mD\mG \mE_m\). I.e, if we let $\mG_m\in \R^{n\times m}$ contain the first $m$ columns of $\mG$ and let $\mG_{d-m}$ contain the remaining $d-m$ columns, our goal is the show that $\mM \eqdist \bmat{\mG_m ~ \mG_{d-m}}^T\mD\mG_m$.
	
	To see this is the case, let $\mZ\in \R^{n\times n}$ be a uniformly random orthogonal matrix and let $\mD_1, \mD_2$ be as in \problemref{trace-style}. The first observation is that $\mN\sim \cN(\mat{0},\mC)$ is identically distributed to  \(\mZ\mD\mS\) where $\mS\in \R^{d\times m}$ has standard normal entries $\sim \cN(0,1)$ and $\mD = \mD_1$ or $\mD = \mD_2$ with equal probability. This follows simply from that fact that $\mZ\mD_1 \mD_1\mZ^T = \mI$ and $\mZ\mD_2 \mD_2 \mZ^T = \mI - \vz_n\vz_n^T$, where $\vz_n$ is the last row of $\mZ$, which is a uniformly random unit vector.
	It follows that $\mN^T\mN \eqdist\mS^T\mD^T\mZ^T\mZ\mD\mS = \mS^T\mD\mS\eqdist \mG_m^T\mD\mG_m$.
	Next, observe that \(\mL^T\mZ\) is independent of $\mG$ and has i.i.d. \(\cN(0,1)\) entries since \mZ is orthogonal (and Gaussians are rotationally invariant). So, $\mL^T \mN \eqdist \mL^T \mZ\mD\mS \eqdist \mG_{d-m}^T\mD\mG$ and overall:
	\begin{align*}
	\mM
	&= \bmat{\mN^T\mN \\ \mL^T\mN}
	\eqdist \bmat{\mG_m^T\mD\mG_m \\ \mG_{d-m}^T\mD\mG_m}
	= \bmat{\mG_m ~ \mG_{d-m}}^T\mD\mG_m.
	\qedhere
	\end{align*}
\end{proof}

Finally, we directly prove a lower bound on the number of samples $m$ required to solve \problemref{statistical-estimation}, and thus, via \lemmaref{reduce-to-statistical-estimation}, \problemref{trace-style}. Combined with \lemmaref{lower-bound-trace-style}, this immediately yields our main lower bound on non-adaptive trace estimation, \theoremref{nonadaptive_lower_bound}.
\begin{lemma}
	\label{lem:statistical-lower-bound}
	If \(m < \frac{c}{\eps}\) for a fixed constant $c$, then \problemref{statistical-estimation} cannot be solved with probability \(\ge \frac23\).
\end{lemma}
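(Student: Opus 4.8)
The plan is to run a standard mixture-versus-simple hypothesis testing argument via the $\chi^2$-divergence. Write $P = \cN(\vzero,\mI_n)^{\otimes m}$ for the law of $\mN$ under $\mC=\mI$, and for a unit vector $\vz$ write $P_\vz = \cN(\vzero,\mI_n-\vz\vz^T)^{\otimes m}$; let $Q = \E_\vz[P_\vz]$ be the planted mixture with $\vz$ uniform on the sphere. The two hypotheses in \problemref{statistical-estimation} are exactly $P$ and $Q$ with equal prior, so by the Neyman--Pearson lemma the best success probability of any (randomized) test is $\tfrac12 + \tfrac12\, d_{\mathrm{TV}}(P,Q)$. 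Hence it suffices to show $d_{\mathrm{TV}}(P,Q) < \tfrac13$ whenever $m < cn$ (with $n=1/\eps$): this caps the success probability strictly below $\tfrac23$. I would bound total variation by the $\chi^2$-divergence, $d_{\mathrm{TV}}(P,Q) \le \tfrac12\sqrt{\chi^2(Q\|P)}$, so the goal reduces to $\chi^2(Q\|P) < \tfrac49$.

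The heart of the argument is the second-moment (Ingster--Suslina) identity for a mixture alternative: expanding $1+\chi^2(Q\|P) = \E_{X\sim P}[(dQ/dP)^2]$ and using $dQ/dP = \E_\vz[dP_\vz/dP]$ gives
\[
1 + \chi^2(Q\|P) = \E_{\vz,\vz'}\!\left[\left(\int_{\bbR^n}\frac{p_\vz(x)\,p_{\vz'}(x)}{p(x)}\,dx\right)^{m}\right],
\]
where $\vz,\vz'$ are independent uniform unit vectors and $p_\vz,p$ are single-sample Gaussian densities. The inner overlap is the standard Gaussian computation $\int p_{\Sigma_1}p_{\Sigma_2}/p_{\mI} = \det(\Sigma_1+\Sigma_2-\Sigma_1\Sigma_2)^{-1/2}$; with $\Sigma_1 = \mI-\vz\vz^T$ and $\Sigma_2=\mI-\vz'\vz'^T$ the rank-one terms collapse to $\Sigma_1+\Sigma_2-\Sigma_1\Sigma_2 = \mI - \langle\vz,\vz'\rangle\,\vz\vz'^T$, whose determinant is $1-\langle\vz,\vz'\rangle^2$, so the integral equals $(1-\langle\vz,\vz'\rangle^2)^{-1/2}$. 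Since $\mI-\vz\vz^T$ is singular I would handle this by a limiting argument: replace it by $\mI-(1-\gamma)\vz\vz^T$, run the (now unambiguous) computation to get $\chi^2(Q^{(\gamma)}\|P) = \E_{\vz,\vz'}[(1-(1-\gamma)^2\langle\vz,\vz'\rangle^2)^{-m/2}]-1$, then let $\gamma\downarrow 0$, using weak convergence $Q^{(\gamma)}\to Q$ together with lower semicontinuity of $\chi^2(\cdot\|P)$ and monotone convergence on the right-hand side to conclude $\chi^2(Q\|P) \le \E_{\vz,\vz'}[(1-\langle\vz,\vz'\rangle^2)^{-m/2}]-1$.

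Finally I would evaluate and bound this moment. For independent uniform unit vectors in $\bbR^n$ one has $\langle\vz,\vz'\rangle^2 \eqdist B$ with $B\sim\mathrm{Beta}(\tfrac12,\tfrac{n-1}{2})$, so for $m<n-1$,
\[
\E\!\left[(1-B)^{-m/2}\right] = \frac{\Gamma(\tfrac n2)\,\Gamma(\tfrac{n-1-m}{2})}{\Gamma(\tfrac{n-1}{2})\,\Gamma(\tfrac{n-m}{2})}.
\]
Using Gautschi/Wendel-type bounds for ratios of Gamma functions ($\Gamma(\tfrac n2)/\Gamma(\tfrac{n-1}{2}) \le \sqrt{n/2}$ and $\Gamma(\tfrac{n-1-m}{2})/\Gamma(\tfrac{n-m}{2}) \le \sqrt{2/(n-m-1)}$), the product is at most $\sqrt{n/(n-m-1)}$. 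Taking $c$ a small absolute constant (e.g. $c=\tfrac14$ for all large $n$, the finitely many small values of $n=1/\eps$ being absorbed into $c$) forces $\sqrt{n/(n-m-1)} \le \tfrac{13}{9}$ when $m<cn$, hence $\chi^2(Q\|P)\le\tfrac49$ and $d_{\mathrm{TV}}(P,Q)\le\tfrac13$, which is what is needed.

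The main obstacle is the middle step: correctly evaluating the Gaussian overlap integral in the \emph{degenerate} (rank-deficient) covariance case — which is why a direct density manipulation does not work and one is pushed to the $\gamma\downarrow0$ limiting/semicontinuity device — and then controlling the Beta-function moment $\E[(1-B)^{-m/2}]$ tightly enough to remain below the $\tfrac{13}{9}$ threshold across the entire regime $m<cn$; this is exactly where the $n=1/\eps$ scaling enters and pins down the achievable constant.
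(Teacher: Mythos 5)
Your proposal is correct in its overall structure and takes essentially the same route as the paper: both arguments pass through the $\chi^2$-divergence between the mixture $Q = \E_\vz[P_\vz]$ and the null $P$, reduce it to the cross-overlap quantity $\E_{\vz,\vz'}\bigl[(1-\langle\vz,\vz'\rangle^2)^{-m/2}\bigr] - 1$, evaluate that moment exactly via a Beta distribution, and bound the resulting Gamma-function ratio. The only substantive methodological differences are (i) the paper simply cites this $\chi^2$ identity from \cite{CaiMaWu:2015} (Lemma~7) and the restatement in \cite{PerryWeinBandeira:2018}, whereas you rederive it from scratch via the Ingster--Suslina second-moment identity plus the Gaussian overlap determinant formula, handling the rank-deficient covariance $\mI-\vz\vz^T$ through a $\gamma\downarrow 0$ limiting device; and (ii) you parametrize with $\langle\vz,\vz'\rangle^2\sim\mathrm{Beta}(\tfrac12,\tfrac{n-1}{2})$ and invoke Gautschi/Wendel bounds, whereas the paper uses $\tfrac{\langle\vz,\vz'\rangle+1}{2}\sim\mathrm{Beta}(\tfrac{n-1}{2},\tfrac{n-1}{2})$ and bounds central binomial coefficients. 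The two Gamma-ratio expressions are equivalent via the Legendre duplication formula, and either bounding method suffices; yours is arguably cleaner.

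One genuine (though minor and easily repaired) error: the claimed inequality $\Gamma(\tfrac{n-1-m}{2})/\Gamma(\tfrac{n-m}{2}) \le \sqrt{2/(n-m-1)}$ points the wrong way. Writing $a = \tfrac{n-1-m}{2}$, it asserts $\Gamma(a)/\Gamma(a+\tfrac12)\le a^{-1/2}$, i.e. $\Gamma(a+\tfrac12)/\Gamma(a)\ge\sqrt{a}$; but Wendel's inequality gives $\Gamma(a+\tfrac12)/\Gamma(a)\le\sqrt{a}$, so your claimed bound is in fact a \emph{lower} bound for the ratio (check $a=1$: $\Gamma(1)/\Gamma(\tfrac32)=2/\sqrt\pi\approx 1.13>1$). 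The correct Gautschi-type upper bound is $\Gamma(a)/\Gamma(a+\tfrac12) < \sqrt{1/(a-\tfrac12)}$, i.e. $\Gamma(\tfrac{n-1-m}{2})/\Gamma(\tfrac{n-m}{2}) < \sqrt{2/(n-m-2)}$. This weakens your final bound on the moment from $\sqrt{n/(n-m-1)}$ to $\sqrt{n/(n-m-2)}$, which still gives $\chi^2(Q\|P)\le\tfrac49$ for $m<cn$ once $c$ is taken a hair smaller and the handful of tiny $n$ are absorbed into the constant, so the conclusion of the lemma is unaffected --- but the stated intermediate inequality should be corrected.
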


\begin{proof}
	The proof follows from existing work on lower bounds for learning ``negatively spiked'' covariance matrices \cite{CaiMaWu:2015,PerryWeinBandeira:2018}. Let $\cP$ be the distribution of 
	$\mN$ in \problemref{statistical-estimation}, conditioned on $\mC = \mI$, and let $\cQ$ be the distribution conditioned on $\mC = \mI - \vz \vz^T$. These distributions fall into the spiked covariance model of \cite{PerryWeinBandeira:2018}, specifically the negatively spiked Wishart model (see Defn. 5.1 in \cite{PerryWeinBandeira:2018}) with spike size $\beta = -1$, and spike distribution $\cX$ the uniform distribution over unit vectors in $\R^n$. 
	Let $D_{\chi^2}(\cP\|\cQ)$ denote the $\chi^2$ divergence between $\cP$ and $\cQ$. Specifically,
	\begin{align*}
		D_{\chi^2}(\cQ\|\cP) = \int_{\mX \in \R_{d\times m}} \left(\frac{\cQ(\mX)}{\cP(\mX)}\right)^2\cP(\mX) d\mX - 1.
	\end{align*}
	We have $D_{KL}(\cQ\|\cP) \leq D_{\chi^2}(\cQ\|\cP)$, so to prove that $\cP,\cQ$ cannot be distinguished with good probability, it suffices to prove an upper bound on $D_{\chi^2}(\cQ\|\cP)$. In \cite{CaiMaWu:2015} (Lemma 7) it is proven that, letting $\vv$ and $\vv'$ be independent random unit vectors in $\R^n$,
	\begin{align}
		\label{eq:xisquare_bound}
		D_{\chi^2}(\cQ\|\cP) = \E_{\vv,\vv'}\left[\left(1-\langle \vv,\vv'\rangle^2\right)^{-m/2}\right] - 1.
	\end{align}
	Equation \eqref{eq:xisquare_bound} uses the notation of Prop. 5.11 in \cite{PerryWeinBandeira:2018}, which restates and proves a slightly less general form of the equality from \cite{CaiMaWu:2015}.
	Our goal is to prove that the expectation term in \eqref{eq:xisquare_bound} is $\leq 1 + C$ for some small constant $C$ when $m = \frac{c}{\epsilon} = cn$ for a sufficiently small constant $c$. 
	
	We first note that $\langle \vv,\vv'\rangle$ is identically distributed to $x\in[-1,1]$ where $x$ is the first entry in a random unit vector in $\R^n$. It is well known that 
	$\frac{x+1}{2}$ is distributed according to a beta distribution with parameters $\alpha = \beta = \frac{n-1}{2}$
	\cite{FangKotzNg:1990}. Specifically, this gives that $x$ has density:
	\begin{align*}
		p(x) = \frac{\Gamma(2\alpha)}{2\Gamma(\alpha)^2}\cdot\left(\frac{1-x^2}{4}\right)^{\alpha-1}.
	\end{align*} 
	Plugging this density back in to the expectation term in \eqref{eq:xisquare_bound} we obtain:	
	\begin{align*}
		\E_{\vv,\vv'}\left[\left(1-\langle \vv,\vv'\rangle^2\right)^{-m/2}\right] 
		&= \int_{-1}^1 \frac{\Gamma(2\alpha)}{2\Gamma(\alpha)^2}\cdot\left(\frac{1-x^2}{4}\right)^{\alpha-1} (1-x^2)^{-m/2} dx \\
		&= \int_{-1}^1 \frac{\Gamma(2\alpha)}{2\Gamma(\alpha)^2}\cdot\left(\frac{1}{4}\right)^{\alpha-1} (1-x^2)^{\alpha- 1- m/2} dx
	\end{align*}
	Assume without loss of generality that $n$ is an odd integer, and thus $\alpha = \frac{n-1}{2}$ is an integer.
	Let $m/2 = c\alpha$ for some constant $c\ll 1$ such that $c\alpha$ is an integer and thus $(1-c)\alpha$ is an integer. Then:
	
	\begin{align}
		\label{eq:exp_bound}
		\E_{\vv,\vv'}\left[\left(1-\langle \vv,\vv'\rangle^2\right)^{-m/2}\right]
		&= \frac{\frac{\Gamma(2\alpha)}{\Gamma(\alpha)^2}}{\frac{\Gamma(2(1-c)\alpha)}{\Gamma((1-c)\alpha)^2}}\cdot \left(\frac{1}{4}\right)^{c\alpha}\cdot \int_{-1}^1 \frac{\Gamma(2(1-c)\alpha)}{2\Gamma((1-c)\alpha)^2}\left(\frac{1}{4}\right)^{(1-c)\alpha-1}(1-x^2)^{(1-c)\alpha- 1} dx\\
		&= \frac{\frac{\Gamma(2\alpha)}{\Gamma(\alpha)^2}}{\frac{\Gamma(2(1-c)\alpha)}{\Gamma((1-c)\alpha)^2}}\cdot \left(\frac{1}{4}\right)^{c\alpha}, \nonumber
	\end{align}
	where the equality follows because the term being integrated is the density of $x$ where $\frac{x+1}{2}$ is distributed according to a beta distribution with parameters $(1-c)\alpha, (1-c)\alpha$.
	Since we have chosen parameters such that $\alpha$ is a positive integer, we have:
	\begin{align*}
		\frac{\Gamma(2\alpha)}{\Gamma(\alpha)^2} = \frac{(2\alpha -1)!}{(\alpha-1)!(\alpha-1)!} = \frac{\alpha}{2} \cdot \binom{2\alpha}{\alpha}.
	\end{align*} 
	Similarly, $\frac{\Gamma(2(1-c)\alpha)}{\Gamma((1-c)\alpha)^2} = \frac{(1-c)\alpha}{2} \cdot \binom{2(1-c)\alpha}{(1-c)\alpha}$. Each of the binomial coefficients in these expressions is a central binomial coefficient (i.e., proportional to a Catalan number), and we can use well known methods like Stirling's approximation to bound them. In particular, we employ a bound given in Lemma 7 of \cite{macwilliams1977theory}, which gives
$
		\frac{1}{2}\frac{4^z}{\sqrt{z}} \leq \binom{2z}{z}\leq \frac{1}{\sqrt{\pi}}\frac{4^z}{\sqrt{z}}. 
$ for any integer $z$.
	Accordingly, we have
	\begin{align*}
		\frac{\frac{\Gamma(2\alpha)}{\Gamma(\alpha)^2}}{\frac{\Gamma(2(1-c)\alpha)}{\Gamma((1-c)\alpha)^2}} &= \frac{1}{1-c} \cdot \frac{\binom{2\alpha}{\alpha}}{\binom{2(1-c)\alpha}{(1-c)\alpha}} \\
		&\leq \frac{1}{1-c} \cdot \frac{1/\sqrt{\pi}}{1/2}\frac{4^\alpha}{4^{(1-c)\alpha}} \cdot \sqrt{\frac{(1-c)\alpha}{\alpha}} \\
		&= \frac{2}{\sqrt{\pi (1-c)}} \cdot 4^{c\alpha}.
	\end{align*}
	Plugging into \eqref{eq:exp_bound} and requiring $c \le .1$ we have:
	\begin{align*}
		\E_{\vv,\vv'}\left[\left(1-\langle \vv,\vv'\rangle^2\right)^{-m/2}\right] \leq \frac{2}{\sqrt{\pi \cdot .9}} < \frac{6}{5}.
	\end{align*}
	It follows that $D_{KL}(\cQ\|\cP) \leq D_{\chi^2}(\cQ\|\cP) \leq \frac{6}{5}-1 = \frac{1}{5}$,
	and thus by Pinsker's inequality that 
	\begin{align*}
		D_{TV}(\cQ,\cP) \leq \frac{1}{\sqrt{10}} < \frac{1}{3}.
	\end{align*}
	Thus, no algorithm can solve \problemref{statistical-estimation} with probability $\ge \frac{1}{2} + \frac{1/3}{2} = \frac{2}{3}$, completing the lemma.
\end{proof}

\end{document}